\documentclass[
  aps,
  prb,
  reprint,
  superscriptaddress,
  nofootinbib
]{revtex4-2}

\usepackage{array,tabularx,booktabs}
\usepackage{needspace}
\usepackage{graphicx}
\usepackage{amsmath,amsfonts,amsthm,amssymb}
\usepackage{hyperref}
\hypersetup{colorlinks=true,linkcolor=blue,citecolor=blue,urlcolor=blue}

\newcommand{\mA}{\mathcal{A}}
\newcommand{\mL}{\mathcal{L}}
\newcommand{\mT}{\mathcal{T}}
\newcommand{\snorm}[1]{\left\lVert #1\right\rVert}
\newcommand{\abs}[1]{\left\lvert #1\right\rvert}
\newcommand{\diam}{\operatorname{diam}}
\newcommand{\dist}{\operatorname{dist}}
\newcommand{\supp}{\operatorname{supp}}
\newcommand{\id}{\mathrm{id}}
\newcommand{\dd}{\mathrm{d}}
\newcommand{\ii}{\mathrm{i}}
\newcommand{\locerr}{\mathcal{E}}

\newcommand{\tailmass}{\mT_{\Phi}}

\newcommand{\Tr}{\operatorname{Tr}}

\newcommand{\ev}[1]{\langle#1\rangle}

\newtheorem{definition}{Definition}[section]
\newtheorem{theorem}[definition]{Theorem}
\newtheorem{proposition}[definition]{Proposition}
\newtheorem{lemma}[definition]{Lemma}
\newtheorem{corollary}[definition]{Corollary}
\newtheorem{assumption}[definition]{Assumption}

\theoremstyle{remark}

\begin{document}

\title{Local observable errors from truncating interaction tails in gapped quantum lattice systems}
\author{Kangle Li}
\email{lklmr@nus.edu.sg}
\affiliation{Department of Physics, National University of Singapore, 117551, Singapore}

\begin{abstract}
    We bound the error in ground-state expectations of local observables caused by truncating the spatial tails of a gapped quantum lattice Hamiltonian. We show that the error is controlled by the interaction strength discarded near each site, rather than by the extensive norm of the omitted Hamiltonian.
    If the gap remains open along a path that removes the interaction tail, the resulting bounds are uniform in system size and extend, under suitable assumptions, to thermodynamic-limit ground states. The convergence rate reflects the decay of the interaction tail: it is algebraic for power-law interactions, superpolynomial for superpolynomial interactions, and exponential at any strictly smaller rate for exponentially decaying interactions. 
    For superpolynomial interactions, we also derive a direct infinite-volume estimate using automorphic equivalence. The results extend to isolated low-energy sectors and parity-even fermionic systems.
    For two-body interactions decaying as $r^{-p}$ in 
    $d$ dimensions, we prove an error bound $O(R^{-(p-d)})$ for $p>2d$, where $R$ is the truncation 
    range. We construct a gapped non-translation-invariant example that saturates this scaling, showing that the bound is optimal for
    the general class considered. Our results quantify when finite-range truncations faithfully reproduce 
    the local physics of gapped long-range systems.
\end{abstract}

\maketitle
\tableofcontents

\section{Introduction}
\label{sec:introduction}

Interactions with spatial tails occur in many quantum many-body systems \cite{Defenu_2023}.  Examples include dipolar quantum gases, Rydberg-atom arrays with dipole or van der Waals interactions, and trapped-ion simulators whose Coulomb-mediated spin couplings are long ranged \cite{Lahaye_2009,Saffman_2010,Monroe_2021}.  Depending on the microscopic mechanism, the interaction can decay algebraically, exponentially, or faster than any inverse power while remaining formally infinite range.  Such tails can modify propagation and correlations \cite{Hastings-Koma_2006,Matsuta_2016,Kuwahara_2021,Wang_Hazzard_2023}, as well as entanglement and phase constraints \cite{Kuwahara_2020,Liu_2025}, and have motivated extensive work on locality in long-range systems.

Analytical and numerical treatments commonly replace an infinite-range interaction by a finite-range or otherwise compressed approximation.  Matrix-product-operator constructions, for example, give explicit efficient representations and approximations of long-range Hamiltonians for ground-state and time-evolution calculations \cite{Crosswhite_2008,Pirvu_2010, Froewis_2010}.  A hard range cutoff likewise reduces the number of interaction terms and can simplify analytical estimates, numerical calculations, and implementations with restricted connectivity. 

This raises a natural question: when, and in what sense, does a range-truncated model accurately approximate the original long-range interaction? Truncation quality can be assessed at several levels, including changes in the spectrum, global ground-state fidelity, and errors in expectation values of local observables. A natural first choice is global fidelity. However, fidelity is generally too stringent for a volume-independent truncation estimate: small local differences can accumulate throughout an extensive system, so the ground-state overlap between the full and truncated interactions may vanish as the volume grows even when their reduced states on every fixed finite region remain close. 

\begin{figure}[t]
  \centering
  \includegraphics[width=\columnwidth]{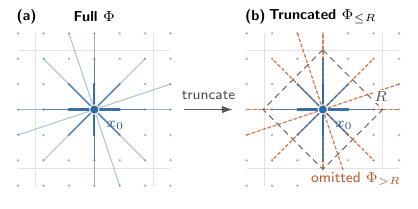}
  \caption{
    Range truncation illustrated schematically for a two-body interaction
    in two dimensions.
    (a) From an arbitrary reference site $x_0$, the full interaction
    contains couplings over all distances; bond opacity represents their
    decreasing norms.
    (b) The range-$R$ interaction $\Phi_{\le R}$ retains bonds with
    $\dist(x_0,y)\le R$, while the dashed orange bonds form the omitted
    tail $\Phi_{>R}$.
    The dashed diamond is the graph-metric cutoff boundary.
  }
  \label{fig:interaction_truncation}
\end{figure}

Here we focus on the last criterion: errors in ground-state expectations of local observables. 
Intuitively, sufficiently weak interaction tails should have only a limited effect on observables supported in a fixed finite region \cite{Hastings-Koma_2006,hastings2010localityquantumsystems, Gong_2014, Foss_Feig_2015, Matsuta_2016, Tran_2019, Tran_2020_hierarchy, Else_2020}.
The concrete question is: for an observable supported on a fixed finite region, how rapidly does the difference between its ground-state expectation values of the full and truncated interactions decrease as truncation radius grows?

In this work, we systematically study the error bounds on ground-state expectations of arbitrary local observables induced by truncating gapped interactions at range $R$, for a broad class of local interactions with tails.
For a unique ground state, or for a specified pair of matched states in isolated low-energy sectors, our bounds control the trace-norm distance between the reduced density matrices on any fixed finite region. More generally, for finite-dimensional low-energy sectors separated by a gap, they control the difference between the matrices of a local observable compressed to the two sectors after spectral-flow identification. The same estimate extends to ordered products of finitely many local operators: the corresponding moment bound depends on the total size of their supports, but not on the separations between the insertion regions. This local notion of accuracy remains physically meaningful even when the overlap between the global many-body states becomes small as the system size increases.

Our results are formulated separately in finite and infinite volume, with the latter treated using operator-algebraic methods and Gelfand–Naimark–Segal (GNS) representations \cite{bratteli2012operator,Naaijkens_2017,Nachtergaele_2019}. Under the corresponding uniform-gap assumptions, the finite-volume truncation error in a local ground-state expectation is controlled by the discarded interaction strength near each site. Consequently, power-law interaction tails yield algebraically decaying errors as the truncation radius increases, with the rate inherited from the local tail, whereas superpolynomial and exponential tails yield correspondingly rapid decay.

For infinite-volume setting, we also obtain results for superpolynomial interactions, within the spectral-flow framework of Becker, Teufel, and Wesle \cite{becker2025automorphicequivalencegappedphases}.  This result assumes
a prescribed differentiable path of locally unique GNS-gapped states and
gives errors smaller than every inverse power of truncation radius $R$. In particular, if the interaction has exponentially small tail, then the error bound will also be exponentially small in truncation radius.

These conclusions use two logically independent gap assumptions.  A
gapped direct path restores the complete discarded tail and compares the
range-$R$ and full systems in one step.  A gapped shell path instead
compares two finite cutoffs.  Concatenating compatible shells does not
improve the truncation rate, but when the successive errors are summable
it constructs a convergent infinite-volume ground-state branch.  
This shell flow limit construction assumes ground states along the path.

Our finite-volume and shell estimates use quasi-adiabatic continuation,
or spectral flow
\cite{hastings2010quasiadiabaticcontinuationdisorderedsystems,
Bachmann_2011,Nachtergaele_2019}, together with the improved long-range
Lieb--Robinson bound of Teufel and Wessel
\cite{teufel2025liebrobinsonboundsautomorphicequivalence}.  Unlike the
usual ``local perturbations perturb locally'' (LPPL) setting
\cite{De_Roeck_2015,teufel2025liebrobinsonboundsautomorphicequivalence},
a truncation tail is not localized near a single perturbation region:
discarded terms occur throughout the lattice.  We therefore control the
response to each discarded term separately and sum the contributions
through the discarded interaction mass near each site.  This yields a
general range-truncation bound for many-body interactions and extends
naturally to isolated low-energy sectors and thermodynamic-limit state
constructions.  Closely related static bounds were obtained by Wang and
Hazzard~\cite{Wang_Hazzard_2023}, who applied a power-law LPPL estimate
term by term to finite-size boundary errors; our result instead treats a
global interaction-range cutoff and organizes the error by the local
tail mass. Related approximation results have also been established for real-time
dynamics of power-law systems.  Tran et al. developed spatial
decompositions of long-range evolution and corresponding
digital-simulation bounds~\cite{Tran_2019,Tran_2020_hierarchy}.  These
results concern finite-time dynamics and spatial-volume cutoffs, whereas
we compare the ground states or isolated low-energy sectors of the full
and globally range-truncated Hamiltonians along a gapped path. 

The paper is organized as follows. Before presenting the technical framework, we summarize the main conclusions in Section~\ref{sec:summary_results}. Next, Section~\ref{sec:setup} introduces the setup and states the main theorems.
Section~\ref{sec:weighted_response} introduces spectral flow and bounds the
response to lattice-wide perturbations.  Section~\ref{sec:finite_volume}
treats finite-volume direct and shell truncations.
Section~\ref{sec:infinite_volume} develops the infinite-volume direct and
shell constructions.  Section~\ref{sec:applications} gives a practical
gap-stability criterion, two-body corollaries, the fermionic extension,
and exactly solvable Gaussian benchmarks with numerical comparisons.
Section~\ref{sec:discussions}
discusses the gap stability problem and raises some open questions.  The appendices
collect more details of the proof and examples.

\section{Summary of main results}
\label{sec:summary_results}

We summarize the main conclusions before introducing the mathematical
framework. An interaction $\Phi$ associates each finite subset $Z$ in the lattice an operator $\Phi(Z)$.  The range-$R$ interaction $\Phi_{\le R}$ discards every term whose
diameter of support exceeds $R$ (see Fig.~\ref{fig:interaction_truncation} for a schematic description).  Its \emph{local tail mass} is
\begin{equation*}
  \tailmass(R)
  :=
  \sup_{x\in\Gamma}
  \sum_{\substack{Z\ni x\\ \diam Z>R}}
  |Z|\snorm{\Phi(Z)}.
\end{equation*}
Here $|Z|$ is its size.  Thus $\tailmass(R)$
is the discarded interaction strength per site, rather than the extensive norm of the whole discarded Hamiltonian. 
The ground state of $\Phi$ is denoted by $\omega$, and its expectation value of an operator $A$ is denoted by $\omega(A) = \ev{A}_\omega$.
Section~\ref{sec:setup} gives the precise definitions
and the summable long-range locality condition used by the first two results.

\medskip
\noindent\textbf{Theorem~\ref{thm:main_finite_response} (informal, finite-volume).}
Consider a specified family for which the adiabatic path from $\Phi_{\le R}$ to
$\Phi$ (called direct path, see Eq.\ \eqref{eq:direct_path}), or the adiabatic path from $\Phi_{\le R}$ to $\Phi_{\le R'}$ (called shell path, see Eq.\ \eqref{eq:shell_path}), has a common
separating gap.  When the Hamiltonian has a unique gapped ground state throughout the path, then every local observable $A_X$ satisfies
\begin{equation*}
  \abs{\ev{A_X}_{{\Lambda}}-\ev{A_X}_{\Lambda,R}}
  \le C|X|\snorm{A_X}\tailmass(R).
\end{equation*}
The shell comparison obeys the same bound, and the constant is independent of
the volume and cutoff scales.  For an isolated low-energy sector, spectral
flow identifies the endpoint sectors and the same estimate controls the matrix
of $A_X$ within them, independently of sector rank and internal band width.

\begin{table*}[t]
\caption{Logical guide to the truncation routes.  The listed hypotheses are required by the corresponding results;
decay-dependent rates appear in Table~\ref{tab:summary}.}
\label{tab:result_routes}
\footnotesize
\begin{ruledtabular}
\begin{tabular}{lll}
Goal & Main hypotheses & Output \\
\hline
Full versus finite cutoff
& Direct path; common sector gap
& Tail-mass bound \\
Cutoffs $R<R'$
& Shell path; common sector gap
& One-shell tail-mass bound \\
Thermodynamic-limit shell flow
& Compatible gapped shells; unique ground states; summable tails
& Selected branch $\omega_{R_k}\to\omega_\infty$ \\
Infinite-volume direct-path
& Superpolynomial decay; regular locally unique GNS-gapped path
& Superpolynomial bound \\
\end{tabular}
\end{ruledtabular}
\end{table*}

\Needspace{7\baselineskip}
\medskip
\noindent\textbf{Theorem~\ref{thm:main_infinite_shell} (informal, thermodynamic-limit).}
Suppose compatible finite-volume shell paths track the unique ground state and
have a common gap, uniformly in volume and shell scale.  If
$R_k\uparrow\infty$ and $\sum_k\tailmass(R_k)<\infty$, choose $\omega_{R_0}$ as
a weak-$*$ subsequential limit of the finite-volume ground states at the base
cutoff.  Starting from this choice, the shell flows produce infinite-volume
ground states $\omega_{R_k}$ converging locally to a ground state
$\omega_\infty$ of the full interaction, with
\begin{equation*}
  \abs{\ev{A_X}_{\infty}-\ev{A_X}_{R_j} }
  \le C|X|\snorm{A_X}\sum_{k\ge j}\tailmass(R_k).
\end{equation*}
The selected branch may depend on the chosen base limit.  If the full ground
state is unique, the selected limit is that state.

\medskip
\noindent\textbf{Theorem~\ref{thm:main_infinite_direct} (informal, infinite-volume).}
For superpolynomially decaying interactions, suppose that, for every
sufficiently large $R$, a prescribed cutoff ground state $\omega_R$ is
connected to the same full ground state $\omega$ by a differentiable path of
locally unique GNS-gapped states.  If the gap is uniform and the path has the
regularity specified below, then, for every $m>0$,
\begin{equation*}
  |\ev{A_X}_\omega - \ev{A_X}_R|
  \le C_m|X|\snorm{A_X}(1+R)^{-m}.
\end{equation*}
Here $C_m$ is independent of $R$ but may not be uniformly bounded in $m$.  If the interaction tail is exponentially small, then the right hand side of this inequality is replaced by an exponential function $e^{-\kappa R}$ with $\kappa$ independent of $R$.

We emphasize that, although
Theorems~\ref{thm:main_infinite_shell} and
\ref{thm:main_infinite_direct} both compare local-observable
expectations for truncated and full interactions, they apply in
different settings. First, they rely on different adiabatic-path
assumptions. Second, Theorem~\ref{thm:main_infinite_shell} starts from
finite-volume ground states of the truncated interactions and, through
the shell construction, selects an infinite-volume ground-state branch
of the full interaction. By contrast,
Theorem~\ref{thm:main_infinite_direct} directly compares two prescribed
infinite-volume GNS states; in this setting, both the truncated and full
interactions are assumed to have locally unique gapped ground states.

The preceding results apply to general interactions. To make the resulting
rates explicit, we specialize to two-body interactions. If the interaction
between sites \(x\) and \(y\) is bounded by a constant times
\((1+\dist(x,y))^{-p}\) with \(p>d\), lattice-shell counting gives
\(\tailmass(R)=O(R^{-(p-d)})\). If $p>2d$ \footnote{This condition is required by the intermediate weighted-response estimate, Proposition~\ref{prop:weighted_response}.}, under the corresponding adiabatic-path
hypothesis, the error bound has the same rate $O(R^{-(p-d)})$.
The condition $p>2d$ is sufficient for the present method, but we do not claim that it is necessary in all cases. 

All three main results extend to parity-even fermionic interactions and
observables, with the same rates and parameter dependence.
The theorems also extends to ordered products of finitely many local
observables. For such products, the absolute-error bound depends on the total
size of their supports, but not on the separations between them; it does not,
however, guarantee small relative error when the correlator itself is very small.

In the next section, we introduce the basic definitions and state the main theorems rigorously. The logical routes and the corresponding rate estimates are summarized in Table~\ref{tab:result_routes} and~\ref{tab:summary}, respectively.

\begin{table*}[t]
\caption{Summary of the truncation estimates of local observable errors ($\sim\mathcal{T}_\Phi(R)$) under the
corresponding path and gap hypotheses.  The finite-volume column gives
the observable error for an isolated low-energy sector and,
in the rank-one ground-state case, the local state distance $d_X$.  The
symbol $\omega_\infty$ denotes the state selected by the
thermodynamic limit shell construction. The symbol $O(R^{-\infty})$ denotes decay faster than every inverse power of $R$.}
\label{tab:summary}
\begin{ruledtabular}
\begin{tabular}{lccc}
Decay & Finite volume & Infinite-volume direct &  Thermodynamic-limit shell \\
\hline
Power law, $\snorm{\Phi}_{F_\eta}<\infty$, $\eta>d$
& $O(R^{-\eta})$
& not treated
& $O(R^{-\eta})$ to $\omega_\infty$ \\
Superpolynomial
& $O(R^{-\infty})$
& $O(R^{-\infty})$
& $O(R^{-\infty})$ to $\omega_\infty$ \\
Exponential weighted norm, $\snorm{\Phi}_{f_\mu}<\infty$
& $O(e^{-\mu R})$
& $O(e^{-\kappa R})$, $\kappa<\mu$
& $O(e^{-\mu R})$ to $\omega_\infty$ \\
\end{tabular}
\end{ruledtabular}
\end{table*}

\section{Setup and main theorems}
\label{sec:setup}
We first introduce the lattice and interaction setup, including the local
tail mass, in Section~\ref{subsec:setup}; define the truncation paths and
assumptions in Section~\ref{subsec:assumptions}; and state the main finite-
and infinite-volume results in Section~\ref{subsec:main_results}.

\subsection{Lattice systems, interactions, and volume settings}\label{subsec:setup}

For definiteness, our infinite-volume setting is
$\Gamma=\mathbb{Z}^{d}$ with the graph metric \footnote{For the standard lattice, the graph metric is $\dist(x,y)=\|x-y\|_1$. Replacing it by an equivalent lattice metric, such as the Euclidean metric, changes only dimension-dependent constants and the normalization of exponential decay rates; the polynomial exponents and conclusions remain unchanged.}.  The finite-volume
estimates are first stated for finite subsets of this lattice, but their
proofs use only uniform metric-growth and locality estimates and also
apply to other finite metric lattices.  For finite sets
$X,Y\Subset\Gamma$, we write
\begin{align}
  \dist(X,Y)&:=\min_{x\in X,\,y\in Y}\dist(x,y),
  \\
  \diam X&:=\max_{x,y\in X}\dist(x,y).
\end{align}
Each site $x\in\Gamma$ carries a finite-dimensional Hilbert space $\mathcal{H}_{x}$.  For a finite set $X\Subset\Gamma$, let
\begin{equation}
  \mA_X=\bigotimes_{x\in X}\mathcal{B}(\mathcal{H}_x),
  \qquad
  \mA=\overline{\bigcup_{X\Subset\Gamma}\mA_X}^{\snorm{\cdot}}
\end{equation}
be the local and quasi-local observable algebras.
This tensor-product formulation is used for the proofs in the main
sections.  The fermionic version replaces these algebras by the local
CAR algebras and restricts interactions and physical observables to
their even parts; the exact correspondence of the proof ingredients is
given in Section~\ref{sec:fermionic_extension}.

An interaction is a collection of self-adjoint operators
$\Phi(Z)\in\mA_Z$, indexed by finite $Z\Subset\Gamma$.  For a
nonincreasing function $F\colon[0,\infty)\to(0,\infty)$, define the
diameter norm
\begin{equation}
  \snorm{\Phi}_F
  :=
  \sup_{x\in\Gamma}
  \sum_{Z\ni x}|Z|\frac{\snorm{\Phi(Z)}}{F(\diam Z)}.
\end{equation}
If $\snorm{\Phi}_F<\infty$, we say that $\Phi$ is $F$-local, or
diameter $F$-summable \footnote{Lieb--Robinson bounds are commonly formulated using a
pair-distance norm.  The diameter norm used here controls the relevant
pair-distance norm; see Appendix~\ref{app:lr_bound}.}.
For the polynomial profile with exponent $\eta>0$,
\begin{equation}
  F_{\eta}(r)=(1+r)^{-\eta},
\end{equation}
the diameter norm is
\begin{equation}
  \snorm{\Phi}_{F_{\eta}}
  :=
  \sup_{x\in\Gamma}
  \sum_{Z\ni x}
  |Z|(1+\diam Z)^{\eta}\snorm{\Phi(Z)}.
  \label{eq:diameter_norm}
\end{equation}
Throughout this work we consider interactions with $\snorm{\Phi}_{F_{\eta}}<\infty$ for some fixed value $\eta>d$; this is
the principal locality hypothesis in the finite-volume arguments.

Next define the truncations.  Let $R>1$ be a length scale.  The range-$R$
truncation, discarded tail, and shell interaction are
\begin{align}
  \Phi_{\le R}(Z)&=\mathbf{1}_{\{\diam Z\le R\}}\Phi(Z),
  \\
  \Phi_{>R}(Z)&=\mathbf{1}_{\{\diam Z>R\}}\Phi(Z),
  \\
  \Phi_{(R,R']}(Z)&=\mathbf{1}_{\{R<\diam Z\le R'\}}\Phi(Z).
\end{align}
The quantity that directly enters our estimates is the
\emph{local tail mass}:
\begin{equation}
  \tailmass(R)
  :=
  \sup_{x\in\Gamma}
  \sum_{\substack{Z\ni x\\ \diam Z>R}}
  |Z|\snorm{\Phi(Z)}.
  \label{eq:tail_mass}
\end{equation}
If $\snorm{\Phi}_{F_{\eta}}<\infty$, then
\begin{equation}
  \tailmass(R)
  \le
  (1+R)^{-\eta}\snorm{\Phi}_{F_{\eta}}.
  \label{eq:tail_mass_power}
\end{equation}
Indeed, on the summation domain $\diam Z>R$, inserting
$(1+\diam Z)^{-\eta}(1+\diam Z)^\eta=1$ and using
$(1+\diam Z)^{-\eta}\le(1+R)^{-\eta}$ proves this estimate term by term.
We call $\Phi$ \emph{superpolynomially decaying} if
$\snorm{\Phi}_{F_\eta}<\infty$ for \emph{every} $\eta\ge0$.  For such an
interaction Eq.~\eqref{eq:tail_mass_power} holds for every $\eta>0$.
For $f_\mu(r):=e^{-\mu r}$, we say that $\Phi$ is
\emph{exponentially local} and satisfies \emph{exponential diameter summability at rate $\mu$}  if
\begin{equation}
  \snorm{\Phi}_{f_{\mu}}
  :=
  \sup_{x\in\Gamma}
  \sum_{Z\ni x}|Z|e^{\mu\diam Z}\snorm{\Phi(Z)}<\infty.
  \label{eq:exponential_norm}
\end{equation}
With this notation, we have
\begin{equation}
  \tailmass(R)\le e^{-\mu R}\snorm{\Phi}_{f_{\mu}}.
  \label{eq:tail_mass_exp}
\end{equation}
Inserting $e^{-\mu\diam Z}e^{\mu\diam Z}=1$ and using
$e^{-\mu\diam Z}\le e^{-\mu R}$ for $\diam Z>R$ gives the stated bound.
Moreover, for every $\eta\ge0$,
\begin{equation}
  \snorm{\Phi}_{F_\eta}
  \le
  c_{\eta,\mu}\snorm{\Phi}_{f_\mu},
  \qquad
  c_{\eta,\mu}:=
  \sup_{r\ge0}(1+r)^\eta e^{-\mu r}<\infty.
  \label{eq:exponential_controls_polynomial}
\end{equation}
This follows term by term from
$(1+r)^\eta\le c_{\eta,\mu}e^{\mu r}$.
Thus exponential diameter summability supplies every fixed polynomial
background norm required below.

Exponential diameter summability at rate $\mu$ must not be inferred merely
from pointwise exponential decay of individual terms at the same rate; see
Section~\ref{sec:two_body_exponential} for the two-body case.

Unless stated otherwise, finite subsets $\Lambda\Subset\Gamma$ carry
open boundary conditions, and we define
\begin{equation}
  H_{\Lambda,R}=\sum_{Z\Subset\Lambda}\Phi_{\le R}(Z),
  \qquad
  H_{\Lambda}=\sum_{Z\Subset\Lambda}\Phi(Z),
\end{equation}
and, whenever their ground states are unique, denote those states by
$\omega_{\Lambda,R}$ and $\omega_{\Lambda}$.  Unique finite-volume
ground states are assumed only in statements that explicitly use this
notation.

\medskip
\noindent\emph{Other finite-volume geometries and boundary conditions.}
Our finite-volume results extend to other lattice geometries and boundary conditions under the uniformity assumptions stated below. For other lattice geometries, let
$\{(\Gamma_L,\dist_L)\}_{L\in\mathbb N}$ be a family of finite metric
lattices.  Here $L$ labels the system size, $\Gamma_L$ is the finite set
of lattice sites, and $\dist_L$ is the intrinsic distance on
$\Gamma_L$ (the length of a shortest lattice path).  We assume uniform
polynomial volume growth:
\begin{equation}
  \sup_L\sup_{x\in\Gamma_L}
  \abs{B_L(x,r)}
  \le C_{\mathrm{vol}}(1+r)^d,
  \qquad r\ge0,
  \label{eq:uniform_finite_volume_growth}
\end{equation}
where $B_L(x,r):=\{y\in\Gamma_L:\dist_L(x,y)\le r\}$ is the metric ball
of radius $r$ about $x$, $d$ is the common growth dimension, and
$C_{\mathrm{vol}}<\infty$ is independent of $L$, $x$, and $r$.
If the interaction norms, local tail masses defined using $\dist_L$,
and the Lieb--Robinson constants of Appendix~\ref{app:lr_bound} are
uniform in $L$, all finite-volume proofs hold with $\Gamma_L$ and its
tail mass.  For the discrete torus
$\Gamma_L=(\mathbb Z/L\mathbb Z)^d$, the label $L$ is its circumference
in lattice units and $\dist_L$ is the shortest-path graph distance,
including wrap-around paths.  Diameters and truncation ranges are
always measured using
$\dist_L$, so a bond crossing a periodic seam remains short.

A boundary condition may also be represented by an additional
interaction $\Psi^{\mathrm{bc}}_\Lambda$, where the superscript denotes
the chosen boundary condition.  If this uniformly local interaction is
fixed along the interpolation path, it affects only the background
constants and gap assumption, not the path derivative or tail mass.  If
it is varied or truncated, its per-site strength must be added to the response
bound.  The relevant sector must remain uniformly isolated; its rank
may depend on the boundary condition, and the projector formulation
below covers an isolated degenerate sector on a closed geometry.

For the thermodynamic limit statements we continue to use the
open-boundary cubic exhaustion
$\Lambda_n=[-L_n,L_n]^d\cap\mathbb Z^d$, where
$L_n\uparrow\infty$.

\medskip
In the infinite-volume setting, the Hamiltonian as a global sum of
interaction terms does not converge in $\mA$ in the norm topology.
Instead, the interaction defines a derivation $\mL_\Phi$ on local
observables
\begin{equation}
  \mL_{\Phi}(A)
  =
  \sum_{Z\cap\supp A\ne\varnothing}[\Phi(Z),A].
  \label{eq:liouvillian}
\end{equation}
A state $\omega$ is a ground state of $\Phi$ if
\begin{equation}
  \omega\bigl(A^*\mL_{\Phi}(A)\bigr)\ge0
  \label{eq:ground_state_condition}
\end{equation}
for every local $A$.
It is a \emph{locally unique gapped ground state} with gap $\gamma>0$
if
\begin{equation}
  \omega\bigl(A^*\mL_{\Phi}(A)\bigr)
  \ge
  \gamma\left[
    \omega(A^*A)-\abs{\omega(A)}^2
  \right]
  \label{eq:locally_unique_gap}
\end{equation}
for every $A$ in the domain of the closed derivation $\mL_\Phi$.
In the GNS representation
$(\mathcal H_\omega,\pi_\omega,\Omega_\omega)$, let $H_\omega\ge0$ be
the Hamiltonian implementing the dynamics, normalized by
$H_\omega\Omega_\omega=0$.  If
$\pi_\omega(\operatorname{Dom}\mL_\Phi)\Omega_\omega$ is a form core
for $H_\omega$, Eq.~\eqref{eq:locally_unique_gap} is equivalent to
\begin{equation}
  \ker H_\omega=\mathbb C\Omega_\omega,
  \qquad
  \sigma(H_\omega)\setminus\{0\}\subset[\gamma,\infty).
  \label{eq:gns_unique_gap}
\end{equation}
Local uniqueness is a
property of this chosen GNS branch; it does not exclude other algebraic
ground states in disjoint representations.

To pass from the finite-volume shell construction to the infinite-volume system, we regard the resulting states as states on the common quasi-local algebra $\mA$, so that weak-$*$ limits can be considered. Fix a reference product state on $\mA$. For each $n$, extend a state on $\mA_{\Lambda_n}$ to $\mA$ by tensoring it with the restriction of the reference state to $\Gamma\setminus\Lambda_n$. This extension is only an auxiliary choice: for every fixed finite $X\subset\Gamma$, one has $X\subset\Lambda_n$ for all sufficiently large $n$, and hence the expectation of every observable in $\mA_X$ is eventually independent of the state chosen outside $\Lambda_n$. Consequently, the weak-$*$ limiting behavior on local observables is determined entirely by the finite-volume shell states.

\subsection{Truncation paths and spectral assumptions}\label{subsec:assumptions}

Next we define the interpolation paths used to compare the truncated
and full interactions.  Let $R'>R>1$ be two truncation scales.  The
direct path restores the entire discarded tail in one step, whereas
the shell path adds only the interactions between the two scales.  More
concretely, they are defined by
\begin{align}
  {\text{direct path: }} \Phi_s^{(R,\infty)}
  &=\Phi_{\le R}+s\Phi_{>R},
  \label{eq:direct_path}
  \\
  {\text{shell path:  }} \Phi_s^{(R,R')}
  &=\Phi_{\le R}+s\Phi_{(R,R')},
  \label{eq:shell_path}
\end{align}
with $s\in[0,1]$.  Formally, the direct path is the case $R'=\infty$.

The direct path is used below both for finite-volume systems and,
for superpolynomial interactions under additional state-path
assumptions, intrinsically in infinite volume.  By contrast, the shell
path hypotheses (discussed below) in this work are imposed only in finite volume, not in the intrinsic infinite-volume setting.
Yet, with a shell-by-shell iteration, we can construct a sequence of states converging to an infinite-volume limit through finite-range spectral flows.
This construction requires uniform control along
every selected shell and compatibility at common truncation scales, hence the hypothesis is different from the direct-path hypothesis.
An intrinsic infinite-volume alternative and the obstacles for
genuinely polynomial long-range paths are discussed in
Appendix~\ref{app:intrinsic_infinite_volume}.

\medskip
\noindent\emph{Finite-volume spectral sector convention.}
Along either path, let $H_{\Lambda}(s)$ be the induced finite-volume
Hamiltonian and let $E_{\Lambda}(s)=\min\sigma(H_{\Lambda}(s))$ ($\sigma$ denoting the spectrum).
An \emph{isolated low-energy sector} means a family of compact
intervals $I_{\Lambda}(s)$ with continuously differentiable endpoints
such that
\begin{equation}
  \begin{aligned}
    \Sigma_{\Lambda}^{\mathrm{low}}(s)
    &:=
    \sigma(H_{\Lambda}(s))\cap I_{\Lambda}(s),
    \\
    E_{\Lambda}(s)
    &\in
    \Sigma_{\Lambda}^{\mathrm{low}}(s),
    \\
    \sup_{\lambda\in\Sigma_{\Lambda}^{\mathrm{low}}(s)}
    \bigl(\lambda-E_{\Lambda}(s)\bigr)
    &\le\delta_{\Lambda}.
  \end{aligned}
  \label{eq:low_energy_band}
\end{equation}
\begin{equation}
  \dist\left(
    I_{\Lambda}(s),
    \sigma(H_{\Lambda}(s))
      \setminus\Sigma_{\Lambda}^{\mathrm{low}}(s)
  \right)
  \ge\gamma
  \label{eq:isolated_sector_gap}
\end{equation}
with the required uniformity in $s$ and across the specified path
family stated in the hypotheses below.  Here $\delta_{\Lambda}\ge0$ is the width allowed for the
low-energy band $\Sigma_\Lambda^{\rm low}(s)$, and
\begin{equation}
  P_{\Lambda}(s)
  :=
  \mathbf 1_{I_{\Lambda}(s)}(H_{\Lambda}(s))
  \label{eq:sector_projection}
\end{equation}
is its spectral projection.  The continuously tracked interval and the
uniform separation imply that $\operatorname{rank}P_{\Lambda}(s)$ is
constant along the path, although it need not be bounded uniformly in
$\Lambda$.  The case $\delta_{\Lambda}=0$ is an exactly degenerate
ground space.  If $\delta_{\Lambda}>0$, we use ``approximately
degenerate'' only for the isolated low-energy sector and do not call every
state supported in it an exact finite-volume ground state.  No condition
$\delta_{\Lambda}\to0$ is needed for the finite-volume projector bound.
For a thermodynamic-limit statement about arbitrary states supported in
the sector, additional hypotheses are needed to ensure that their
weak-$*$ accumulation points are infinite-volume ground states.  Along
the exhaustion used here, $\delta_{\Lambda_n}\to0$ is a simple
sufficient energy-width condition for that conclusion, together with
the usual locality assumptions, but it does not by itself imply
convergence of the entire sequence of states.

At the endpoints of a direct path we write the projections as
$P_{\Lambda,R}$ and $P_{\Lambda}$, and at the endpoints of a shell path
as $P_{\Lambda,R}$ and $P_{\Lambda,R'}$. 
Once the projection at one endpoint is fixed, the uniform separating gap determines its continuation uniquely along the path. For successive shells, we choose one projection \(P_{\Lambda,R_k}\) at each common cutoff and require both adjacent paths to use this projection. This consistency is automatic for the full ground-space projection.

We now state the spectral hypotheses for the two paths.  The
finite-volume clauses refer to the sector convention above.

\begin{assumption}[Finite-volume direct-path hypothesis]
\label{assumption:adiabaticity-direct-path}
Let $\mathcal D$ be a specified family of pairs
$(\Lambda,R)$ with $\Lambda\Subset\Gamma$ and $R>1$.
For every $(\Lambda,R)\in\mathcal D$, the Hamiltonian induced by
Eq.~\eqref{eq:direct_path} carries an isolated low-energy sector
satisfying Eqs.~\eqref{eq:low_energy_band}--\eqref{eq:sector_projection}.
There exists $\gamma>0$, independent of
$(\Lambda,R)\in\mathcal D$ and $s\in[0,1]$, such that the separating
gap is at least $\gamma$.

For each finite volume $\Lambda$, fix an isolated low-energy projection
$P_\Lambda$ of the full Hamiltonian $H_\Lambda(\Phi)$. For every
$(\Lambda,R)\in\mathcal D$, require the sector along the direct path to
end at $P_\Lambda$ when $s=1$.
\end{assumption}

\begin{assumption}[Finite-volume shell-path hypothesis]
\label{assumption:adiabaticity-shell-path}
Let $\mathcal S$ be a specified family of triples
$(\Lambda,R,R')$ with $1<R<R'<\infty$.  For every
$(\Lambda,R,R')\in\mathcal S$, the Hamiltonian induced by
Eq.~\eqref{eq:shell_path} carries an isolated low-energy sector
satisfying Eqs.~\eqref{eq:low_energy_band}--\eqref{eq:sector_projection}.
There exists $\gamma>0$, independent of
$(\Lambda,R,R')\in\mathcal S$ and $s\in[0,1]$, such that the separating
gap is at least $\gamma$.  Whenever $(\Lambda,R,R')$ and
$(\Lambda,R',R'')$ belong to $\mathcal S$, their prescribed sector
projections at the common cutoff $R'$ agree.
\end{assumption}

The intrinsic infinite-volume result requires additional regularity beyond the GNS gap.

\medskip
\noindent\emph{State-derivative regularity.}
Let $\mA_\infty$ denote the superpolynomially localized observable
space of Ref.~\cite{becker2025automorphicequivalencegappedphases}, and
fix the origin $0\in\Gamma=\mathbb Z^d$.  A state path
$(\omega_s)_{s\in[0,1]}$ satisfies the state-derivative regularity
condition if there exist $\nu\in\mathbb N_0$ and $C<\infty$ such that,
for every $A\in\mA_\infty$, the map
$s\mapsto\omega_s(A)$ is differentiable and
\begin{equation}
  \abs{\partial_s\omega_s(A)}
  \le
  C\snorm{A}^{\mathrm{loc}}_{\nu,0},
  \qquad s\in[0,1].
  \label{eq:state_path_derivative_bound}
\end{equation}
Here $\snorm{\cdot}^{\mathrm{loc}}_{\nu,0}$ is the
localized-observable seminorm defined in
Eq.~\eqref{eq:localized_observable_seminorm}.

\begin{assumption}[Direct-path hypotheses, infinite-volume setting]
\label{assumption:direct-path-infinite}
Suppose that $\Phi$ is superpolynomially decaying.  There exist
$R_*<\infty$, $\gamma>0$, and a fixed infinite-volume ground state
$\omega$ of $\Phi$ such that, for every $R>R_*$, there are an
infinite-volume ground state $\omega_R$ of $\Phi_{\le R}$ and a path
of states $(\omega_s^{(R)})_{s\in[0,1]}$ satisfying the
state-derivative regularity condition above and
\[
  \omega_0^{(R)}=\omega_R,
  \qquad
  \omega_1^{(R)}=\omega.
\]
For every $R>R_*$ and $s\in[0,1]$, the state $\omega_s^{(R)}$ is a
locally unique gapped ground state of $\Phi_s^{(R,\infty)}$, in the
sense of Eq.~\eqref{eq:locally_unique_gap}, with gap at least
$\gamma$.  The regularity parameters $\nu$ and $C$ may depend on $R$;
no uniformity of these parameters in $R$ is assumed.
\end{assumption}

The state-derivative regularity condition is an independent
hypothesis: it does not follow from differentiability of the
interaction path and the locally unique gap alone.  By contrast,
because $\Phi$ is superpolynomially decaying and
\[
  \Phi_s^{(R,\infty)}
  =
  \Phi_{\le R}+s\Phi_{>R},
  \qquad
  \partial_s\Phi_s^{(R,\infty)}=\Phi_{>R},
\]
the interaction path
$(\Phi_s^{(R,\infty)})_{s\in[0,1]}$ automatically belongs to
$\mathcal P_{\infty,[0,1]}^{(1)}$ for every $R>R_*$.  Consequently,
\begin{equation}
  \bigl(
    (\omega_s^{(R)})_{s\in[0,1]},
    (\Phi_s^{(R,\infty)})_{s\in[0,1]}
  \bigr)
  \label{eq:differentiable_path}
\end{equation}
is a differentiable path of gapped systems in the sense of
Ref.~\cite{becker2025automorphicequivalencegappedphases}.

Note that the three assumptions are logically independent. Each estimate below uses only the assumption associated with its truncation route.

Under the finite-volume clauses, spectral flow is implemented by
unitaries that identify the isolated sectors at the two endpoints.
Under the intrinsic clause of
Assumption~\ref{assumption:direct-path-infinite}, it is an
automorphism cocycle that transports the prescribed infinite-volume
states.

For states $\omega$ and $\nu$, define the local distinguishability on $X$ by
\begin{equation}
  d_X(\omega,\nu)
  =
  \sup_{\substack{A=A^*\in\mA_X\\ \snorm{A}\le1}}
  \abs{\omega(A)-\nu(A)}.
  \label{eq:local_distinguishability}
\end{equation}
For spin systems, this equals the trace norm of the difference of the reduced density matrices on $X$.  The same equality holds for parity-even fermionic states when the supremum is taken over the even local algebra; see Section~\ref{sec:fermionic_extension} for details.

For two low-energy-sector projections $P_0$ and $P_1$ identified by a
unitary $U$ satisfying $P_1=UP_0U^*$, define the
\emph{compressed-observable error} by
\[
  \locerr(A_X;P_0,P_1,U)
  :=
  \snorm{U^*P_1A_XP_1U-P_0A_XP_0}.
\]
It compares the compression of the same local observable to the two
sectors after their unitary identification and therefore also controls
off-diagonal matrix elements within the sectors.  When the projections
and unitary are fixed by a truncation path, we suppress them and display
the relevant truncation scales in the subscript.

\subsection{Main results}\label{subsec:main_results}
We now summarize the main results.  Table~\ref{tab:summary} gives the
scaling of local distinguishability in each decay regime and volume setting.

The first result is finite-volume and applies to either interpolation.

\begin{theorem}[Finite-volume response]
\label{thm:main_finite_response}
Let $\eta>d$ and $\snorm{\Phi}_{F_{\eta}}<\infty$. 
The following conclusions hold independently: 
\begin{enumerate}
    \item For the direct path setting,  suppose $\Phi$ satisfies Assumption~\ref{assumption:adiabaticity-direct-path} with \(\mathcal D\) the specified family of pairs $(\Lambda, R)$ and $\gamma$ the uniform gap.
Let
$U_{\Lambda}^{(R,\infty)}$ denote
spectral-flow unitaries at $s=1$ defined in
Section~\ref{sec:finite_spectral_flow}. Then 
\begin{align}
  \locerr_{\Lambda,R}(A_X)
  &=
  \Bigl\lVert
    \bigl(U_{\Lambda}^{(R,\infty)}\bigr)^*
    P_{\Lambda}A_XP_{\Lambda}
    U_{\Lambda}^{(R,\infty)}
  \nonumber\\
  &\qquad
    -
    P_{\Lambda,R}A_XP_{\Lambda,R}
  \Bigr\rVert
  \nonumber\\
  &\le
  C|X|\snorm{A_X}\tailmass(R).
  \label{eq:main_direct_sector}
\end{align}

\item For the shell path setting, suppose $\Phi$ satisfies Assumption~\ref{assumption:adiabaticity-shell-path} with $\mathcal{S}$ the family of shells in the assumption, and $U_{\Lambda}^{(R,R')}$ denote the corresponding spectral-flow unitaries at $s=1$. Then for every $(\Lambda,R,R')\in\mathcal S$,
\begin{align}
  \locerr_{\Lambda,R,R'}(A_X)
  &=
  \Bigl\lVert
    \bigl(U_{\Lambda}^{(R,R')}\bigr)^*
    P_{\Lambda,R'}A_XP_{\Lambda,R'}
    U_{\Lambda}^{(R,R')}
  \nonumber\\
  &\qquad
    -
    P_{\Lambda,R}A_XP_{\Lambda,R}
  \Bigr\rVert
  \nonumber\\
  &\le
  C|X|\snorm{A_X}\tailmass(R).
  \label{eq:main_shell_sector}
\end{align}
\end{enumerate}

Here $C$ depends on $d$, $\eta$, $\gamma$, and an upper bound for
$\snorm{\Phi}_{F_\eta}$, but not on $\Lambda$, $R$, $R'$, the sector
rank, or the internal band width $\delta_{\Lambda}$. If the low-energy sector is the rank-one projection onto the unique ground state along the
entire path, Eqs.~\eqref{eq:main_direct_sector} and
\eqref{eq:main_shell_sector} reduce to
\begin{align}
  d_X(\omega_{\Lambda},\omega_{\Lambda,R})
  &\le C|X|\tailmass(R)
  \label{eq:main_direct_finite}
\end{align}
and
\begin{align}
  d_X(\omega_{\Lambda,R'},\omega_{\Lambda,R})
  &\le C|X|\tailmass(R)
  \label{eq:main_shell_step}
\end{align}
for the direct and shell paths, respectively, with $\omega_{\Lambda},\omega_{\Lambda,R},\omega_{\Lambda,R'}$ the corresponding ground states.
\end{theorem}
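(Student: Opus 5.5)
The plan is to use quasi-adiabatic continuation (spectral flow) along the path $s\mapsto H_\Lambda(s)$, where $H_\Lambda(s)=H_{\Lambda,R}+sV$ with $V=\sum_{Z\subset\Lambda}\Phi_{>R}(Z)$ for the direct path or $V=\sum_{Z\subset\Lambda}\Phi_{(R,R')}(Z)$ for the shell path. The uniform separating gap $\gamma$ gives the Hastings generator
\[
D(s)=\int_{\mathbb R}W_\gamma(t)\,e^{\ii tH_\Lambda(s)}\,V\,e^{-\ii tH_\Lambda(s)}\,\dd t ,
\]
with $W_\gamma$ the standard odd filter function, which decays faster than any power of $t$. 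The unitary $U(s)$ solving $\ii\partial_sU=D(s)U$, $U(0)=\id$, satisfies $P_\Lambda(s)=U(s)P_\Lambda(0)U(s)^*$. Only the gap and the continuity of the endpoints of $I_\Lambda(s)$ enter; neither the rank nor $\delta_\Lambda$ does. Setting $U_\Lambda^{(R,\cdot)}=U(1)$, I will differentiate
\[
\locerr(s):=U(s)^*P(s)A_XP(s)U(s)=P(0)\,U(s)^*A_XU(s)\,P(0).
\]
This gives
\[
\partial_s\locerr(s)=\ii\,P(0)\,U(s)^*[D(s),A_X]\,U(s)\,P(0),
\]
and hence
\[
\locerr_{\Lambda,R(,R')}(A_X)\le\int_0^1\snorm{[D(s),A_X]}\,\dd s .
\]
So the whole estimate reduces to a bound on the commutator of the spectral-flow generator with a local observable. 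That bound must be uniform in $s$ and $\Lambda$.

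Next I decompose $D(s)=\sum_{Z:\,\diam Z>R}D_Z(s)$, where $D_Z$ is the filtered Heisenberg evolution of the single term $\Phi(Z)$. For each $Z$, the long-range Lieb--Robinson bound of Teufel--Wessel (Appendix~\ref{app:lr_bound}) gives
\[
\snorm{[\tau_t^{s}(\Phi(Z)),A_X]}\le C\snorm{\Phi(Z)}\snorm{A_X}\sum_{x\in X,\,z\in Z}\min\bigl\{1,\,e^{v|t|}F(\dist(x,z))\bigr\}.
\]
This needs the uniform background norm $\snorm{\Phi_s}_{F_\eta}\le\snorm{\Phi}_{F_\eta}$ along the path, which holds because $s\in[0,1]$ only shrinks terms. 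The improved version, with polynomial light-cone growth rather than $e^{v|t|}$, is what makes the $t$ integral against $W_\gamma$ finite for polynomial $F$. Integrating against $W_\gamma$, I expect a bound
\[
\snorm{[D_Z,A_X]}\le C\snorm{A_X}\snorm{\Phi(Z)}\sum_{x\in X}\sum_{z\in Z}G(\dist(x,z)),
\]
with $G$ a decaying function that is summable over $\mathbb Z^d$. I expect this to be Proposition~\ref{prop:weighted_response}, whose hypotheses appear to need $p>2d$ in the two-body case. Summing over the discarded $Z$ and exchanging the order,
\[
\sum_{\diam Z>R}\snorm{\Phi(Z)}\sum_{z\in Z}G(\dist(x,z))\le\sum_{z\in\Gamma}G(\dist(x,z))\sum_{\substack{Z\ni z\\ \diam Z>R}}\snorm{\Phi(Z)}\le\Bigl(\sum_{z}G(\dist(x,z))\Bigr)\tailmass(R).
\]
The factor $|Z|$ in $\tailmass$ absorbs the multiplicity of the $z$-sum, and summing over $x\in X$ gives the factor $|X|$. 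The shell case is identical, since $\Phi_{(R,R']}$ is a subset of the discarded tail, so its mass is at most $\tailmass(R)$. Finally, for a rank-one projection $P=|\psi\rangle\langle\psi|$, the compressed-observable error is exactly $|\omega_1(A_X)-\omega_0(A_X)|$ up to a phase-free identity, because $U^*P_1U=P_0$. Taking the supremum over self-adjoint $\snorm{A_X}\le1$ then gives Eqs.~\eqref{eq:main_direct_finite} and \eqref{eq:main_shell_step}.

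The main obstacle is the single-term response bound with a summable $G$. There are two competing requirements. The Lieb--Robinson tail in $\dist(x,z)$ must survive integration against $W_\gamma(t)$, whose decay is only subexponential. The resulting $G$ must also be summable in $d$ dimensions uniformly in $s$ and $\Lambda$. I would first split the $t$ integral at $|t|\sim\dist(x,z)^{\alpha}$. Below the cut I use the polynomial Lieb--Robinson bound. Above it I use the trivial bound $2\snorm{\Phi(Z)}\snorm{A_X}$ times the tail of $W_\gamma$, which decays faster than any power. Choosing $\alpha$ so that the polynomial Lieb--Robinson decay stays above $d$ is where conditions like $\eta>d$ (or $p>2d$ for two-body interactions) should appear.
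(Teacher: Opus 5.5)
Your plan is the paper's proof. Exact transport $P_s=U_sP_0U_s^*$ reduces the sector error to $\snorm{P_0(\alpha_1(A_X)-A_X)P_0}\le\snorm{\alpha_1(A_X)-A_X}$, each generator term is localized by integrating the improved Teufel--Wessel bound against $W_\gamma$ with a time cut at a power of $\dist(Z,X)$, and the terms are summed through the local tail mass (Proposition~\ref{prop:appendix_termwise} and Corollary~\ref{cor:local_mass_bound}). The differences are minor: you bound $\int_0^1\snorm{[D(s),A_X]}\,\dd s$ directly where the paper uses a conditional expectation and a Duhamel comparison with an auxiliary generator supported off $X$, which gives the same estimate; and the improved bound actually delivers $\min\{|X|,|Z|\}F_\beta(\dist(Z,X))$ rather than your pairwise double sum, so the $|Z|$ weight in $\tailmass$ is consumed through $\min\{|X|,|Z|\}\le|Z|$ followed by grouping each $Z$ at a nearest anchor.
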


\noindent\emph{Proof location.}
The unique ground-state bounds are proved in
Section~\ref{sec:finite_response_proof}, using
Proposition~\ref{prop:weighted_response}.  The compressed-sector
identity, its proof, and the matched-state interpretation are given in
Appendix~\ref{app:finite_volume_sectors}. The proof of this main theorem is based on techniques developed in Ref.\ \onlinecite{teufel2025liebrobinsonboundsautomorphicequivalence}.
Appendix~\ref{app:finite_range_shell} gives a weaker alternative estimate that does not use Ref.\ \onlinecite{teufel2025liebrobinsonboundsautomorphicequivalence}.

In the unique-ground-state case, the shell path estimate can be iterated
along increasing truncation scales.  In finite volume this gives an
optional comparison under successive shell gap assumptions.  Its main
use below is to construct a selected infinite-volume branch when the
tail masses along the chosen scale sequence are summable.

\begin{theorem}[Thermodynamic-limit shell flow]
\label{thm:main_infinite_shell}
Let $\eta>d$, let $\snorm{\Phi}_{F_\eta}<\infty$, and let
$R_k\uparrow\infty$ be an increasing sequence with $R_0>1$ such that
\begin{equation}
  \sum_{k\ge0}\tailmass(R_k)<\infty.
  \label{eq:main_shell_tail_summability}
\end{equation}
For example, any geometrically increasing sequence
$R_k=\lambda^kR_0$ with $\lambda>1$ has this property.  Let
$\Lambda_n=[-L_n,L_n]^d\cap\mathbb Z^d$, with
$L_n\uparrow\infty$, and set
\begin{equation}
  \mathcal S
  =
  \{(\Lambda_n,R_k,R_{k+1}):n\ge1,\ k\ge0\}.
  \label{eq:infinite_shell_family}
\end{equation}
Assume that $\mathcal S$ and $\Phi$ satisfy
Assumption~\ref{assumption:adiabaticity-shell-path}, with every prescribed
sector being the rank-one projection onto the unique ground state and
with common gap $\gamma>0$.  Let $\omega_{R_0}$ be a weak-$*$ limit,
along a subsequence of the exhaustion, of the ground states of
$\Phi_{\le R_0}$.

Then every shell spectral flow has a thermodynamic-limit automorphism
$\alpha^{(R_k,R_{k+1})}_s$.  Defining recursively
\begin{equation}
  \omega_{R_{k+1}}
  =
  \omega_{R_k}\circ\alpha^{(R_k,R_{k+1})}_1,
  \label{eq:main_coherent_branch}
\end{equation}
produces a shell-transported sequence of infinite-volume ground states satisfying
\begin{equation}
  \abs{\omega_{R_{k+1}}(A_X)-\omega_{R_k}(A_X)}
  \le
  C|X|\snorm{A_X}\tailmass(R_k).
  \label{eq:main_shell_cauchy}
\end{equation}
Consequently, the sequence converges locally to a state
$\omega_{\infty}$, and $\omega_{\infty}$ is a ground state of the full
interaction $\Phi$.  Moreover,
\begin{equation}
  \abs{\omega_\infty(A_X)-\omega_{R_j}(A_X)}
  \le
  C|X|\snorm{A_X}
  \sum_{k\ge j}\tailmass(R_k).
  \label{eq:main_shell_limit_bound}
\end{equation}
If the full ground state is unique, then $\omega_{\infty}=\omega$.
\end{theorem}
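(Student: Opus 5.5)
The strategy is to build each shell automorphism as a thermodynamic limit of finite-volume spectral flows, transport the base state through them, and then read everything off from the finite-volume shell bound Eq.~\eqref{eq:main_shell_step} of Theorem~\ref{thm:main_finite_response}.

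\emph{Step 1: the shell automorphisms.} For each $k$ and $n$, the finite-volume spectral flow along the shell path $\Phi^{(R_k,R_{k+1})}_s$ on $\Lambda_n$ is generated by the quasi-adiabatic generator
\[
D_{\Lambda_n}(s)=\int W_\gamma(t)\,e^{\ii tH_{\Lambda_n}(s)}\,H_{\Lambda_n,(R_k,R_{k+1}]}\,e^{-\ii tH_{\Lambda_n}(s)}\,\dd t .
\]
Because the shell interaction $\Phi_{(R_k,R_{k+1}]}$ has finite range $R_{k+1}$ and $\snorm{\Phi}_{F_\eta}<\infty$, the Lieb--Robinson bounds of Appendix~\ref{app:lr_bound} (uniform in $n$) make this generator a sum of quasi-local terms with $n$-independent decay. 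The standard Cauchy argument then applies: compare $\tau^{(n)}_s(A)$ and $\tau^{(m)}_s(A)$ by Duhamel, with only boundary terms contributing. Hence $\alpha^{(R_k,R_{k+1})}_s(A)=\lim_n\tau^{(n)}_s(A)$ exists in norm for local $A$ and extends to an automorphism of $\mA$. At this stage I would only need convergence, not rates, since $k$ is fixed.

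\emph{Step 2: transporting ground states and the Cauchy bound.} The recursion Eq.~\eqref{eq:main_coherent_branch} is an induction on $k$, carrying two facts together.
\begin{enumerate}
  \item[(i)] $\omega_{R_k}$ is the weak-$*$ limit of $\omega_{\Lambda_n,R_k}$ along the fixed subsequence chosen for $\omega_{R_0}$.
  \item[(ii)] $\omega_{R_k}$ is a ground state of $\Phi_{\le R_k}$.
\end{enumerate}
For (i), the finite-volume flow maps $\omega_{\Lambda_n,R_k}$ to $\omega_{\Lambda_n,R_{k+1}}$: the prescribed rank-one projections agree at common cutoffs by the compatibility clause of Assumption~\ref{assumption:adiabaticity-shell-path}, so $\omega_{\Lambda_n,R_{k+1}}(A)=\omega_{\Lambda_n,R_k}(\tau^{(n)}_1(A))$. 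Splitting
\[
\omega_{\Lambda_n,R_k}(\tau^{(n)}_1 A)-\omega_{R_k}(\alpha_1A)
\]
into a term bounded by $\snorm{\tau^{(n)}_1A-\alpha_1A}$ and a term $\omega_{\Lambda_n,R_k}(\alpha_1A)-\omega_{R_k}(\alpha_1A)$ gives (i) at $k+1$. The second term is controlled by approximating $\alpha_1A$ by local observables and using (i) at $k$. Fact (ii) follows because weak-$*$ limits of finite-volume ground states of a finite-range interaction satisfy Eq.~\eqref{eq:ground_state_condition}. With (i) in hand, Eq.~\eqref{eq:main_shell_cauchy} is immediate: pass to the limit in Eq.~\eqref{eq:main_shell_step}, whose constant $C$ is uniform in $\Lambda_n$ and in the shell by Theorem~\ref{thm:main_finite_response}.

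\emph{Step 3: the limit.} Summability Eq.~\eqref{eq:main_shell_tail_summability} makes $(\omega_{R_k}(A_X))_k$ Cauchy for every local $A_X$. The limit defines a state $\omega_\infty$ on local observables, which extends by density because all these functionals have norm $\le1$. Summing the telescoping bound from $j$ onward gives Eq.~\eqref{eq:main_shell_limit_bound}. For a geometric sequence $R_k=\lambda^kR_0$, Eq.~\eqref{eq:tail_mass_power} gives a geometric series, which verifies the stated example.

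\emph{Main obstacle: the ground-state property of $\omega_\infty$.} For local $A$, write
\[
\omega_\infty(A^*\mL_\Phi A)=\lim_k\omega_{R_k}(A^*\mL_\Phi A).
\]
Then
\[
\omega_{R_k}(A^*\mL_\Phi A)=\omega_{R_k}(A^*\mL_{\Phi_{\le R_k}}A)+\omega_{R_k}(A^*\mL_{\Phi_{>R_k}}A).
\]
The first term is $\ge0$ by (ii). The second is bounded by $2\snorm{A}^2\,|\supp A|\,\tailmass(R_k)\to0$. The delicate point is interchanging the limit in $k$ with the quasi-local, not local, observable $A^*\mL_\Phi A$. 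This requires uniform-in-$k$ convergence of $\omega_{R_k}$ on it, which I would obtain by truncating $\mL_\Phi A$ to finitely many terms, with error controlled by $\snorm{\Phi}_{F_\eta}$ tails uniformly in the state, before taking $k\to\infty$.

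If the full ground state is unique, $\omega_\infty$ is a ground state and must equal $\omega$.
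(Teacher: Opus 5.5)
Your proposal is correct and follows essentially the same route as the paper: you take thermodynamic limits of the finite-range shell spectral flows, run the induction along one fixed subsequence, sum the Cauchy bounds, and identify the limit as a ground state via $\snorm{(\mL_\Phi-\mL_{\Phi_{\le R_k}})(A)}\le 2|X|\snorm{A}\tailmass(R_k)$. The differences are minor: you obtain Eq.~\eqref{eq:main_shell_cauchy} by passing the finite-volume state bound Eq.~\eqref{eq:main_shell_step} to the limit rather than first proving the automorphism bound Eq.~\eqref{eq:infinite_shell_automorphism_bound}, and your ``delicate point'' is automatic, since $\mL_\Phi(A)\in\mA$ is a norm limit of local observables and weak-$*$ convergence already holds on all of $\mA$.
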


\noindent\emph{Proof location.}
The thermodynamic-limit shell flow automorphisms are constructed in
Section~\ref{sec:thermodynamic_shell_flow}; convergence of the transported
states and the ground-state property are proved in
Section~\ref{sec:coherent_shell_limit}.

If $\snorm{\Phi}_{f_\mu}<\infty$, the same theorem retains the full
exponential diameter summability rate (in Eq.~\ref{eq:exponential_norm}) along a fixed shell-transported branch with
$R_j=\lambda^jR_0$, $\lambda>1$:
\begin{equation}
  d_X(\omega_\infty,\omega_{R_j})
  \le
  C_\mu |X|\snorm{\Phi}_{f_\mu}e^{-\mu R_j}.
  \label{eq:infinite_shell_exponential}
\end{equation}
Indeed, Eq.~\eqref{eq:main_shell_limit_bound} and
$\tailmass(R_k)\le e^{-\mu R_k}\snorm{\Phi}_{f_\mu}$ reduce the claim to
\begin{equation}
  \sum_{k\ge j}e^{-\mu R_k}
  =
  \sum_{\ell\ge0}e^{-\mu\lambda^\ell R_j}
  \le
  \frac{e^{-\mu R_j}}{1-e^{-\mu(\lambda-1)}},
  \qquad R_j>1.
  \label{eq:infinite_shell_exponential_sum}
\end{equation}
Here $\lambda^\ell R_j\ge R_j+\ell(\lambda-1)$ was used in the last
inequality.
Without uniqueness of the full ground state, Eq.~\eqref{eq:infinite_shell_exponential}
compares only states on this shell-transported branch \footnote{We remark that restarting
the construction at a different base scale need not select the same
$\omega_\infty$.}.

For a power-law diameter norm, Theorems~\ref{thm:main_finite_response} and \ref{thm:main_infinite_shell} give the same $R^{-\eta}$ rate.  For two-body interactions with termwise strength $r^{-p}$, the more physical form of the rate is $R^{-(p-d)}$, which is discussed in Section~\ref{sec:two_body}.

The infinite-volume direct result uses a different input.

\begin{theorem}[Infinite-volume direct superpolynomial truncation]
\label{thm:main_infinite_direct}
Suppose the intrinsic infinite-volume 
Assumption~\ref{assumption:direct-path-infinite} holds.  Thus the
direct paths are paired with prescribed differentiable state branches
having a common locally unique GNS gap, and their full endpoint is the
same state $\omega$ for every $R>R_*$.  Then, for every $m>0$,
\begin{equation}
  d_X(\omega,\omega_R)
  \le C_m|X|(1+R)^{-m},
  \label{eq:main_infinite_superpoly}
\end{equation}
where $C_m$ is independent of $R$.  
\end{theorem}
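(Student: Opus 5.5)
We prove the bound by transporting $\omega_R$ to $\omega$ along the given state path with the automorphic-equivalence spectral flow of Ref.~\cite{becker2025automorphicequivalencegappedphases}, and then bounding the transported observable minus the original one by the local tail mass. Fix $R>R_*$. By Assumption~\ref{assumption:direct-path-infinite}, the pair in Eq.~\eqref{eq:differentiable_path} is a differentiable path of gapped systems in that reference's sense. Its main theorem then gives an automorphism cocycle $\alpha_s^{(R)}$ generated by the time-dependent superpolynomial interaction $K_s=\mathcal I_{\gamma}(\partial_s\Phi_s^{(R,\infty)})=\mathcal I_\gamma(\Phi_{>R})$, where $\mathcal I_\gamma$ is the quasi-adiabatic map built from the $\gamma$-dependent filter function. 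The conclusion is $\omega_s^{(R)}=\omega_R\circ\alpha_s^{(R)}$. The state-derivative regularity \eqref{eq:state_path_derivative_bound} is exactly what that theorem needs to identify the transported state with the prescribed branch; local uniqueness of the gap makes the identification unique. So
\[
\omega(A_X)-\omega_R(A_X)=\omega_R\bigl(\alpha^{(R)}_1(A_X)-A_X\bigr)
=\int_0^1\omega_R\bigl(\alpha_s^{(R)}(\mL_{K_s}(A_X))\bigr)\,\dd s,
\]
and therefore
\[
d_X(\omega,\omega_R)\le\sup_s\snorm{\mL_{K_s}(A_X)}.
\]
The constants $\nu,C$ are used only qualitatively, so their $R$-dependence does no harm.

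Next we bound $\mL_{K_s}(A_X)$ by following the tail term by term. We decompose $K_s=\sum_Z\mathcal I_\gamma^{s}(\Phi_{>R}(Z))$. Each term is the integral $\int W_\gamma(t)\tau_t^{(s)}(\Phi(Z))\,\dd t$ with $\diam Z>R$, where $\tau^{(s)}$ is the dynamics of $\Phi^{(R,\infty)}_s$. We expand each term by conditional expectations onto balls $Z_n=\{x:\dist(x,Z)\le n\}$. The Lieb--Robinson bound for superpolynomial interactions (Appendix~\ref{app:lr_bound}, with constants uniform in $s$ and $R$ because $\snorm{\Phi_s^{(R,\infty)}}_{F_\eta}\le\snorm{\Phi}_{F_\eta}$) gives $\snorm{\Delta_n(\mathcal I_\gamma(\Phi(Z)))}\le C_k\snorm{\Phi(Z)}|Z|(1+n)^{-k}$ for every $k$. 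This uses the superpolynomial decay of $W_\gamma$ to truncate the time integral at $|t|\sim n/(2v)$. Then
\[
\snorm{\mL_{K_s}(A_X)}\le 2\snorm{A_X}\sum_{n\ge0}\sum_{Z:\,\dist(Z,X)\le n,\ \diam Z>R}C_k|Z|\snorm{\Phi(Z)}(1+n)^{-k}.
\]
For each $n$, every such $Z$ meets the set $X_n$ of sites within distance $n$ of $X$, and $|X_n|\le C|X|(1+n)^d$. So the inner sum is at most $C|X|(1+n)^d\tailmass(R)$. Choosing $k>d+1$ makes the $n$-sum converge, and we obtain $d_X(\omega,\omega_R)\le C|X|\tailmass(R)$.

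Finally, Eq.~\eqref{eq:tail_mass_power} with $\eta=m$ gives $\tailmass(R)\le(1+R)^{-m}\snorm{\Phi}_{F_m}$, which is finite by superpolynomial decay. This yields \eqref{eq:main_infinite_superpoly} with $C_m=C\snorm{\Phi}_{F_m}$, independent of $R$. The constant $C$ depends on $\gamma$, $d$, and finitely many background norms of $\Phi$. For exponential tails, the same computation with an exponential Lieb--Robinson bound and a subexponential filter loses a little of the rate, giving $e^{-\kappa R}$ with $\kappa<\mu$.

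The main obstacle is the uniform quasi-locality of $\mathcal I_\gamma(\Phi(Z))$ for large, spread-out $Z$. A naive bound yields a factor $|Z|$ or $\diam Z$ from the Lieb--Robinson light-cone volume, and that factor could spoil summability against $\tailmass$. The first thing to try is the improved bound of Ref.~\cite{teufel2025liebrobinsonboundsautomorphicequivalence}, which gives decay in $\dist(Z,X)$ with prefactor $|Z|$ only. The $|Z|$ weight in the definition of $\tailmass$ is designed to absorb exactly this factor. If that fails, we fall back on the abstract LPPL-type estimate from Ref.~\cite{becker2025automorphicequivalencegappedphases} applied term by term and accept losing one polynomial order, which is harmless since $m$ is arbitrary.
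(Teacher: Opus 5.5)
Your proposal is correct in substance, but after the transport step it takes a different route from the paper.

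\textbf{Comparison with the paper.} Both arguments use the automorphic-equivalence cocycle of Ref.~\cite{becker2025automorphicequivalencegappedphases}. Both then reduce the claim to bounding $\snorm{\mL_{\Psi_s^{(R)}}(A_X)}$ uniformly in $R$ and $s$.

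The paper goes through the interaction norm of the generator. It uses $\snorm{\mL_{\Psi_s^{(R)}}(A_X)}\le 2|X|\snorm{A_X}\snorm{\Psi_s^{(R)}}_{F_0}$. A quantitative version of Lemma~B.6 of that reference (Appendix~\ref{app:inverse_liouvillian}) then gives $\snorm{\Psi_s^{(R)}}_{F_0}\le C\snorm{\Phi_{>R}}_{F_{5d+5}}$, with $C$ depending on $\snorm{\Phi}_{F_{17d+12}}$. Superpolynomial decay finally turns the weighted tail norm into $(1+R)^{-m}$.

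You use the reference only for the existence of the cocycle and the form of its generator. You then bound the derivation on $A_X$ term by term with the improved long-range Lieb--Robinson bound and the filter moments. This is the infinite-volume analogue of Proposition~\ref{prop:appendix_termwise} combined with tail-mass summation.

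What your route buys is a sharper intermediate estimate, $d_X(\omega,\omega_R)\le C|X|\tailmass(R)$. It has the same tail-mass form as the finite-volume theorem, and its constant needs only one fixed background norm. For $\snorm{\Phi}_{f_\mu}<\infty$ it would even give the full rate $e^{-\mu R}$ rather than $e^{-\kappa R}$ with $\kappa<\mu$, so your closing remark undersells your own argument.

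What the paper's route buys is a structural statement: the generator itself is an $F_0$-summable interaction, uniformly in $R$. It also stays inside the seminorm machinery of the cited reference. The cost is $5d+5$ extra weighted powers on the tail.

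\textbf{Points to make explicit or fix.}

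\emph{Identification of the generator.} The reference builds its generator interaction by a centered, shell-wise reconstruction, not as $\sum_Z\mathcal I(\Phi_{>R}(Z))$. What you actually need is that its derivation on a local $A_X$ equals $\sum_Z\int W_\gamma(t)[\tau^{(s)}_t(\Phi_{>R}(Z)),A_X]\,\dd t$. This follows from the identity in Lemma~\ref{lem:inverse_reconstructed_derivation} plus a Fubini exchange, and your own termwise estimate supplies the absolute convergence needed for that exchange.

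\emph{Passage to infinite volume.} Theorem~\ref{thm:tw_improved_lr} is a finite-volume statement. You must pass it to the infinite-volume dynamics of $\Phi_s^{(R,\infty)}$. This is routine because its constants are uniform in $\Lambda$.

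\emph{Time split.} The split as you wrote it fails.
\begin{itemize}
\item At $|t|\sim n/(2v)$, the first term $e^{v_*|t|-n^{1-\sigma}}$ of the improved bound is unbounded.
\item The standard bound of Theorem~\ref{thm:appendix_lr} cannot be used at all here. Its prefactor grows like $e^{c|t|}$ against only polynomial spatial decay, and the exact filter has no exponential moment.
\end{itemize}
You must cut at $T_n\sim n^{1-\sigma}/(2\bar v)$, exactly as in the proof of Proposition~\ref{prop:appendix_termwise}. Because $W_\gamma$ has all polynomial moments and $\Phi$ has all polynomial norms, this still yields $(1+n)^{-k}$ for any fixed $k>d+1$, uniformly in $R$ and $s$, which is all your $n$-summation needs.
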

We remark that no uniformity of $C_m$ in $m$ is asserted. 
The conclusion applies only to the prescribed endpoint states $\omega_R$ and $\omega$ connected by the assumed differentiable path; it makes no statement about other ground states of $\Phi_{\le R}$ or $\Phi$ that are not on this path.

\noindent\emph{Proof location.}
The theorem is proved in Section~\ref{sec:direct_superpoly}; the
uniform, quantitative inverse-Liouvillian estimate used there is proved in
Appendix~\ref{app:inverse_liouvillian}.

The local distinguishability bounds also control the expectation of product operators $\omega(A_1\cdots A_n)$. The estimate depends on the size of the union support,
but not on the distances between the individual supports. This gives a bound for multipoint correlators.

\begin{corollary}[Distance-independent multipoint correlator error bounds]
\label{cor:multipoint_correlations}
Suppose that two states $\omega$ and $\nu$ satisfy
\begin{equation}
  d_Z(\omega,\nu)\le K|Z|\varepsilon
  \label{eq:multipoint_input}
\end{equation}
for every finite $Z\Subset\Gamma$, where $K$ and $\varepsilon$ are
independent of $Z$.  Let $A_i\in\mA_{X_i}$ for $i=1,\ldots,n$, and set
$X=\bigcup_{i=1}^nX_i$.  Then
\begin{align}
  &\abs{
    \omega(A_1\cdots A_n)-\nu(A_1\cdots A_n)
  }
  \nonumber\\
  &\qquad\le
  K\varepsilon |X|\prod_{i=1}^n\snorm{A_i}
  \le
  K\varepsilon
  \left(\sum_{i=1}^n|X_i|\right)
  \prod_{i=1}^n\snorm{A_i}.
  \label{eq:multipoint_ordinary}
\end{align}
In particular, for $A\in\mA_X$ and $B\in\mA_Y$,
\begin{align}
  &\abs{
    \omega(AB)-\nu(AB)
  }
  \nonumber\\
  &\qquad\le
  K\varepsilon |X\cup Y|\snorm{A}\snorm{B}
  \le
  K(|X|+|Y|)\snorm{A}\snorm{B} \varepsilon.
  \label{eq:two_point_ordinary}
\end{align}
Neither bound depends on the diameter of the union support or on the mutual
distances between the insertion supports.
\end{corollary}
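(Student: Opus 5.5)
The plan is to apply the hypothesis \eqref{eq:multipoint_input} once, to the single region $Z=X$, with the single observable $B:=A_1\cdots A_n$. Since each $A_i\in\mA_{X_i}\subset\mA_X$ and $\mA_X$ is an algebra, $B\in\mA_X$, with $\snorm{B}\le\prod_{i=1}^n\snorm{A_i}$ by submultiplicativity of the operator norm. What remains is to pass from self-adjoint observables, which is how $d_X$ is defined in \eqref{eq:local_distinguishability}, to the generally non-self-adjoint $B$.

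For that I would split $B$ into self-adjoint parts, $B=B_1+\ii B_2$ with $B_1=(B+B^*)/2$ and $B_2=(B-B^*)/(2\ii)$. Each $B_j$ lies in $\mA_X$ and satisfies $\snorm{B_j}\le\snorm{B}$. Linearity of $\omega-\nu$ then gives
\[
\abs{\omega(B)-\nu(B)}\le 2\,d_X(\omega,\nu)\snorm{B},
\]
and this is off by a factor of 2 from the stated constant.

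To remove the factor, I would use a rotation. Write $\omega(B)-\nu(B)=e^{\ii\theta}\abs{\omega(B)-\nu(B)}$ and set $B':=e^{-\ii\theta}B$. Then
\[
\abs{\omega(B)-\nu(B)}=(\omega-\nu)(B')=(\omega-\nu)\bigl(\mathrm{Re}\,B'\bigr),
\]
because $(\omega-\nu)(B')$ is real and states are hermitian functionals, so the imaginary part of $B'$ contributes only imaginary values. Since $\snorm{\mathrm{Re}\,B'}\le\snorm{B}$, this yields
\[
\abs{\omega(B)-\nu(B)}\le d_X(\omega,\nu)\snorm{B}\le K\varepsilon\abs{X}\prod_{i=1}^n\snorm{A_i}.
\]
This is the first inequality in \eqref{eq:multipoint_ordinary}. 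The second follows from $\abs{X}\le\sum_{i=1}^n\abs{X_i}$ (the union bound). The two-point estimate \eqref{eq:two_point_ordinary} is the special case $n=2$.

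Finally, the hypothesis is uniform in $Z$ and only $\abs{X}$ enters the bound, so neither $\diam X$ nor the mutual distances between the $X_i$ appear. The only place needing care is the reduction from self-adjoint to general observables with constant 1, which the rotation argument handles. In the fermionic setting the same argument applies provided the product is even, so that $B$ lies in the even algebra over which $d_X$ is taken.
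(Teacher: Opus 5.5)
Your argument is correct and matches the reasoning the paper leaves implicit (the corollary is stated without a written proof): $A_1\cdots A_n$ lies in $\mA_X$ with norm at most $\prod_i\snorm{A_i}$, and the hypothesis is applied once at $Z=X$. Your phase-rotation step supplies the reduction from self-adjoint to general observables with constant $1$, which the paper does not state; in the spin case this is equivalent to H\"older duality with $\snorm{\rho_X^\omega-\rho_X^\nu}_1$.
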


The corollary applies to every setting of truncation comparison above. One
takes $\varepsilon=\tailmass(R)$ for the finite-volume direct and
one-shell estimates, $\varepsilon=\sum_{k\ge j}\tailmass(R_k)$ for the
thermodynamic-limit shell flow, and
$\varepsilon=(1+R)^{-m}$ for the infinite-volume direct
superpolynomial estimate, with the corresponding constant $K$.  For an
isolated finite-volume sector, substituting $A_X=A_1\cdots A_n$ in
Eqs.~\eqref{eq:main_direct_sector} and \eqref{eq:main_shell_sector}
gives the corresponding bound for the spectrally identified endpoint
compressions.  The same bound holds for spectral-flow-matched state
moments by Eq.~\eqref{eq:matched_state_estimate}.

We remark that Corollary~\ref{cor:multipoint_correlations} controls absolute error only. It does not imply small relative error when the correlator itself is very small, nor does it show that a finite-range truncation reproduces the asymptotic large-distance behavior of the full correlator. Long-distance structure may therefore be missed even when all fixed-support moments are close.

The decay classes and volume settings are summarized in Table~\ref{tab:summary}.  The infinite-volume direct-path power-law case is not claimed here \footnote{Because the presently available intrinsic infinite-volume automorphic-equivalence framework assumes superpolynomial decay, we do not obtain an intrinsic direct-path result for power-law interactions. Existing polynomial finite-volume spectral-flow bounds involve substantial decay losses, and their extension to the required infinite-volume GNS setting has not been formally developed.}
The finite-volume response estimate alone neither constructs an infinite-volume long-range spectral flow nor selects a thermodynamic-limit ground-state branch.

\section{Response to lattice-wide perturbations}
\label{sec:weighted_response}

This section first introduces the finite-volume spectral flow and then
derives the response estimate used below.  The technical input is the
locality bound for individual interaction terms proved in
Appendix~\ref{app:lr_bound}.

\subsection{Spectral flow}
\label{sec:finite_spectral_flow}

Let $s\mapsto\Phi_s$, $s\in[0,1]$, be a differentiable interaction path
on a finite volume $\Lambda$.  Define the Hamiltonian $H_s$ and its
Heisenberg dynamics on $B\in\mA_\Lambda$ by
\begin{equation}
  H_s:=\sum_{Z\Subset\Lambda}\Phi_s(Z),
  \qquad
  \tau_t^s(B):=e^{\ii tH_s}Be^{-\ii tH_s}.
  \label{eq:finite_path_physical_dynamics}
\end{equation}
Suppose that $H_s$ carries an isolated low-energy sector with spectral
projection $P_s$ and separating gap at least $\gamma>0$, uniformly in
$s$.  Fix an exact filter $W_\gamma$ as in
Refs.~\cite{Bachmann_2011,teufel2025liebrobinsonboundsautomorphicequivalence}.
It is real-valued and odd, and its Fourier transform satisfies
\begin{equation}
  \widehat W_\gamma(\omega)
  =
  -\frac{\ii}{\sqrt{2\pi}\,\omega}
  \quad\text{for }|\omega|\ge\gamma,
  \label{eq:finite_filter_fourier}
\end{equation}
and its polynomial moments are finite:
\begin{equation}
  m_N(W_\gamma)
  :=
  \int_{\mathbb R}|W_\gamma(t)|(1+|t|)^N\,\dd t
  <\infty,
  \qquad N\in\mathbb N_0.
  \label{eq:finite_filter_moments}
\end{equation}
The associated inverse-Liouvillian map $\mathcal{I}_s(\cdot)$ is defined by \cite{teufel2025liebrobinsonboundsautomorphicequivalence,becker2025automorphicequivalencegappedphases}
\begin{equation}
  \mathcal I_s(B)
  :=
  \int_{\mathbb R}W_\gamma(t)\tau_t^s(B)\,\dd t.
  \label{eq:finite_inverse_liouvillian}
\end{equation}
The integral converges absolutely in operator norm.
It inverts $B\mapsto-\ii[H_s,B]$ on matrix elements between the range of
$P_s$ and its orthogonal complement.  Define the spectral-flow generator
\begin{equation}
  G_s
  :=
  -\mathcal I_s(\dot H_s)
  =
  \sum_{Z\Subset\Lambda}K_{s,Z},
  \qquad
  K_{s,Z}:=-\mathcal I_s(\dot\Phi_s(Z)).
  \label{eq:finite_spectral_flow_generator}
\end{equation}
The inverse property then gives
\begin{equation}
  \dot P_s=-\ii[G_s,P_s].
  \label{eq:finite_spectral_projection_equation}
\end{equation}
The spectral-flow unitary is defined by
\begin{equation}
  \ii\partial_sU_s=G_sU_s,
  \qquad
  U_0=\mathbf 1.
  \label{eq:finite_spectral_flow_unitary}
\end{equation}
The corresponding flow of observables is
\begin{equation}
  \alpha_s(A):=U_s^*AU_s.
  \label{eq:finite_spectral_flow_automorphism}
\end{equation}
With these conventions, the isolated low-energy sector is transported exactly:
\begin{equation}
  P_s=U_sP_0U_s^*.
  \label{eq:finite_spectral_flow_transport}
\end{equation}
This evolution, often referred to as ``spectral flow'', is also called quasi-adiabatic continuation or automorphic equivalence \cite{Hastings_2005,Bachmann_2011}.  Although
$G_s$ is not strictly local, Lieb--Robinson bounds and the decay of
$W_\gamma$ make its action on local observables quasi-local.  The
quantitative form needed below is
Proposition~\ref{prop:appendix_termwise}, proved in
Appendix~\ref{app:lr_bound}.

\subsection{Weighted response bound}

For an interaction $\Theta$, $\beta>d$, and a local region $X$, define
the weighted perturbation strength relative to $X$ by
\begin{equation}
  \mathcal W_{\beta,X}(\Theta)
  :=
  \sum_{Z\Subset\Lambda}
  |Z|\snorm{\Theta(Z)}F_\beta\bigl(\dist(Z,X)\bigr).
  \label{eq:weighted_size}
\end{equation}

\begin{proposition}[Weighted response bound]
\label{prop:weighted_response}
Fix $\eta>d$ and $0<\varepsilon<\eta-d$, and set
$\beta=\eta-\varepsilon>d$.  Assume
\begin{equation}
  \sup_{s\in[0,1]}\snorm{\Phi_s}_{F_\eta}\le M
  \label{eq:background_bound}
\end{equation}
and suppose that $H_s$ carries an isolated low-energy sector with
separating gap at least $\gamma>0$, uniformly in $s$ and $\Lambda$.
Then, for every $A_X\in\mA_X$, the finite-volume spectral flow satisfies
\begin{equation}
  \snorm{\alpha_1(A_X)-A_X}
  \le
  C\snorm{A_X}\int_0^1
  \mathcal W_{\beta,X}(\dot\Phi_s)\,\dd s,
  \label{eq:weighted_response}
\end{equation}
where $C=C(d,\eta,\beta,\gamma,M)$ depends on $d$, $\eta$, $\varepsilon$, $\gamma$, and $M$, but not on $\Lambda$ or the spatial support of $\dot\Phi_s$.
\end{proposition}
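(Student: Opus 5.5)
Since $\alpha_s(A)=U_s^*AU_s$ with $\ii\partial_sU_s=G_sU_s$, differentiating gives $\partial_s\alpha_s(A_X)=\ii\,\alpha_s([G_s,A_X])$. Integrating over $s$ and using that $\alpha_s$ is isometric yields
\begin{equation*}
  \snorm{\alpha_1(A_X)-A_X}
  \le\int_0^1\snorm{[G_s,A_X]}\,\dd s
  \le\int_0^1\sum_{Z\Subset\Lambda}\snorm{[K_{s,Z},A_X]}\,\dd s,
\end{equation*}
where we expand $G_s=\sum_Z K_{s,Z}$ with $K_{s,Z}=-\mathcal I_s(\dot\Phi_s(Z))$. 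The whole proposition therefore reduces to a single-term estimate of the form
\begin{equation*}
  \snorm{[\mathcal I_s(B_Z),A_X]}
  \le C\,|Z|\,\snorm{B_Z}\,\snorm{A_X}\,F_\beta\bigl(\dist(Z,X)\bigr),
  \qquad B_Z\in\mA_Z ,
\end{equation*}
with $C$ uniform in $s$, $\Lambda$, $Z$, $X$. Summing this over $Z$ with $B_Z=\dot\Phi_s(Z)$ gives exactly $C\snorm{A_X}\mathcal W_{\beta,X}(\dot\Phi_s)$, and integrating in $s$ gives Eq.~\eqref{eq:weighted_response}. Summing term by term is essential: it is what makes the constant independent of the spatial extent of the perturbation, since $\dot\Phi_s$ lives everywhere.

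The single-term estimate is the content of Proposition~\ref{prop:appendix_termwise}, and I would establish it as follows. First write
\begin{equation*}
  [\mathcal I_s(B_Z),A_X]=\int W_\gamma(t)\,[\tau^s_t(B_Z),A_X]\,\dd t .
\end{equation*}
Then split the time integral at $|t|\le T$ and $|t|>T$, choosing $T$ proportional to $\dist(Z,X)^{\theta}$ for a suitable $\theta\le 1$. The long-time part is bounded trivially by $2\snorm{B_Z}\snorm{A_X}\int_{|t|>T}|W_\gamma|$, which is $O(T^{-N})$ for every $N$ by the moment bound Eq.~\eqref{eq:finite_filter_moments}. For the short-time part I would use a power-law Lieb--Robinson bound for $F_\eta$-local interactions. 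Here I would use the improved bound of Teufel--Wessel, which reads, schematically,
\begin{equation*}
  \snorm{[\tau_t(B_Z),A_X]}
  \le C\snorm{B_Z}\snorm{A_X}\,|Z|\,\bigl(e^{v|t|}-1\bigr)\,F_\eta'\bigl(\dist(Z,X)\bigr).
\end{equation*}
Its constants depend only on $M$, $d$, $\eta$, which is guaranteed by the uniform background bound Eq.~\eqref{eq:background_bound}. The factor $|Z|$ arises from summing over the sites of $Z$.

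The main obstacle is that the Lieb--Robinson growth in $t$, whether exponential or polynomial, must be beaten by the filter decay while losing at most $\varepsilon$ in the spatial exponent. With a naive exponential light cone, taking $T=c\log\dist$ costs a polynomial factor, and it is not obvious that this factor stays below $\dist^{\varepsilon}$. I would therefore first try the polynomial-in-time form of the Teufel--Wessel bound, namely $(1+|t|)^{a}(1+\dist)^{-\eta}$ up to a slightly worse exponent. Integrating this against $|W_\gamma|$ converges by the finite moments, with no time cutoff needed, and leaves $(1+\dist)^{-(\eta-\varepsilon)}=F_\beta(\dist(Z,X))$. This loss of $\varepsilon$ is exactly why $\beta=\eta-\varepsilon$. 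Two regimes need checking separately. When $\dist(Z,X)$ is small, the trivial bound $2\snorm{B_Z}\snorm{A_X}\,m_0(W_\gamma)$ suffices, since $F_\beta\ge$ a constant there. One must also keep the dependence on $|X|$ out of the constant, or else absorb it into later statements; the factor $|X|$ in the main theorems is supplied when the term-by-term sum is bounded by the tail mass. All constants are uniform in $s$, because the gap $\gamma$ and the norm bound $M$ are.
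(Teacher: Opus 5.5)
Your argument is correct, but its top-level reduction differs from the paper's.

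\textbf{The paper's route.} The paper does not differentiate $\alpha_s(A_X)$. It localizes each generator term with the conditional expectation, $\snorm{K_{s,Z}-\mathbb E_{\Lambda\setminus X}(K_{s,Z})}\le C|Z|\snorm{\dot\Phi_s(Z)}F_\beta(\dist(Z,X))$. From these it assembles the auxiliary generator $G_{s,X^c}=\sum_Z\mathbb E_{\Lambda\setminus X}(K_{s,Z})\in\mA_{\Lambda\setminus X}$, whose unitary $V_s$ commutes with $A_X$. It then bounds $\snorm{U_1-V_1}$ by Duhamel and concludes with $\snorm{\alpha_1(A_X)-A_X}\le 2\snorm{A_X}\snorm{U_1-V_1}$.

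\textbf{Your route.} You use $\partial_s\alpha_s(A_X)=\ii\,\alpha_s([G_s,A_X])$ and bound the commutator term by term. The paper uses this route only for its weaker finite-range estimate in Appendix~\ref{app:finite_range_shell}.

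\textbf{What is shared, and what each buys.} The single-term input is identical in both: the Teufel--Wessel bound with $\min\{|X|,|Z|\}\le|Z|$, integrated against $|W_\gamma|$, with $\sigma$ chosen so that $\sigma\eta>\beta$. Your version is slightly more elementary. It needs no Haar conditional expectation, and it applies the commutator bound directly to $A_X$ rather than to every unitary in $\mA_X$. The paper's version gives an observable-independent statement: $U_1$ lies within $C\int_0^1\mathcal W_{\beta,X}(\dot\Phi_s)\,\dd s$ of a unitary in $\mA_{\Lambda\setminus X}$. It also transfers verbatim to the CAR conditional expectation used for fermions.

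\textbf{A correction on the Lieb--Robinson bound.} The improved bound is not of the schematic form $(e^{v|t|}-1)F_\eta'$, and it is not polynomial in $t$ for all times. Its profile $\mathcal L_{\eta,\sigma}(r,t)$ contains the term $e^{v_*|t|-r^{1-\sigma}}$. A polynomial-in-time envelope valid for all $t$ therefore arises only after taking the minimum with the trivial bound $2\snorm{A_X}\snorm{\dot\Phi_s(Z)}$. Equivalently, you still need the split at $T_r=r^{1-\sigma}/(2\overline v)$, i.e.\ your $\theta=1-\sigma$. The long-time part is then handled by a filter moment $m_N(W_\gamma)$ with $N(1-\sigma)\ge\beta$. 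Your planned split already does this, so drop the remark ``no time cutoff needed'' rather than rely on it.
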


\begin{proof}
Let $\mathbb E_{\Lambda\setminus X}$ be the conditional expectation
from $\mA_\Lambda$ onto $\mA_{\Lambda\setminus X}$.  Applied to the
decomposition in Eq.~\eqref{eq:finite_spectral_flow_generator},
Proposition~\ref{prop:appendix_termwise} gives
\begin{equation}
  \snorm{K_{s,Z}-\mathbb{E}_{\Lambda\setminus X}(K_{s,Z})}
  \le
  C|Z|\snorm{\dot\Phi_s(Z)}
  F_\beta\bigl(\dist(Z,X)\bigr).
  \label{eq:main_termwise_localization}
\end{equation}
Define the auxiliary generator
\begin{equation}
  G_{s,X^c}
  :=
  \sum_{Z\Subset\Lambda}
  \mathbb E_{\Lambda\setminus X}(K_{s,Z}).
  \label{eq:main_auxiliary_generator}
\end{equation}
It belongs to $\mA_{\Lambda\setminus X}$.  Summing
Eq.~\eqref{eq:main_termwise_localization} over $Z$ gives
\begin{equation}
  \snorm{G_s-G_{s,X^c}}
  \le
  C\mathcal W_{\beta,X}(\dot\Phi_s).
  \label{eq:main_generator_difference}
\end{equation}
Let $V_s$ be the unitary generated by $G_{s,X^c}$:
\begin{equation}
  \ii\partial_sV_s=G_{s,X^c}V_s,
  \qquad
  V_0=\mathbf 1.
\end{equation}
Since $V_s\in\mA_{\Lambda\setminus X}$, it commutes with $A_X$ and
$V_s^*A_XV_s=A_X$.  Duhamel's formula and
Eq.~\eqref{eq:main_generator_difference} imply
\begin{align}
  \snorm{U_1-V_1}
  &\le
  \int_0^1\snorm{G_s-G_{s,X^c}}\,\dd s
  \nonumber\\
  &\le
  C\int_0^1\mathcal W_{\beta,X}(\dot\Phi_s)\,\dd s.
  \label{eq:main_duhamel_bound}
\end{align}
Consequently,
\begin{align}
  \snorm{\alpha_1(A_X)-A_X}
  &=
  \snorm{U_1^*A_XU_1-V_1^*A_XV_1}
  \nonumber\\
  &\le
  2\snorm{A_X}\snorm{U_1-V_1},
\end{align}
which proves Eq.~\eqref{eq:weighted_response} after absorbing the factor
$2$ into constant $C$.
\end{proof}

This is the key distinction from Theorem~10 of
Ref.~\cite{teufel2025liebrobinsonboundsautomorphicequivalence}.  Its LPPL
conclusion is organized by distance from a common perturbation region.
Although that theorem also gives an interaction-norm variant for extensive
perturbations, when truncation terms occur throughout the lattice the common
region is lattice-wide and its distance factor yields no useful smallness.
Proposition~\ref{prop:weighted_response} instead retains the termwise
weighted sum, which Corollary~\ref{cor:local_mass_bound} converts into a
per-site tail bound.

For the truncation paths, the detailed spatial distribution in
$\mathcal W_{\beta,X}$ can be compressed into a simpler per-site
quantity.  The following corollary performs this reduction and is the
form used throughout the finite-volume argument.

For an interaction $\Theta$, define its per-site strength by
\begin{equation}
  \mT_{\Theta}
  :=
  \sup_{x\in\Gamma}
  \sum_{Z\ni x}|Z|\snorm{\Theta(Z)}.
  \label{eq:local_mass_theta}
\end{equation}

\begin{corollary}[Per-site perturbation bound]
\label{cor:local_mass_bound}
Under the background and gap hypotheses of
Proposition~\ref{prop:weighted_response}, there exists a finite constant
$C'=C'(d,\eta,\gamma,M)$ such that
\begin{equation}
  \snorm{\alpha_1(A_X)-A_X}
  \le
  C'|X|\snorm{A_X}
  \int_0^1\mT_{\dot\Phi_s}\,\dd s.
  \label{eq:local_mass_response}
\end{equation}
If further every nonzero term of $\dot\Phi_s$ has diameter larger than
$R$ and
\begin{equation}
  \sup_{s,x}
  \sum_{\substack{Z\ni x\\ \diam Z>R}}
  |Z|\snorm{\dot\Phi_s(Z)}
  \le t(R),
  \label{eq:per_site_tail_assumption}
\end{equation}
then the right-hand side of Eq.~\eqref{eq:local_mass_response} is bounded by $C'|X|\snorm{A_X}t(R)$.
\end{corollary}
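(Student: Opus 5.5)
The plan is to start from Proposition~\ref{prop:weighted_response} and bound the weighted strength $\mathcal W_{\beta,X}(\dot\Phi_s)$ by a per-site quantity times $|X|$. The constant is then uniform once $\varepsilon$ is fixed, say $\varepsilon=(\eta-d)/2$, so that $\beta=(\eta+d)/2>d$. This fixes the dependence of $C'$ on $d,\eta,\gamma,M$ alone.

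\textbf{Step 1: reduce to a sum over sites.} For each $Z$ with $\dist(Z,X)=r$, pick a point $x_Z\in Z$ realizing this distance, so that $\dist(x_Z,X)=r$. Assign $Z$ to $x_Z$. Then
\[
  \mathcal W_{\beta,X}(\dot\Phi_s)
  \le
  \sum_{x\in\Lambda}F_\beta\bigl(\dist(x,X)\bigr)
  \sum_{Z\ni x}|Z|\snorm{\dot\Phi_s(Z)}
  \le
  \mT_{\dot\Phi_s}\sum_{x\in\Gamma}F_\beta\bigl(\dist(x,X)\bigr).
\]
Overcounting, by summing over every $x\in Z$ rather than only $x_Z$, is harmless, because $F_\beta$ is nonincreasing and $\dist(x,X)\ge\dist(Z,X)$ fails in the wrong direction. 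So I will instead use the single assigned point, which gives the first inequality exactly.

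\textbf{Step 2: lattice sum.} Bound
\[
  \sum_{x}F_\beta\bigl(\dist(x,X)\bigr)
  \le
  \sum_{y\in X}\sum_{x}F_\beta\bigl(\dist(x,y)\bigr)
  \le
  |X|\sum_{r\ge0}C_d(1+r)^{d-1}(1+r)^{-\beta}.
\]
The last sum is finite precisely because $\beta>d$. Shell counting $|\{x:\dist(x,y)=r\}|\le C_d(1+r)^{d-1}$ holds in $\mathbb Z^d$, or follows from the volume-growth condition \eqref{eq:uniform_finite_volume_growth} after summation by parts. Combining this with Proposition~\ref{prop:weighted_response} and integrating in $s$ gives \eqref{eq:local_mass_response}.

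\textbf{Step 3: the tail clause.} If every nonzero term of $\dot\Phi_s$ has $\diam Z>R$, then $\mT_{\dot\Phi_s}$ equals the left side of \eqref{eq:per_site_tail_assumption} at each $s$, and is therefore at most $t(R)$. Integrating over $s\in[0,1]$ gives $C'|X|\snorm{A_X}t(R)$.

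The main subtlety is Step 1. One must avoid weighting each $Z$ by $|Z|$ twice, which a naive sum over all $x\in Z$ would produce. Assigning each $Z$ to its single closest site keeps the factor $|Z|$ once, and this factor is already present in $\mT$. Everything else is routine counting.
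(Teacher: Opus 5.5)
Your proposal is correct and follows the paper's proof essentially step for step. Like the paper, you fix $\varepsilon=(\eta-d)/2$, anchor each $Z$ at a site realizing $\dist(Z,X)$ so that the inner sum is bounded by $\mT_{\dot\Phi_s}$, bound $F_\beta(\dist(z,X))$ by $\sum_{x\in X}F_\beta(\dist(z,x))$ to obtain $|X|$ times a lattice sum that is finite because $\beta>d$, and then read off the tail clause. (Your Step~1 worry is unfounded: summing over all $x\in Z$ also gives a valid upper bound, since that sum contains the anchor term and all terms are nonnegative, and the inner sum $\sum_{Z\ni x}|Z|\snorm{\dot\Phi_s(Z)}$ is still at most $\mT_{\dot\Phi_s}$, so no double weighting by $|Z|$ occurs.)
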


\begin{proof}
The claim is trivial for $X=\varnothing$, so assume that $X$ is
nonempty.  Fix $s\in[0,1]$ and regard the finite-volume interaction
$\dot\Phi_s$ as zero outside $\Lambda$. 
Set
\[
  \varepsilon_*:=\frac{\eta-d}{2},
  \qquad
  \beta_*:=\eta-\varepsilon_*=\frac{\eta+d}{2},
\]
and apply Proposition~\ref{prop:weighted_response} with these values. For every nonempty $Z$ that
contributes to $\mathcal W_{\beta_*,X}(\dot\Phi_s)$, choose an anchor
$z_Z\in Z$ such that
\begin{equation}
  \dist(z_Z,X)=\dist(Z,X).
\end{equation}
Grouping the interaction terms by their anchors gives
\begin{align}
  \mathcal W_{\beta_*,X}(\dot\Phi_s)
  &=
  \sum_{z\in\Gamma}
  F_{\beta_*}\bigl(\dist(z,X)\bigr)
  \sum_{Z:\,z_Z=z}|Z|\snorm{\dot\Phi_s(Z)}
  \nonumber\\
  &\le
  \mT_{\dot\Phi_s}
  \sum_{z\in\Gamma}F_{\beta_*}\bigl(\dist(z,X)\bigr)
  \nonumber\\
  &\le
  |X|\mT_{\dot\Phi_s}
  \sup_{x\in\Gamma}\sum_{z\in\Gamma}
  F_{\beta_*}\bigl(\dist(z,x)\bigr).
  \label{eq:anchor_sum}
\end{align}
The first inequality follows because every set assigned to the anchor
$z$ contains $z$, so the corresponding inner sum is bounded by
$\mT_{\dot\Phi_s}$.  For the second inequality, the monotonicity of
$F_\beta$ gives
\begin{align}
  F_{\beta_*}\bigl(\dist(z,X)\bigr)
  &=
  \max_{x\in X}F_{\beta_*}\bigl(\dist(z,x)\bigr)
  \nonumber\\
  &\le
  \sum_{x\in X}F_{\beta_*}\bigl(\dist(z,x)\bigr).
\end{align}
Moreover,
\begin{equation}
  c_{\beta_*,d}
  :=
  \sup_{x\in\Gamma}\sum_{z\in\Gamma}
  F_{\beta_*}\bigl(\dist(z,x)\bigr)
  <\infty
  \label{eq:spatial_kernel_sum}
\end{equation}
because $\Gamma=\mathbb Z^d$ and $\beta_*>d$.  Substituting
Eq.~\eqref{eq:anchor_sum} into Proposition~\ref{prop:weighted_response}
therefore yields
\begin{equation}
  \snorm{\alpha_1(A_X)-A_X}
  \le
  C c_{\beta_*,d}|X|\snorm{A_X}
  \int_0^1\mT_{\dot\Phi_s}\,\dd s,
\end{equation}
which proves Eq.~\eqref{eq:local_mass_response} with
$C':=C c_{\beta_*,d}$.

For the final assertion, every term with $\diam Z\le R$ vanishes by
assumption.  Hence, for each $s\in[0,1]$ and $x\in\Gamma$,
\begin{equation}
  \sum_{Z\ni x}|Z|\snorm{\dot\Phi_s(Z)}
  =
  \sum_{\substack{Z\ni x\\ \diam Z>R}}
  |Z|\snorm{\dot\Phi_s(Z)}
  \le t(R)
\end{equation}
by Eq.~\eqref{eq:per_site_tail_assumption}.  Taking the supremum over
$x$ gives $\mT_{\dot\Phi_s}\le t(R)$.  Since the path interval has
length one,
\begin{equation}
  \int_0^1\mT_{\dot\Phi_s}\,\dd s\le t(R),
\end{equation}
which proves the stated bound.
\end{proof}

\section{Finite-volume truncation}
\label{sec:finite_volume}

We now apply Section~\ref{sec:weighted_response} to the finite-volume setting, for both
direct path and shell paths, and then evaluate the resulting bounds for interactions with
power-law, superpolynomial and exponential tails.

\subsection{Direct and shell truncation estimates}
\label{sec:finite_response_proof}
As mentioned, the same response estimate controls both paths, even though their gap assumptions are logically distinct.  
In both cases the background interaction fixes the response constant, whereas the per-site mass of the path derivative fixes the error. This section shows the unique-ground-state calculation, and we put the discussion of isolated low-energy sectors in Appendix~\ref{app:finite_volume_sectors}.

The background interaction norms are uniform along both interpolation
paths:
\begin{equation}
  \sup_{s,R,R'}
  \max\left\{
    \snorm{\Phi_s^{(R,\infty)}}_{F_\eta},
    \snorm{\Phi_s^{(R,R')}}_{F_\eta}
  \right\}
  \le
  \snorm{\Phi}_{F_\eta}.
  \label{eq:finite_uniform_background}
\end{equation}
Thus Proposition~\ref{prop:weighted_response} and
Corollary~\ref{cor:local_mass_bound} apply with constants independent of
$\Lambda$, $R$, and $R'$.

\begin{proof}[Proof of the rank-one part of
Theorem~\ref{thm:main_finite_response}]
Let $\alpha_s$ denote the spectral flow for the relevant path.  Since it
transports the unique ground-state projection, the ground states at
$s=0$ and $s=1$ obey
\begin{equation}
  \omega_1(A)=\omega_0\bigl(\alpha_1(A)\bigr).
  \label{eq:endpoint_state_transport}
\end{equation}
Consequently,
\begin{equation}
  \abs{\omega_1(A_X)-\omega_0(A_X)}
  \le
  \snorm{\alpha_1(A_X)-A_X}.
  \label{eq:state_from_automorphism}
\end{equation}

\noindent\emph{Direct interpolation.}
For this path,
\begin{equation}
  \dot\Phi_s^{(R,\infty)}=\Phi_{>R},
  \qquad
  \mT_{\Phi_{>R}}=\tailmass(R).
\end{equation}
Equation~\eqref{eq:local_mass_response} therefore gives
\begin{equation}
  \abs{\omega_\Lambda(A_X)-\omega_{\Lambda,R}(A_X)}
  \le
  C|X|\snorm{A_X}\tailmass(R).
\end{equation}
Taking the supremum over self-adjoint $A_X$ with $\snorm{A_X}\le1$
proves Eq.~\eqref{eq:main_direct_finite}.

\noindent\emph{Two-cutoff interpolation.}
For a single path from $R$ to $R'$,
\begin{equation}
  \dot\Phi_s^{(R,R')}=\Phi_{(R,R']},
  \qquad
  \mT_{\Phi_{(R,R']}}
  \le
  \tailmass(R).
\end{equation}
This step eliminates dependence on the upper shell radius $R'$.  Hence
\begin{equation}
  \abs{\omega_{\Lambda,R'}(A_X)-\omega_{\Lambda,R}(A_X)}
  \le
  C|X|\snorm{A_X}\tailmass(R),
\end{equation}
and hence Eq.~\eqref{eq:main_shell_step}.
\end{proof}

For the case of low-energy sectors, the proof uses the same spectral flow and the same
operator-norm estimate, but retains the low-energy-sector projections
instead of taking a ground-state expectation.  The resulting compressed-operator
identity, together with the matched-state consequence, is proved in
Appendix~\ref{app:finite_volume_sectors}.

The above direct path estimate immediately implies
\begin{equation}
  d_X(\omega_\Lambda,\omega_{\Lambda,R})
  \le
  C|X|\snorm{\Phi}_{F_\eta}(1+R)^{-\eta}.
  \label{eq:finite_direct_power}
\end{equation}
This is sharper, for the present observable-level objective, than first
controlling a complete interaction norm of the spectral-flow generator
through Proposition~8 of
Ref.~\cite{teufel2025liebrobinsonboundsautomorphicequivalence}.  That
proposition gives the stronger structural conclusion that the generator
itself is polynomially localized, but its use here would impose stronger
input decay and an avoidable loss in the truncation exponent.
\medskip
\phantomsection
\label{subsec:successive-shell-finite-volume}
\noindent\emph{Successive-shell concatenation.}
The preceding result is a one-step comparison between two finite cutoffs;
it is not yet a shell construction.  If a direct-path gap is available,
Eq.~\eqref{eq:finite_direct_power} already gives the full finite-volume
truncation estimate.  If instead gaps are known only along a compatible
sequence $R_k\to R_{k+1}$, Eq.~\eqref{eq:main_shell_step} can be applied
successively and summed by the triangle inequality.  This changes the
adiabatic-connectivity assumption, not the decay rate.  A geometric
sequence makes the accumulated tail masses summable without changing
their decay class.

\begin{corollary}[Finite-volume successive-shell comparison]
\label{cor:finite_shell}
Fix $R>1$ and $\lambda>1$, and let $R_k=\lambda^kR$.  For every finite volume
$\Lambda$ under consideration, let $N_\Lambda$ be the first index for
which $R_{N_\Lambda}\ge\diam\Lambda$.  Assume that the shell family
\begin{equation}
  \begin{aligned}
    \mathcal S_{R,\lambda}
    =\bigl\{(\Lambda,R_k,R_{k+1}):\,
    &\Lambda\text{ is under consideration},\\
    &0\le k<N_\Lambda\bigr\}.
  \end{aligned}
  \label{eq:finite_shell_family}
\end{equation}
satisfies Assumption~\ref{assumption:adiabaticity-shell-path}, with every
prescribed sector being the rank-one projection onto the unique ground
state.  Then, for every such $\Lambda$,
\begin{equation}
  d_X(\omega_\Lambda,\omega_{\Lambda,R})
  \le
  C|X|\sum_{k=0}^{N_\Lambda-1}\tailmass(R_k).
  \label{eq:finite_shell_general}
\end{equation}
The sum is understood as zero when $N_\Lambda=0$.
If $\snorm{\Phi}_{F_\eta}<\infty$ with $\eta>d$, then
\begin{equation}
  d_X(\omega_\Lambda,\omega_{\Lambda,R})
  \le
  \frac{2^\eta C}{1-\lambda^{-\eta}}
  |X|\snorm{\Phi}_{F_\eta}(1+R)^{-\eta}.
  \label{eq:finite_shell_power}
\end{equation}
\end{corollary}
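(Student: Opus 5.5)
The plan is to telescope along the geometric scale sequence and apply the one-shell estimate of Theorem~\ref{thm:main_finite_response}, Eq.~\eqref{eq:main_shell_step}, to each step.

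\emph{Step 1 (the last cutoff is the full Hamiltonian).} Fix $\Lambda$. If $N_\Lambda=0$, then $R\ge\diam\Lambda$. Every $Z\subset\Lambda$ then has $\diam Z\le\diam\Lambda\le R$, so $H_{\Lambda,R}=H_\Lambda$ and the left-hand side vanishes. In general $R_{N_\Lambda}\ge\diam\Lambda$, so $H_{\Lambda,R_{N_\Lambda}}=H_\Lambda$ and hence $\omega_{\Lambda,R_{N_\Lambda}}=\omega_\Lambda$. Because the unique-ground-state projections at each common cutoff are forced (rank one), the compatibility clause of Assumption~\ref{assumption:adiabaticity-shell-path} holds automatically.

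\emph{Step 2 (telescoping).} By the triangle inequality for the seminorm $d_X$,
\[
  d_X(\omega_\Lambda,\omega_{\Lambda,R})
  \le\sum_{k=0}^{N_\Lambda-1}d_X(\omega_{\Lambda,R_{k+1}},\omega_{\Lambda,R_k}).
\]
Each triple $(\Lambda,R_k,R_{k+1})$ lies in $\mathcal S_{R,\lambda}$, so Eq.~\eqref{eq:main_shell_step} bounds the $k$-th term by $C|X|\tailmass(R_k)$. Here $C$ depends only on $d,\eta,\gamma$ and $\snorm{\Phi}_{F_\eta}$, by Eq.~\eqref{eq:finite_uniform_background}. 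It is therefore the same for all $k$ and $\Lambda$, and summing gives Eq.~\eqref{eq:finite_shell_general}.

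\emph{Step 3 (power-law rate).} Insert Eq.~\eqref{eq:tail_mass_power}, which gives $\tailmass(R_k)\le\snorm{\Phi}_{F_\eta}(1+\lambda^kR)^{-\eta}$. Using $1+\lambda^kR\ge\lambda^kR$ and $R\ge(1+R)/2$ (valid since $R>1$), we get
\[
  (1+\lambda^kR)^{-\eta}\le 2^\eta\lambda^{-k\eta}(1+R)^{-\eta}.
\]
Extending the sum to infinity, the geometric series gives $\sum_k\lambda^{-k\eta}=(1-\lambda^{-\eta})^{-1}$, which yields Eq.~\eqref{eq:finite_shell_power}.

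\emph{Main obstacle.} The only point needing care is the uniformity of $C$ across the shells, and this is guaranteed by Eq.~\eqref{eq:finite_uniform_background} together with the common gap $\gamma$. The remaining steps are bookkeeping.
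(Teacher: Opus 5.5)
Your proposal is correct and follows the paper's proof: you telescope the one-shell bound \eqref{eq:main_shell_step} with the triangle inequality, stop once $R_{N_\Lambda}\ge\diam\Lambda$ makes the truncated and full Hamiltonians coincide, and then use \eqref{eq:tail_mass_power} with $1+\lambda^kR\ge\frac12\lambda^k(1+R)$ and a geometric series. Your remarks on the $N_\Lambda=0$ case and on the automatic compatibility of rank-one projections just make explicit bookkeeping that the paper leaves implicit.
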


\begin{proof}
Apply Eq.~\eqref{eq:main_shell_step} to every adjacent pair
$(R_k,R_{k+1})$ and use the triangle inequality.  Once
$R_k\ge\diam\Lambda$, the truncated and full Hamiltonians coincide, so
the sum terminates.  Since $R>1$,
$1+\lambda^kR\ge\frac12\lambda^k(1+R)$.  Equation
\eqref{eq:tail_mass_power} and the resulting geometric series give
Eq.~\eqref{eq:finite_shell_power}.
\end{proof}

Section~\ref{sec:coherent_shell_limit} uses the same concatenation to
construct a convergent infinite-volume branch.

\subsection{Rapidly decaying interactions}\label{subsec:rapid-decay-interactions}

The preceding results are expressed through the tail mass
$\tailmass(R)$.  We now evaluate it for superpolynomial and
exponentially weighted interactions.  For the optional shell
comparison, geometric summation preserves the same decay class.

\begin{corollary}[Finite-volume superpolynomial and exponential tails]
\label{cor:finite_rapid}
Fix a background exponent $\eta_0>d$, $R>1$, and $\lambda>1$.
Let $R_k=\lambda^kR$ and, for every finite volume $\Lambda$ under
consideration, let $N_\Lambda$ be the first index for which
$R_{N_\Lambda}\ge\diam\Lambda$.  Assume either that the direct family
$\mathcal D_R:=\{(\Lambda,R):\Lambda\text{ is under consideration}\}$
satisfies the finite-volume Assumption~\ref{assumption:adiabaticity-direct-path}, or, in the shell
case, that the family of shells $\mathcal S_{R,\lambda}$ in
Eq.~\eqref{eq:finite_shell_family} satisfies
Assumption~\ref{assumption:adiabaticity-shell-path}.

Let $\mathcal U_{\Lambda,R}:=U_\Lambda^{(R,\infty)}$ in the direct
case.  In the shell case, write
$U_k:=U_\Lambda^{(R_k,R_{k+1})}$, set $V_0:=\mathbf1$ and
$V_{k+1}:=U_kV_k$, and define
$\mathcal U_{\Lambda,R}:=V_{N_\Lambda}$.  Compatibility then
identifies $P_{\Lambda,R_{N_\Lambda}}=P_\Lambda$.  Relative to
$\mathcal U_{\Lambda,R}$, denote the compressed-observable error
between $P_{\Lambda,R}$ and $P_\Lambda$ by
$\locerr_{\Lambda,R}(A_X)$.
If $\Phi$ is superpolynomially decaying, then, for every $m>0$,
\begin{equation}
  \locerr_{\Lambda,R}(A_X)
  \le
  C_m|X|\snorm{A_X}(1+R)^{-m}.
  \label{eq:finite_superpoly}
\end{equation}
If $\snorm{\Phi}_{f_\mu}<\infty$, then
\begin{equation}
  \locerr_{\Lambda,R}(A_X)
  \le
  C_\mu|X|\snorm{A_X}e^{-\mu R}\snorm{\Phi}_{f_\mu}.
  \label{eq:finite_exponential}
\end{equation}
Here $C_m$ and $C_\mu$ are independent of $\Lambda$, $R$, the sector
rank, and the internal band width. If the sectors are rank-one ground-state
projections, the same two bounds hold with the left-hand side replaced
by $d_X(\omega_\Lambda,\omega_{\Lambda,R})$ and with
$\snorm{A_X}$ omitted.
\end{corollary}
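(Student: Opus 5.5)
The plan is to reduce both cases to the tail-mass bounds already established and then insert the decay estimates \eqref{eq:tail_mass_power} and \eqref{eq:tail_mass_exp}. Two preliminary points fix the background constants. First, a superpolynomially decaying $\Phi$ has $\snorm{\Phi}_{F_{\eta_0}}<\infty$ by definition. Second, if $\snorm{\Phi}_{f_\mu}<\infty$, then Eq.~\eqref{eq:exponential_controls_polynomial} gives $\snorm{\Phi}_{F_{\eta_0}}\le c_{\eta_0,\mu}\snorm{\Phi}_{f_\mu}<\infty$. So in either regime Theorem~\ref{thm:main_finite_response} applies with background exponent $\eta_0$. Its constant $C$ then depends only on $d,\eta_0,\gamma$ and $\snorm{\Phi}_{F_{\eta_0}}$. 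It does not depend on $\Lambda$, $R$, $R'$, the sector rank or the band width.

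\emph{Direct case.} Eq.~\eqref{eq:main_direct_sector} gives $\locerr_{\Lambda,R}(A_X)\le C|X|\snorm{A_X}\tailmass(R)$.
\begin{itemize}
\item In the superpolynomial case, use Eq.~\eqref{eq:tail_mass_power} with $\eta=m$ (for $m>d$; smaller $m$ follow since $R>1$). This gives $\tailmass(R)\le(1+R)^{-m}\snorm{\Phi}_{F_m}$, hence $C_m:=C\snorm{\Phi}_{F_m}$.
\item In the exponential case, Eq.~\eqref{eq:tail_mass_exp} gives the bound directly.
\end{itemize}
Note that $C$ uses only the fixed background $\eta_0$, and $m$ enters only through the tail-mass norm.

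\emph{Shell case.} The task is to telescope the compressed errors along the composite unitary $V_{N_\Lambda}$. Set $E_k:=V_k^*P_{\Lambda,R_k}A_XP_{\Lambda,R_k}V_k-P_{\Lambda,R}A_XP_{\Lambda,R}$.
\begin{itemize}
\item Compatibility in Assumption~\ref{assumption:adiabaticity-shell-path} gives $P_{\Lambda,R_{k+1}}=U_kP_{\Lambda,R_k}U_k^*$ with the same projection at each common cutoff.
\item Then $E_{k+1}-E_k=V_k^*\bigl(U_k^*P_{\Lambda,R_{k+1}}A_XP_{\Lambda,R_{k+1}}U_k-P_{\Lambda,R_k}A_XP_{\Lambda,R_k}\bigr)V_k$. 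By unitary invariance of the norm and Eq.~\eqref{eq:main_shell_sector}, its norm is at most $C|X|\snorm{A_X}\tailmass(R_k)$.
\item Once $R_{N_\Lambda}\ge\diam\Lambda$, the cutoff Hamiltonian equals $H_\Lambda$, so $P_{\Lambda,R_{N_\Lambda}}=P_\Lambda$. Summing gives $\locerr_{\Lambda,R}(A_X)\le C|X|\snorm{A_X}\sum_{k\ge0}\tailmass(\lambda^kR)$.
\end{itemize}

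It remains to sum the series, and the main care is needed here to keep the full rate $\mu$ and constants uniform in $R$.
\begin{itemize}
\item Power law: as in Corollary~\ref{cor:finite_shell}, $1+\lambda^kR\ge\frac12\lambda^k(1+R)$ gives $\sum_k(1+\lambda^kR)^{-m}\le 2^m(1-\lambda^{-m})^{-1}(1+R)^{-m}$.
\item Exponential: as in Eq.~\eqref{eq:infinite_shell_exponential_sum}, $\lambda^kR\ge R+k(\lambda-1)$ for $R>1$ gives $\sum_k e^{-\mu\lambda^kR}\le e^{-\mu R}/(1-e^{-\mu(\lambda-1)})$.
\end{itemize}
These constants depend on $\lambda$ and $m$ or $\mu$ but not on $\Lambda$ or $R$.

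\emph{Rank-one sectors.} For rank-one sectors, $P_{\Lambda,R}=|\psi_R\rangle\langle\psi_R|$ and $P_\Lambda=\mathcal U|\psi_R\rangle\langle\psi_R|\mathcal U^*$. Hence $\locerr=\abs{\omega_\Lambda(A_X)-\omega_{\Lambda,R}(A_X)}$, up to the phase, which cancels. Taking the supremum over self-adjoint $A_X$ with $\snorm{A_X}\le1$ yields the $d_X$ statements.
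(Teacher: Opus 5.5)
Your proposal is correct and follows essentially the same route as the paper. You fix the response constant at the background exponent $\eta_0$, using Eq.~\eqref{eq:exponential_controls_polynomial} in the exponential case, telescope the compressed errors along $V_k$ in the shell case, and sum the tail masses with the same two geometric estimates. Your explicit rank-one computation of the compressed error as $\abs{\omega_\Lambda(A_X)-\omega_{\Lambda,R}(A_X)}$ is a more detailed version of the paper's closing remark.
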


\begin{proof}
Set
\begin{equation}
  M_0:=\snorm{\Phi}_{F_{\eta_0}},
  \qquad
  C_{\mathrm{resp}}
  :=C_{\mathrm{resp}}(d,\eta_0,\gamma,M_0),
  \label{eq:finite_rapid_response_constant}
\end{equation}
where $C_{\mathrm{resp}}$ is the response constant in
Theorem~\ref{thm:main_finite_response}, applied with the fixed exponent
$\eta_0$.  Once $\eta_0$ and $M_0$ are fixed, it is independent of
$m$, $\mu$, $R$, and $\Lambda$.
Equation~\eqref{eq:main_direct_sector} gives
\begin{equation}
  \locerr_{\Lambda,R}(A_X)
  \le
  C_{\mathrm{resp}}|X|\snorm{A_X}\tailmass(R)
  \label{eq:finite_rapid_direct_input}
\end{equation}
in the direct case.

For the shell construction, spectral transport and compatibility give
$P_{\Lambda,R_{k+1}}=U_kP_{\Lambda,R_k}U_k^*$, and telescoping yields
\begin{align}
  &\mathcal U_{\Lambda,R}^*P_\Lambda A_XP_\Lambda
    \mathcal U_{\Lambda,R}
    -P_{\Lambda,R}A_XP_{\Lambda,R}
  \nonumber\\
  &\quad=
  \sum_{k=0}^{N_\Lambda-1}V_k^*
  \Bigl[
    U_k^*P_{\Lambda,R_{k+1}}A_XP_{\Lambda,R_{k+1}}U_k
  \nonumber\\
  &\qquad\qquad
    -P_{\Lambda,R_k}A_XP_{\Lambda,R_k}
  \Bigr]V_k.
  \label{eq:finite_rapid_sector_telescoping}
\end{align}
Applying Eq.~\eqref{eq:main_shell_sector} term by term therefore gives
\begin{equation}
  \locerr_{\Lambda,R}(A_X)
  \le
  C_{\mathrm{resp}}|X|\snorm{A_X}
  \sum_{k=0}^{N_\Lambda-1}\tailmass(R_k).
  \label{eq:finite_rapid_shell_input}
\end{equation}

For a superpolynomial interaction, set
$M_m:=\snorm{\Phi}_{F_m}$.  Equation~\eqref{eq:tail_mass_power} gives
$\tailmass(R_k)\le M_m(1+R_k)^{-m}$.  Moreover,
\begin{equation}
  \sum_{k\ge0}(1+\lambda^kR)^{-m}
  \le
  \frac{2^m}{1-\lambda^{-m}}(1+R)^{-m},
  \qquad R>1,
  \label{eq:finite_superpoly_geometric_sum}
\end{equation}
where we used
$1+\lambda^kR\ge\frac12\lambda^k(1+R)$.  Since the geometric factor is larger than
one, Eqs.~\eqref{eq:finite_rapid_direct_input} and
\eqref{eq:finite_rapid_shell_input} prove Eq.~\eqref{eq:finite_superpoly}
in both cases with
\begin{equation}
  C_m
  =
  C_{\mathrm{resp}}M_m
  \frac{2^m}{1-\lambda^{-m}}.
  \label{eq:finite_superpoly_constant}
\end{equation}

For the exponential class, Eq.~\eqref{eq:exponential_controls_polynomial}
gives
\begin{equation}
  M_0
  \le
  c_{\eta_0,\mu}\snorm{\Phi}_{f_\mu},
  \qquad
  c_{\eta_0,\mu}
  :=\sup_{r\ge0}(1+r)^{\eta_0}e^{-\mu r}<\infty.
  \label{eq:finite_exponential_background_constant}
\end{equation}
Thus the required background norm is finite, while
Eq.~\eqref{eq:tail_mass_exp} gives
$\tailmass(R_k)\le e^{-\mu R_k}\snorm{\Phi}_{f_\mu}$.  Along the shell
path,
\begin{equation}
  \sum_{k\ge0}e^{-\mu\lambda^kR}
  \le
  \frac{e^{-\mu R}}{1-e^{-\mu(\lambda-1)}},
  \qquad R>1,
  \label{eq:finite_exponential_geometric_sum}
\end{equation}
because $\lambda^kR\ge R+k(\lambda-1)$.  Thus geometric summation
preserves the exponent $\mu$, and the direct and shell cases are both
covered by
\begin{equation}
  C_\mu
  =
  \frac{C_{\mathrm{resp}}}{1-e^{-\mu(\lambda-1)}}.
  \label{eq:finite_exponential_constant}
\end{equation}
Thus, in the shell case, $C_m$ depends on $d$, $\eta_0$, $\gamma$,
$M_0$, $m$, $M_m$, and $\lambda$, whereas $C_\mu$ depends on $d$,
$\eta_0$, $\gamma$, $M_0$, $\mu$, and $\lambda$.
For the direct path alone, the geometric factors in
Eqs.~\eqref{eq:finite_superpoly_constant} and
\eqref{eq:finite_exponential_constant} may be omitted.  Finally, when
the projections are rank one, spectral flow matches the unique ground
states; taking the supremum over self-adjoint $A_X$ with
$\snorm{A_X}\le1$ gives the stated distinguishability bounds.
\end{proof}
\section{Infinite-volume truncation}
\label{sec:infinite_volume}

We now pass to two complementary infinite-volume constructions.
Section~\ref{sec:direct_superpoly} discusses a prescribed differentiable
ground-state branch by intrinsic infinite-volume spectral flow. Because an intrinsic infinite-volume automorphic-equivalence theorem for polynomially decaying interactions in the GNS framework used here
is not presently available from known literature, we do not obtain an intrinsic direct-path
result for power-law interactions. 
Hence, we take a new route of connecting finite-volume and infinite-volume setting in
Sections~\ref{sec:thermodynamic_shell_flow} and
\ref{sec:coherent_shell_limit}. We first take the thermodynamic limit of each finite-range shell flow and then assemble a compatible sequence of cutoff states. Summability of their successive changes yields the
full-interaction limit.  Appendix~\ref{app:intrinsic_infinite_volume}
explains why we do not replace it by an intrinsic infinite-volume shell
hypothesis.

\subsection{Direct superpolynomial interpolation}
\label{sec:direct_superpoly}

We work here with superpolynomially decaying interactions in infinite
volume.  The proof combines automorphic equivalence along the direct path
with a uniform inverse-Liouvillian estimate that converts the discarded
tail into a bound on the local generator.  This proves
Theorem~\ref{thm:main_infinite_direct}; we then specialize the estimate to
exponential tails.

By Assumption~\ref{assumption:direct-path-infinite}, the states $\omega_s^{(R)}$ and the direct interaction path form the
differentiable path in Eq.~\eqref{eq:differentiable_path}.  The
interaction seminorms required below are automatically uniform:
\begin{equation}
  \sup_{R,s}\snorm{\Phi_s^{(R,\infty)}}_{F_\eta}
  \le\snorm{\Phi}_{F_\eta}
  \label{eq:uniform_background_superpoly}
\end{equation}
for every fixed $\eta$.

The automorphic-equivalence theorem of
Ref.~\cite{becker2025automorphicequivalencegappedphases} gives a cocycle satisfying
\begin{equation}
  \omega_s^{(R)}=\omega_0^{(R)}\circ\alpha_{0,s}^{(R)},
\end{equation}
whose generator is the inverse-Liouvillian interaction
\begin{equation}
  \Psi_s^{(R)}=-\mathcal I_s^{(R)}(\Phi_{>R}).
\end{equation}
For every local $A$, the generator equation holds in norm:
\begin{equation}
  \frac{\dd}{\dd s}\alpha_{0,s}^{(R)}(A)
  =
  \alpha_{0,s}^{(R)}
  \bigl(\ii\mL_{\Psi_s^{(R)}}(A)\bigr).
  \label{eq:infinite_cocycle_generator}
\end{equation}
Appendix~\ref{app:inverse_liouvillian} extracts from Lemma~B.6 of that
reference the following uniform estimate: for some finite $\eta_0$,
\begin{equation}
  \sup_{R,s}\snorm{\Psi_s^{(R)}}_{F_0}
  \le
  C\snorm{\Phi_{>R}}_{F_{\eta_0}},
  \label{eq:uniform_inverse_liouvillian}
\end{equation}
where $C$ depends on the common gap and finitely many uniform background
seminorms, but not on $R$.  The appendix proves that one may take
\begin{equation}
  \eta_0=5d+5,
  \label{eq:eta_zero_choice}
\end{equation}
with $C$ depending only on $d$, $\gamma$, and
$\snorm{\Phi}_{F_{17d+12}}$.

\begin{proof}[Proof of Theorem~\ref{thm:main_infinite_direct}]
For a local observable $A_X$, state transport and
Eq.~\eqref{eq:infinite_cocycle_generator} give
\begin{equation}
  \frac{\dd}{\dd s}\omega_s^{(R)}(A_X)
  =
  \omega_s^{(R)}
  \bigl(\ii\mL_{\Psi_s^{(R)}}(A_X)\bigr).
  \label{eq:infinite_state_derivative}
\end{equation}
The definition of the Liouvillian and the fact that every $Z$ meeting
$X$ contains at least one $x\in X$ imply
\begin{align}
  \snorm{\mL_{\Psi_s^{(R)}}(A_X)}
  &\le
  2\snorm{A_X}
  \sum_{Z:\,Z\cap X\ne\varnothing}
  \snorm{\Psi_s^{(R)}(Z)}
  \nonumber\\
  &\le
  2|X|\snorm{A_X}
  \snorm{\Psi_s^{(R)}}_{F_0}.
  \label{eq:infinite_local_derivation_bound}
\end{align}
Equations~\eqref{eq:uniform_inverse_liouvillian},
\eqref{eq:infinite_state_derivative}, and
\eqref{eq:infinite_local_derivation_bound} therefore give
\begin{align}
  \abs{\frac{\dd}{\dd s}\omega_s^{(R)}(A_X)}
  &\le
  C|X|\snorm{A_X}
  \snorm{\Phi_{>R}}_{F_{\eta_0}}.
  \label{eq:infinite_direct_derivative}
\end{align}
For every $m>0$, superpolynomial decay implies
\begin{equation}
  \snorm{\Phi_{>R}}_{F_{\eta_0}}
  \le
  (1+R)^{-m}\snorm{\Phi}_{F_{\eta_0+m}}.
\end{equation}
Integrating Eq.~\eqref{eq:infinite_direct_derivative} over $s\in[0,1]$
and using $\omega_0^{(R)}=\omega_R$ and
$\omega_1^{(R)}=\omega$, then taking the supremum over self-adjoint $A_X$ with
$\snorm{A_X}\le1$ proves Eq.~\eqref{eq:main_infinite_superpoly}.  For
each fixed $m$, the resulting constant is independent of $R$; it may
depend on $m$ through $\snorm{\Phi}_{F_{\eta_0+m}}$.
\end{proof}

Under the path assumptions of Theorem~\ref{thm:main_infinite_direct},
exponential diameter summability gives a more precise version of the
same argument.  Set $q_0=\eta_0=5d+5$.  If $r>R$, then
\begin{align}
  (1+r)^{q_0}e^{-\mu r}
  &\le
  (1+R)^{q_0}e^{-\mu R}
  (1+r-R)^{q_0}e^{-\mu(r-R)}
  \nonumber\\
  &\le
  c_{q_0,\mu}(1+R)^{q_0}e^{-\mu R}.
  \label{eq:exponential_tail_weight_pointwise}
\end{align}
Consequently,
\begin{align}
  \snorm{\Phi_{>R}}_{F_{q_0}}
  &\le
  c_{q_0,\mu}(1+R)^{q_0}e^{-\mu R}
  \snorm{\Phi}_{f_\mu},
  \label{eq:infinite_direct_exp_tail_norm}\\
  d_X(\omega,\omega_R)
  &\le
  C_{\mu,d}|X|(1+R)^{5d+5}e^{-\mu R}
  \snorm{\Phi}_{f_\mu}.
  \label{eq:infinite_direct_exp_prefactor}
\end{align}
Since
$\sup_{R\ge0}(1+R)^{5d+5}e^{-(\mu-\kappa)R}<\infty$
for every $0<\kappa<\mu$, Eq.~\eqref{eq:infinite_direct_exp_prefactor}
implies
\begin{equation}
  d_X(\omega,\omega_R)
  \le
  C_\kappa|X|e^{-\kappa R}\snorm{\Phi}_{f_\mu},
  \qquad 0<\kappa<\mu.
  \label{eq:infinite_direct_exponential}
\end{equation}
Here the strict subrate \(\kappa<\mu\) results from absorbing the polynomial prefactor in Eq.~\eqref{eq:infinite_direct_exp_prefactor} into the exponential.

This result is a quantitative application of
Ref.~\cite{becker2025automorphicequivalencegappedphases}: that work establishes the
infinite-volume automorphic transport, while the explicit dependence on
$R$ follows from applying its inverse Liouvillian to the tail $\Phi_{>R}$.

\subsection{Thermodynamic limit of a shell flow}
\label{sec:thermodynamic_shell_flow}

Passing the finite-volume shell-path result to infinite volume requires control of
both its endpoint states and its spectral-flow maps.  The first lemma
identifies weak-$*$ limits of finite-range finite-volume ground states;
the second constructs the thermodynamic-limit shell flow automorphism and shows that it
transports those limits.  Section~\ref{sec:coherent_shell_limit} then
iterates these fixed-shell statements.

\begin{lemma}[Thermodynamic limits of finite-range ground states]
\label{lem:finite_range_ground_state_limit}
Let $\Xi$ be a finite-range interaction and let
$\Lambda_n\nearrow\Gamma$ be an exhaustion.  If $\nu_n$ is a ground
state of
\begin{equation}
  H_{\Lambda_n}^{\Xi}
  =
  \sum_{Z\Subset\Lambda_n}\Xi(Z),
\end{equation}
then every weak-$*$ limit point of $(\nu_n)_n$ is an
infinite-volume ground state of $\Xi$ in the sense of
Eq.~\eqref{eq:ground_state_condition}.
\end{lemma}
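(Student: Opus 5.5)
The argument is the standard "local ground states are stable under weak-$*$ limits" argument; finite range supplies the only locality input needed. Let $\nu$ be a weak-$*$ limit point, so $\nu_{n_j}\to\nu$ along a subsequence. Fix a local observable $A\in\mA_Y$ with $Y$ finite. Since $\Xi$ has range at most some $r$, only sets $Z$ with $Z\cap Y\ne\varnothing$ and $\diam Z\le r$ contribute to $\mL_\Xi(A)$. We assume, as is implicit in the finite-range notion, that each site lies in finitely many such $Z$. Then $\mL_\Xi(A)=\sum_{Z\cap Y\neq\varnothing}[\Xi(Z),A]$ is a finite sum, and it is supported in the $r$-neighbourhood $Y_r$ of $Y$. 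In particular, $A^*\mL_\Xi(A)$ is a fixed local observable.

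The key identity is a finite-volume version of this derivation. Choose $n$ large enough that $Y_r\subset\Lambda_n$, which holds for all $n\ge n_0(Y)$. Every $Z\Subset\Lambda_n$ with $[\Xi(Z),A]\ne0$ must meet $Y$, and every $Z$ meeting $Y$ with $\Xi(Z)\neq0$ lies in $Y_r\subset\Lambda_n$. Hence
\[
[H^{\Xi}_{\Lambda_n},A]=\mL_\Xi(A)\qquad (n\ge n_0).
\]
Now let $\psi_n$ be a ground-state vector (or let $\nu_n$ be any state supported on the ground eigenspace; the mixed case follows by convexity). Because $(H^{\Xi}_{\Lambda_n}-E_n)\psi_n=0$ and $H^{\Xi}_{\Lambda_n}\ge E_n$,
\[
\nu_n\bigl(A^*[H^{\Xi}_{\Lambda_n},A]\bigr)=\langle A\psi_n,(H^{\Xi}_{\Lambda_n}-E_n)A\psi_n\rangle\ge0.
\]
Thus $\nu_n(A^*\mL_\Xi(A))\ge0$ for all $n\ge n_0$.

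Finally we pass to the limit. The states $\nu_n$ are extended to $\mA$ by tensoring with the fixed reference state, as described earlier. On the fixed local observable $A^*\mL_\Xi(A)\in\mA_{Y_r}$, however, they agree with the finite-volume ground states once $\Lambda_n\supset Y_r$. Weak-$*$ convergence along the subsequence therefore gives $\nu(A^*\mL_\Xi(A))=\lim_j\nu_{n_j}(A^*\mL_\Xi(A))\ge0$, which is Eq.~\eqref{eq:ground_state_condition}.

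The main obstacle is only a bookkeeping point, not a real difficulty: the boundary terms. The finite-volume commutator coincides with the infinite-volume derivation only once $\Lambda_n$ contains the $r$-neighbourhood of $\supp A$. With an exhaustion this happens eventually, and finite range is exactly what makes $n_0$ depend only on $Y$. If the ground-state condition is required on the full domain of the closed derivation rather than on local $A$, I would add a closure step. Local observables form a core, and both sides of the inequality are continuous in the graph norm, so the inequality extends to the whole domain.
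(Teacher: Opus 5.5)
Your proposal is correct and follows the paper's proof essentially step for step. The three steps match: the exact identity $[H^{\Xi}_{\Lambda_n},A]=\mL_\Xi(A)$ once $\Lambda_n$ contains the $r$-neighbourhood of $\supp A$, positivity of $\nu_n(A^*[H^{\Xi}_{\Lambda_n},A])$ from $H^{\Xi}_{\Lambda_n}-E_n\ge0$ on the ground space, and the weak-$*$ limit on the fixed local observable $A^*\mL_\Xi(A)$. Your extra finiteness assumption holds automatically on $\mathbb Z^d$, and the closure step is unnecessary because Eq.~\eqref{eq:ground_state_condition} is only required for local $A$.
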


\begin{proof}
Let $r<\infty$ be an interaction range for $\Xi$ and fix
$A\in\mA_X$.  The $r$-neighborhood
\begin{equation}
  X^{(r)}
  =
  \{x\in\Gamma:\dist(x,X)\le r\}
\end{equation}
is finite.  Hence there is $n_0$ such that
$X^{(r)}\Subset\Lambda_n$ for all $n\ge n_0$.  Every interaction set
$Z$ that meets $X$ and satisfies $\Xi(Z)\ne0$ is then contained in
$\Lambda_n$, and all terms disjoint from $X$ commute with $A$.  Thus
\begin{equation}
  [H_{\Lambda_n}^{\Xi},A]
  =
  \mL_\Xi(A),
  \qquad n\ge n_0.
  \label{eq:finite_range_exact_liouvillian}
\end{equation}
If $E_n$ is the lowest eigenvalue of $H_{\Lambda_n}^{\Xi}$ and
$\rho_n$ is the density matrix of $\nu_n$, then
$\rho_n(H_{\Lambda_n}^{\Xi}-E_n)=0$ and
\begin{align}
  \nu_n\bigl(A^*[H_{\Lambda_n}^{\Xi},A]\bigr)
  &=
  \operatorname{Tr}\!\left(
    \rho_n A^*(H_{\Lambda_n}^{\Xi}-E_n)A
  \right)
  \nonumber\\
  &\ge0.
  \label{eq:finite_volume_ground_positivity}
\end{align}
Let $\nu_{n_j}\to\nu$ weak-$*$.  Combining
Eqs.~\eqref{eq:finite_range_exact_liouvillian} and
\eqref{eq:finite_volume_ground_positivity} and then taking
$j\to\infty$ gives
$\nu(A^*\mL_\Xi(A))\ge0$.  Since $A$ was arbitrary, $\nu$ is a ground
state of $\Xi$.
\end{proof}

\begin{lemma}[Thermodynamic-limit shell flow automorphism]
\label{lem:limiting_shell_flow}
Fix $1<R<R'$ and let $\Lambda_n\nearrow\Gamma$ be an exhaustion.  Assume
that the family $\{(\Lambda_n,R,R'):n\ge1\}$ satisfies
Assumption~\ref{assumption:adiabaticity-shell-path}, with every prescribed
sector being the rank-one projection onto the unique ground state.  The finite-volume shell flows
$\alpha_{n,s}^{(R,R')}$ converge in norm on every local observable,
uniformly for $s\in[0,1]$, to an automorphism
$\alpha_s^{(R,R')}$ of $\mA$.  Moreover,
\begin{equation}
  \snorm{\alpha_1^{(R,R')}(A_X)-A_X}
  \le
  C|X|\snorm{A_X}\tailmass(R).
  \label{eq:infinite_shell_automorphism_bound}
\end{equation}
If $\omega_R$ is a weak-$*$ limit of the finite-volume ground states at
range $R$ along a subsequence, then
$\omega_R\circ\alpha_1^{(R,R')}$ is the weak-$*$ limit, along the same
subsequence, of the corresponding ground states at range $R'$.
\end{lemma}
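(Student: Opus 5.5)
The plan is a standard thermodynamic-limit argument for spectral flows, as in Refs.~\cite{Bachmann_2011,Nachtergaele_2019}, applied to the finite-range shell path. The point to exploit is that along the shell path $\Phi_s^{(R,R')}=\Phi_{\le R}+s\Phi_{(R,R')}$ every interaction has range at most $R'<\infty$. The path derivative $\Phi_{(R,R')}$ also has range at most $R'$. Write $G_{n,s}=\sum_{Z\subset\Lambda_n}K^{(n)}_{s,Z}$ with $K^{(n)}_{s,Z}=-\mathcal I^{(n)}_s(\Phi_{(R,R')}(Z))$. Here $\mathcal I^{(n)}_s$ is built from the finite-volume dynamics $\tau^{s,n}_t$ of $H_{\Lambda_n}(s)$ and the fixed filter $W_\gamma$. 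The uniform gap of Assumption~\ref{assumption:adiabaticity-shell-path} lets us use a single filter for all $n$.

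\textbf{Step 1: existence and convergence of the automorphism.} For finite-range, uniformly bounded interactions, the finite-volume dynamics $\tau^{s,n}_t(B)$ converge in norm to an infinite-volume dynamics $\tau^s_t(B)$ for every local $B$. The convergence is uniform for $s\in[0,1]$ and $|t|\le T$. This is the standard Lieb--Robinson/Cauchy argument: the difference between volumes $n$ and $m$ is bounded by a Duhamel integral involving boundary terms far from $\supp B$. Combining this with the filter moment bounds \eqref{eq:finite_filter_moments} and the termwise localization of Proposition~\ref{prop:appendix_termwise}, we obtain three facts. First, $K^{(n)}_{s,Z}\to K_{s,Z}$ in norm. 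Second, the terms are quasi-local with an $n$-independent decay profile. Third, the generators $G_{n,s}$ form a uniformly quasi-local time-dependent interaction converging termwise. The general theorem on thermodynamic limits of dynamics generated by such interactions (Nachtergaele--Sims--Young) then gives norm convergence of $\alpha_{n,s}(A)=U_{n,s}^*AU_{n,s}$ on local $A$, uniformly in $s$. The limit $\alpha_s^{(R,R')}$ extends by density to a $*$-automorphism of $\mA$, since it is isometric and invertible via the inverse cocycle. I expect this step to be the main technical obstacle. The difficulty is bookkeeping rather than conceptual: one must show that the termwise quasi-locality bounds for the finite-volume generators are uniform in $n$ and sufficient for the Lieb--Robinson bound of the generated flow. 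Here finite range is exactly what makes this work.

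\textbf{Step 2: the bound \eqref{eq:infinite_shell_automorphism_bound}.} Corollary~\ref{cor:local_mass_bound} gives $\snorm{\alpha_{n,1}(A_X)-A_X}\le C|X|\snorm{A_X}\mT_{\Phi_{(R,R')}}$. The constant is independent of $n$, and $\mT_{\Phi_{(R,R')}}\le\tailmass(R)$. Passing to the norm limit $n\to\infty$ preserves the inequality.

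\textbf{Step 3: transport of limit states.} Let $\omega_{n,R}$ and $\omega_{n,R'}$ denote the unique finite-volume ground states, extended to $\mA$ as in Section~\ref{subsec:setup}. By \eqref{eq:finite_spectral_flow_transport} with rank-one projections, $\omega_{n,R'}(A)=\omega_{n,R}(\alpha_{n,1}(A))$ for $A\in\mA_{\Lambda_n}$. Fix local $A$ and a subsequence $n_j$ with $\omega_{n_j,R}\to\omega_R$ weak-$*$. Then
\[
\abs{\omega_{n_j,R'}(A)-\omega_R(\alpha_1(A))}\le\snorm{\alpha_{n_j,1}(A)-\alpha_1(A)}+\abs{\omega_{n_j,R}(\alpha_1(A))-\omega_R(\alpha_1(A))}.
\]
The first term vanishes by Step 1. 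The second vanishes because $\alpha_1(A)\in\mA$ is quasi-local, so weak-$*$ convergence applies. Density of local observables gives the weak-$*$ convergence $\omega_{n_j,R'}\to\omega_R\circ\alpha_1^{(R,R')}$. As a consistency check, Lemma~\ref{lem:finite_range_ground_state_limit} applied to $\Xi=\Phi_{\le R'}$ shows that this limit is a ground state of $\Phi_{\le R'}$.
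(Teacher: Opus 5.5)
Your proposal is correct and follows the paper's proof. The paper handles your Step~1 in one line: it notes that the shell path and its derivative have range at most $R'$ with $n$-uniform norms and then invokes the thermodynamic-limit spectral-flow theorem (Theorem~5.2 of Ref.~\cite{Bachmann_2011}), which is the convergence-of-dynamics and convergence-of-generator argument you sketch; for the generator's quasi-locality you should rely on the finite-range, $R'$-dependent decay rather than on the $F_\beta$ rate at the background exponent, as you already indicate. Your Steps~2 and~3 coincide with the paper's, namely the $n$-uniform bound from Corollary~\ref{cor:local_mass_bound} with $\mT_{\Phi_{(R,R']}}\le\tailmass(R)$ passed to the norm limit, and the triangle-inequality transport of the weak-$*$ limit followed by Lemma~\ref{lem:finite_range_ground_state_limit}.
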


\begin{proof}
For fixed $R'$, the path interaction
$\Phi_s^{(R,R')}$ and its derivative have range at most $R'$, and their
finite-range interaction norms are bounded uniformly in $n$ and $s$.
Together with the assumed uniform gap, these are precisely the
hypotheses of the thermodynamic-limit spectral-flow theorem,
Theorem~5.2 of Ref.~\cite{Bachmann_2011}.  It follows that, for every
local $A$,
\begin{align}
  &\lim_{n\to\infty}\sup_{s\in[0,1]}
  \nonumber\\[-2mm]
  &\qquad
  \snorm{
    \alpha_{n,s}^{(R,R')}(A)
    -
    \alpha_s^{(R,R')}(A)}
  =0.
  \label{eq:shell_flow_norm_limit}
\end{align}
The same theorem gives convergence of the inverse finite-volume flows.
Consequently, the two limiting maps extend by norm density to mutual
inverse $*$-automorphisms of $\mA$.
Furthermore,
\begin{equation}
  \mT_{\Phi_{(R,R']}}
  =
  \sup_x
  \sum_{\substack{Z\ni x\\R<\diam Z\le R'}}
  |Z|\snorm{\Phi(Z)}
  \le
  \tailmass(R).
  \label{eq:shell_mass_bound_again}
\end{equation}
The operator-norm estimate \eqref{eq:local_mass_response}, applied in
each $\Lambda_n$ to $\dot\Phi_s=\Phi_{(R,R']}$, gives
\begin{equation}
  \snorm{\alpha_{n,1}^{(R,R')}(A_X)-A_X}
  \le
  C|X|\snorm{A_X}\tailmass(R)
\end{equation}
with a constant independent of $n$.  Taking the norm limit in
Eq.~\eqref{eq:shell_flow_norm_limit} proves
Eq.~\eqref{eq:infinite_shell_automorphism_bound}.

Let $\omega_{n,R}$ and $\omega_{n,R'}$ be the finite-volume ground
states at truncation scales $R$ and $R'$.  Exact spectral transport gives
\begin{equation}
  \omega_{n,R'}(A)
  =
  \omega_{n,R}\bigl(\alpha_{n,1}^{(R,R')}(A)\bigr).
\end{equation}
If $\omega_{n_j,R}\to\omega_R$ weak-$*$, then for local $A$ the norm
convergence of the automorphisms and weak-$*$ convergence give
\begin{align}
  &\abs{
  \omega_{n_j,R}\bigl(\alpha_{n_j,1}^{(R,R')}(A)\bigr)
  -
  \omega_R\bigl(\alpha_1^{(R,R')}(A)\bigr)}
  \nonumber\\
  &\quad\le
  \snorm{
    \alpha_{n_j,1}^{(R,R')}(A)
    -
    \alpha_1^{(R,R')}(A)}
  \nonumber\\
  &\qquad+
  \abs{
    (\omega_{n_j,R}-\omega_R)
    \bigl(\alpha_1^{(R,R')}(A)\bigr)}
  \longrightarrow0.
  \label{eq:shell_state_limit}
\end{align}
Consequently,
\begin{equation}
  \omega_{n_j,R'}(A)
  \longrightarrow
  \omega_R\bigl(\alpha_1^{(R,R')}(A)\bigr).
\end{equation}
Since all states have norm one and the local algebra is norm dense in
$\mA$, convergence on local $A$ extends to weak-$*$ convergence on
all of $\mA$.
Lemma~\ref{lem:finite_range_ground_state_limit}, applied to
$\Xi=\Phi_{\le R'}$, shows that this limit is a ground state of
$\Phi_{\le R'}$.
\end{proof}

\subsection{Shell-transported limit and ground-state identification}
\label{sec:coherent_shell_limit}

We now iterate the compatible finite-volume shell flows along the subsequence chosen at the initial scale.  Summability of the successive errors gives a limiting state, and convergence of the truncated Liouvillians identifies it as a ground state of the full interaction.

\begin{proof}[Proof of Theorem~\ref{thm:main_infinite_shell}]
Fix once and for all a subsequence $(n_j)_j$ for which
\begin{equation}
  \omega_{n_j,R_0}
  \xrightarrow{\mathrm{w}^*}
  \omega_{R_0}.
\end{equation}
Apply Lemma~\ref{lem:limiting_shell_flow} successively to the scales
$R_k$ and define the shell-transported branch by
Eq.~\eqref{eq:main_coherent_branch}.  We claim inductively, without
extracting any further subsequence, that
\begin{equation}
  \omega_{n_j,R_k}
  \xrightarrow{\mathrm{w}^*}
  \omega_{R_k}
  \qquad\text{for every fixed }k.
  \label{eq:fixed_subsequence_induction}
\end{equation}
The assertion holds for $k=0$ by construction.  The state-limit part
of Lemma~\ref{lem:limiting_shell_flow} proves the induction step.
Uniqueness of the finite-volume ground state at each truncation scale
identifies the output of the $(R_k,R_{k+1})$ flow with the ground state
used as the input to the next shell.  By
Lemma~\ref{lem:finite_range_ground_state_limit}, the initially chosen
$\omega_{R_0}$ is a ground state of $\Phi_{\le R_0}$.  Equation
\eqref{eq:fixed_subsequence_induction} and
Lemma~\ref{lem:finite_range_ground_state_limit} then show that every
$\omega_{R_k}$ is a ground state of $\Phi_{\le R_k}$.  In addition,
\begin{align}
  \abs{\omega_{R_{k+1}}(A_X)-\omega_{R_k}(A_X)}
  &\le
  \snorm{\alpha_1^{(R_k,R_{k+1})}(A_X)-A_X}
  \nonumber\\
  &\le
  C|X|\snorm{A_X}\tailmass(R_k),
\end{align}
which is Eq.~\eqref{eq:main_shell_cauchy}.

The tail masses along the chosen scale sequence are summable by
Eq.~\eqref{eq:main_shell_tail_summability}; hence the sequence
$\omega_{R_k}(A)$ is Cauchy for every local $A$.  Define
\begin{equation}
  \omega_\infty(A)=\lim_{k\to\infty}\omega_{R_k}(A)
\end{equation}
on the local algebra.  Positivity, normalization, and
$\abs{\omega_\infty(A)}\le\snorm{A}$ pass to the limit, so
$\omega_\infty$ extends uniquely to a state on $\mA$.  Summing the
adjacent-scale estimate from $k=j$ onward gives
Eq.~\eqref{eq:main_shell_limit_bound}.  Pointwise convergence on local
observables and uniform boundedness also imply weak-$*$ convergence on
all of $\mA$ by density of the local algebra.

It remains to identify the limit.  For $A\in\mA_X$,
\begin{align}
  \snorm{(\mL_\Phi-\mL_{\Phi_{\le R}})(A)}
  &\le
  2\snorm{A}
  \sum_{\substack{Z\cap X\ne\varnothing\\\diam Z>R}}
  \snorm{\Phi(Z)}
  \nonumber\\
  &\le
  2|X|\snorm{A}\tailmass(R).
  \label{eq:liouvillian_convergence}
\end{align}
Summability also implies $\tailmass(R_k)\to0$, so
$\mL_{\Phi_{\le R_k}}(A)\to\mL_\Phi(A)$ in norm.  In particular,
$\mL_\Phi(A)$ is the norm limit of local observables and belongs to
$\mA$.  Since
$\omega_{R_k}$ is a ground state of $\Phi_{\le R_k}$,
\begin{equation}
  \omega_{R_k}\bigl(A^*\mL_{\Phi_{\le R_k}}(A)\bigr)\ge0.
\end{equation}
Furthermore,
\begin{align}
  &\abs{
  \omega_{R_k}\bigl(A^*\mL_{\Phi_{\le R_k}}(A)\bigr)
  -
  \omega_\infty\bigl(A^*\mL_\Phi(A)\bigr)}
  \nonumber\\
  &\quad\le
  \abs{(\omega_{R_k}-\omega_\infty)
  \bigl(A^*\mL_\Phi(A)\bigr)}
  \nonumber\\
  &\qquad+
  \snorm{A}\,
  \snorm{(\mL_{\Phi_{\le R_k}}-\mL_\Phi)(A)}.
\end{align}
The first term vanishes by weak-$*$ convergence and the second by
Eq.~\eqref{eq:liouvillian_convergence}.  Taking $k\to\infty$ proves
\begin{equation}
  \omega_\infty\bigl(A^*\mL_\Phi(A)\bigr)\ge0,
\end{equation}
so $\omega_\infty$ is a ground state of the full interaction.  If that
ground state is unique, then $\omega_\infty=\omega$.
\end{proof}

Consistent shell transport is essential in the nonunique case.  A uniform
finite-volume gap does not force independently chosen finite-volume ground
states to converge along the full exhaustion; boundary conditions or
symmetry breaking may produce different weak-$*$ limit points.

\section{Applications}
\label{sec:applications}

This section converts the abstract tail-mass estimates into concrete
gap criteria, decay rates, and observable consequences.  We begin with
spin systems: first a verifiable sufficient condition for the path gap,
then two-body power-law and exponential examples.  We next extend the
results to parity-even fermionic lattice systems and conclude with exactly
gapped free-fermion benchmarks and covariance matrix numerics.

\subsection{A sufficient gap-stability criterion for spin systems}
\label{sec:gap_stability_criterion}

We first address the path-gap hypothesis for spin systems.  A fermionic
application instead requires an analogous fermionic stability theorem or
a direct gap estimate such as those used later in
Section~\ref{sec:massive_fermion_benchmark}.  For spin systems, the
path-gap assumptions can be verified perturbatively when a
finite-range truncation belongs to a class covered by an extensive
gap-stability theorem.  We formulate one useful version based on the
stability theorem of Michalakis and Zwolak
\cite{Bravyi_Hastings_Michalakis_2010,Michalakis_Zwolak_2013}.  Consider
periodic boxes $\Gamma_L$ and fix an integer $R_*$.  Suppose that the
reference Hamiltonians, written in the uniformly bounded local-projector
form of Ref.~\cite{Michalakis_Zwolak_2013},
\begin{equation}
  H_{L,*}=H_{\Gamma_L,R_*}
  \label{eq:gap_reference_hamiltonian}
\end{equation}
are finite range and frustration free, and 
have a gap at least
$\gamma_*>0$ uniformly in $L$ such that the reference ground state subspaces are  separated from the remainder of their spectra by at least \(\gamma_*\).
Also suppose reference Hamiltonians satisfy the uniform Local-TQO and
Local-Gap conditions of Ref.~\cite{Michalakis_Zwolak_2013}.\footnote{For
$A=b_u(r)$, its enlargement $A(\ell)=b_u(r+\ell)$, and the ground-space
projection $P_{A(\ell)}$ of the Hamiltonian restricted to $A(\ell)$,
Local-TQO requires
$\|P_{A(\ell)}O_AP_{A(\ell)}-c_\ell(O_A)P_{A(\ell)}\|
\le\|O_A\|\Delta_0(\ell)$ with
$c_\ell(O_A)=\Tr(P_{A(\ell)}O_A)/\Tr P_{A(\ell)}$
for every $O_A$ supported in $A$, with
$\Delta_0(\ell)\to0$ as the buffer $\ell$ grows.  Local-Gap requires
the restricted Hamiltonians on balls of radius $r$ to have gaps
$\gamma(r)>0$ that decrease at most polynomially in $r$.  The decay
bounds are uniform in the center and system size.}  Assign
each interaction set $Z$ to one anchor $a(Z)\in Z$ and, using the
integer-valued graph diameter, decompose the omitted tail as
\begin{align}
  V_L=H_{\Gamma_L}-H_{L,*}
  &=\sum_{x\in\Gamma_L}\sum_{r>R_*}V_{x,r},
  \\
  V_{x,r}&:=\sum_{\substack{a(Z)=x\\\diam Z=r}}\Phi(Z).
  \label{eq:ball_shell_decomposition}
\end{align}
Then $V_{x,r}$ is supported in $B_L(x,r)$.

\begin{proposition}[Reference-Hamiltonian stability criterion]
\label{prop:reference_stability_criterion}
Assume the reference family in
Eq.~\eqref{eq:gap_reference_hamiltonian} satisfies the hypotheses just
stated.  Suppose
\begin{equation}
  \snorm{V_{x,r}}\le J_* (1+r)^{-q}, \quad q>d+2,
  \label{eq:gap_stability_strength}
\end{equation}
with $J_*$ uniform in $x,r,L$,
and consider Hamiltonian
\begin{equation}
  H_L(\boldsymbol{c}) = H_{L,*}+\sum_{x,r}c_{x,r}V_{x,r},
  \qquad 0\le c_{x,r}\le1.
  \label{eq:coefficientwise_tail_path}
\end{equation}
Let $P_{L,*}$ denote the ground-space projection of $H_{L,*}$.
Then there exist constants $J_{\mathrm{stab}}>0$ and $L_0<\infty$,
depending only on $q$ and the uniform reference-family data, such that,
if $J_*<J_{\mathrm{stab}}$, then for every $L\ge L_0$ and every
coefficient pattern $0\le c_{x,r}\le1$, the Hamiltonian $H_L(\boldsymbol{c})$ in
Eq.~\eqref{eq:coefficientwise_tail_path} has an isolated lowest
spectral band $\Sigma_L^{\mathrm{low}}$ with spectral projection $P_L$
satisfying
\[
  \operatorname{rank}P_L=\operatorname{rank}P_{L,*},
  \qquad
  \operatorname{dist}\!\left(
    \Sigma_L^{\mathrm{low}},
    \sigma(H_L)\setminus\Sigma_L^{\mathrm{low}}
  \right)
  \ge \frac{\gamma_*}{2}.
\]
The band need not remain exactly degenerate and may have nonzero
internal width.  If $P_{L,*}$ has rank one, then
$\Sigma_L^{\mathrm{low}}$ is a singleton and the displayed separation
is the ordinary ground-state gap
$E_1-E_0\ge\gamma_*/2$.
\end{proposition}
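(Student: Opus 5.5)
The plan is to reduce the statement to the Michalakis--Zwolak stability theorem \cite{Michalakis_Zwolak_2013}, applied to the reference family $H_{L,*}$ with a perturbation written as a sum of ball-supported terms. For a fixed coefficient pattern $\boldsymbol c$, regroup the perturbation $\sum_{x,r}c_{x,r}V_{x,r}$ in the form used by that theorem: a sum over centers $u$ and radii $r$ of self-adjoint terms $\Phi'_u(r)$ supported in the ball $b_u(r)=B_L(u,r)$. Here $\Phi'_x(r):=c_{x,r}V_{x,r}$ for $r>R_*$, and $\Phi'_x(r):=0$ otherwise.

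The facts needed are these. First, self-adjointness holds because $c_{x,r}$ is real and $V_{x,r}$ is a sum of self-adjoint $\Phi(Z)$. Second, the support condition $V_{x,r}\in\mA_{B_L(x,r)}$ holds because every $Z$ with anchor $x$ and integer diameter $r$ lies in $B_L(x,r)$. Third, the decay satisfies
\[
\snorm{\Phi'_x(r)}\le J_*(1+r)^{-q},
\]
uniformly in $x$, $L$, and $\boldsymbol c$, since $0\le c_{x,r}\le1$. The MZ theorem requires a perturbation-strength function $f(r)$ that decays faster than a polynomial of a specified degree, times a small coupling $J$. I would check that $q>d+2$ is exactly enough for the weighted sums in their proof to converge uniformly, for example $\sum_r (1+r)^{d+1}f(r)<\infty$. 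I expect these sums to come from the ball volume $\sim r^d$, one extra power from the Lieb--Robinson or commutator length, and one from summability.

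Next, invoke the theorem. Under frustration-freeness, Local-TQO, Local-Gap, and the uniform reference gap $\gamma_*$, it gives a threshold $J_{\mathrm{stab}}>0$ and a system size $L_0$ such that, for $J_*<J_{\mathrm{stab}}$ and $L\ge L_0$, the spectrum of $H_{L,*}+V$ splits into a lowest band of width $o(1)$ and the rest. The gap between them is at least $\gamma_*-\delta(J_*)$, which I would make at least $\gamma_*/2$ by shrinking $J_{\mathrm{stab}}$. The constants depend only on the decay data $(J_*,q)$ and the reference-family data, never on the individual $c_{x,r}$. This gives the uniformity over all coefficient patterns.

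Rank preservation then follows by continuity. Apply the theorem to $H_{L,*}+t\sum c_{x,r}V_{x,r}$ for $t\in[0,1]$; the perturbation $t\,c_{x,r}V_{x,r}$ still has coefficients in $[0,1]$ and so obeys the same bounds. The gap therefore never closes along $t$, so the spectral projection of the lowest band is norm-continuous in $t$ and its rank equals $\operatorname{rank}P_{L,*}$. In the rank-one case the band has a single eigenvalue, so the separation is $E_1-E_0$.

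The main obstacle is matching hypotheses precisely. MZ works with periodic boxes and specific local projector conventions, and its decay requirement on the perturbation is phrased in a form that must be checked to be implied by $q>d+2$. The cleanest route is to restate their condition explicitly and verify that the ball-shell decomposition \eqref{eq:ball_shell_decomposition} satisfies it. I also need to check that their threshold depends on the perturbation only through the sup-norm decay profile, so that uniformity in $\boldsymbol c$ is automatic. If their statement instead bounds the band width only by $\gamma_*/2$ in some other normalization, I would adjust $J_{\mathrm{stab}}$ accordingly.
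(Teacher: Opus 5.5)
Your proposal is correct and follows essentially the same route as the paper. The paper's proof observes that, for every coefficient pattern, the regrouped tail $\sum_{x,r}c_{x,r}V_{x,r}$ is a $(J_*,f_q)$ perturbation with $f_q(r)=(1+r)^{-q}$ in the Michalakis--Zwolak ball decomposition, and it invokes their stability theorem. That theorem's constants depend only on $(J_*,q)$ and the reference-family data, so they are uniform in the coefficients. The one difference is in the rank statement. The paper reads the same-rank lowest band directly off the cited theorem. You instead rederive rank preservation by continuity along $t\mapsto H_{L,*}+t\sum_{x,r}c_{x,r}V_{x,r}$; this is a valid supplement, provided the theorem's band width is uniformly below $\gamma_*/2$.
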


The cited stability theorem also supplies a finite-size bound on the internal splitting, but no such width estimate is needed here.

\begin{proof}
For every coefficient pattern in
Eq.~\eqref{eq:coefficientwise_tail_path}, the perturbation is still a
$(J_*,f_q)$ perturbation, with $f_q(r)=(1+r)^{-q}$, in the ball decomposition of
Ref.~\cite{Michalakis_Zwolak_2013}. 
Its stability theorem supplies such a same-rank isolated low-energy band with separating gap at least $\gamma_*/2$,
uniformly in the coefficients and
in sufficiently large $L$.  If $R\ge R_*$, the direct path can be
written relative to $H_{L,*}$ with coefficients equal to $1$ for
$R_*<r\le R$ and $s$ for $r>R$.  A shell path has coefficients $1$,
$s$, and $0$ in the corresponding three ranges.  Both are special
cases of Eq.~\eqref{eq:coefficientwise_tail_path}.
\end{proof}

The tail condition in the proposition follows directly from the
interaction norm used in this paper.  If
$\snorm{\Phi}_{F_\eta}<\infty$ and $d+2<q<\eta$, then for $r>R_*$,
\begin{align}
  \snorm{V_{x,r}}
  &\le (1+r)^{-\eta}\snorm{\Phi}_{F_\eta}
  \nonumber\\
  &\le J_*(1+r)^{-q},
  \qquad
  J_*=(1+R_*)^{-(\eta-q)}\snorm{\Phi}_{F_\eta}.
  \label{eq:practical_gap_stability_test}
\end{align}
Consequently, $J_*<J_{\mathrm{stab}}$ is a directly verifiable sufficient
condition for all the direct and shell paths above to have a uniform separating gap
in all sufficiently large volumes.  The reference-family conditions must
still be checked separately.  For the product reference
$H_{L,*}=\sum_{x\in\Gamma_L}(\mathbf1-|\psi_x\rangle\langle\psi_x|)$,
with each $\psi_x$ normalized, the global and local gaps equal $1$ and
Local-TQO holds with
$\Delta_0=0$; treating each fixed-size product cell as one site gives
the same example. For general frustration-free models, Local-TQO and Local-Gap require separate verification.  We stress that this is only a special sufficient condition for the path gap, not a necessary condition.

This criterion addresses the path-gap hypothesis.  Once the hypothesis
is available, the truncation rate is determined by the local tail mass,
which we now evaluate for two-body interactions.

\subsection{Two-body power-law and exponential interactions}
\label{sec:two_body}

Section~\ref{sec:fermionic_extension} shows that the resulting
rates also apply to parity-even fermionic interactions.

\medskip
\noindent\emph{Power-law bonds.}
Let us first consider power-law two-body interactions, which arise in Hamiltonians such as
\[
H=\frac12\sum_{\substack{x,y\in\mathbb Z^d\\x\ne y}}J_{xy}O_xO_y,\quad|J_{xy}|\le J(1+|x-y|)^{-p},
\]
with $O_x, O_y$ on-site operators.
For an interaction $\Phi$, the truncation at range $R$ removes every pair
term $\Phi(\{x,y\})$ with $\dist(x,y)>R$. As shown in
Section~\ref{sec:weighted_response}, the local-distinguishability error
is controlled by the local tail mass.  Here this local tail mass is
obtained by summing the norms of all removed terms involving a fixed
site $x$.  Counting the sites in lattice shells yields the effective
decay exponent $p-d$; in particular, $p>2d$ allows one to choose a
weighted-response exponent satisfying $d<\eta<p-d$.

More concretely, suppose $\Phi(Z)=0$ for $|Z|>2$, let
$J_0=\sup_x\snorm{\Phi(\{x\})}<\infty$, and assume, for some $p>d$,
\begin{equation}
  \snorm{\Phi(\{x,y\})}
  \le J(1+\dist(x,y))^{-p}.
  \label{eq:two_body_decay}
\end{equation}
Since a distance-$r$ shell contains $O(r^{d-1})$ sites,
\begin{align}
  \tailmass(R)
  &\le C_dJ\sum_{r>R}r^{d-1}(1+r)^{-p}
  \nonumber\\
  &\le C_{d,p}J(1+R)^{-(p-d)}.
  \label{eq:two_body_tail_mass}
\end{align}
Thus lattice-shell multiplicity changes the individual-coupling
exponent $p$ into the local-tail exponent $p-d$.

The same count gives $\snorm{\Phi}_{F_\eta}<\infty$ for every
$\eta<p-d$.  Since Proposition~\ref{prop:weighted_response} requires
$\eta>d$, the coupling-decay bound verifies its background locality
condition when $p>2d$.  Once any $d<\eta<p-d$ is fixed,
Theorem~\ref{thm:main_finite_response} can be applied directly to
Eq.~\eqref{eq:two_body_tail_mass}, yielding the full rate
$O(R^{-(p-d)})$.  For $d<p\le2d$, the tail still vanishes, but the
coupling-decay bound alone does not supply the required
Lieb--Robinson input.

This distinction limits the physical range of the result.  Dipolar
couplings ($p=3$) satisfy $p>2d$ in one dimension but not in two or
three dimensions.  Couplings of van der Waals type ($p=6$) are covered in one
and two dimensions, while the strict inequality excludes the
three-dimensional endpoint $p=2d$.  For one-dimensional trapped-ion
power-law models, the result applies when $p>2$.  Extending the
observable response estimate to $d<p\le2d$ remains open.

Along a geometrically increasing scale sequence
$R_k=\lambda^kR$, $\lambda>1$, summation introduces only a constant,
since
\begin{equation}
  \sum_{k\ge0}(1+\lambda^kR)^{-(p-d)}
  \le
  \frac{2^{p-d}}{1-\lambda^{-(p-d)}}(1+R)^{-(p-d)}.
  \label{eq:two_body_power_geometric_sum}
\end{equation}
Consequently, under the finite-volume direct-path
hypothesis (Assumption~\ref{assumption:adiabaticity-direct-path}), or when the successive-shell family satisfies
Assumption~\ref{assumption:adiabaticity-shell-path}, with unique
ground state,
\begin{equation}
  d_X(\omega_\Lambda,\omega_{\Lambda,R})
  \le C|X|J(1+R)^{-(p-d)},
  \qquad p>2d,
  \label{eq:two_body_final}
\end{equation}
for $R>1$.  Under the hypotheses of
Theorem~\ref{thm:main_infinite_shell}, the corresponding statement is
\begin{equation}
  d_X(\omega_\infty,\omega_{R_j})
  \le C|X|J(1+R_j)^{-(p-d)},
  \qquad R_j=\lambda^jR_0.
  \label{eq:two_body_infinite_shell}
\end{equation}
Equations~\eqref{eq:two_body_final} and
\eqref{eq:two_body_infinite_shell} do not assert an infinite-volume
direct power-law estimate.  Under the corresponding finite-volume
direct-family clause or Assumption~\ref{assumption:adiabaticity-shell-path},
the same $R^{-(p-d)}$ rate holds for the corresponding
compressed-observable error in
Eqs.~\eqref{eq:main_direct_sector} and
\eqref{eq:main_shell_sector}.

Within the power-law interaction class considered here, the exponent in $O(R^{-(p-d)})$ is optimal. The example in Appendix~\ref{app:sharpness} attains this rate within the full class allowed by our hypotheses, which does not require translation invariance; it rules out uniform improvement over that class, but not over translation-invariant
or otherwise structured subclasses.

\medskip
\phantomsection
\label{sec:two_body_exponential}
\noindent\emph{Exponentially decaying bonds.} Exponentially decaying interactions arise in Hamiltonians such as
\[
H=\frac12\sum_{\substack{x,y\in\mathbb Z^d\\x\ne y}}J_{xy}O_xO_y,\quad|J_{xy}|\le J_e e^{-\mu_0 \dist(x,y)},
\]

For such exponentially decaying bonds, coupling decay and the exponential diameter summability lead to different endpoint statements.  

Suppose again that $\Phi(Z)=0$ for $|Z|>2$ and
$J_0=\sup_x\snorm{\Phi(\{x\})}<\infty$, but now assume the following coupling bound:
\begin{equation}
  \snorm{\Phi(\{x,y\})}
  \le
  J e^{-\mu_0\dist(x,y)}.
  \label{eq:two_body_exponential_decay}
\end{equation}
By shell counting, for $R>1$, we have
\begin{align}
  \tailmass(R)
  &\le
  C_dJ\sum_{r>R}(1+r)^{d-1}e^{-\mu_0r}
  \nonumber\\
  &\le
  C_{d,\mu_0}J(1+R)^{d-1}e^{-\mu_0R}.
  \label{eq:two_body_exponential_tail}
\end{align}
Geometric summation preserves the same polynomial-exponential form
because
\begin{align}
  &\sum_{k\ge0}(1+\lambda^kR)^{d-1}e^{-\mu_0\lambda^kR}
  \nonumber\\
  &\quad\le
  (1+R)^{d-1}e^{-\mu_0R}
  \sum_{k\ge0}
  \lambda^{k(d-1)}e^{-\mu_0(\lambda^k-1)}
  \nonumber\\
  &\quad\le
  C_{d,\mu_0,\lambda}(1+R)^{d-1}e^{-\mu_0R}.
  \label{eq:two_body_exponential_geometric}
\end{align}
Hence in the unique ground state case the finite-volume direct-path and successive-shell estimates
obey
\begin{equation}
  d_X(\omega_\Lambda,\omega_{\Lambda,R})
  \le
  C|X|J(1+R)^{d-1}e^{-\mu_0R},
  \label{eq:two_body_exponential_finite}
\end{equation}
and the thermodynamic-limit shell branch obeys
\begin{equation}
  d_X(\omega_\infty,\omega_{R_j})
  \le
  C|X|J(1+R_j)^{d-1}e^{-\mu_0R_j}.
  \label{eq:two_body_exponential_infinite_shell}
\end{equation}
Each statement is subject to its corresponding gap hypotheses above.
For an isolated finite-volume low-energy sector, the same
polynomial-exponential rate holds for the compressed-observable error
in a direct or shell comparison.

The polynomial prefactor $(1+R)^{d-1}$ does not change the exponential decay class:
for every $0<\kappa<\mu_0$,
\begin{equation}
  (1+R)^{d-1}e^{-\mu_0R}
  \le
  C_{d,\mu_0,\kappa}e^{-\kappa R}.
\end{equation}
Moreover, the coupling-decay bound
\eqref{eq:two_body_exponential_decay} implies exponential diameter
summability at every rate $\mu<\mu_0$.  Consequently, under the
corresponding gap hypotheses, the finite-volume, shell, and intrinsic
direct estimates are all $O(e^{-\kappa R})$ for every
$\kappa<\mu_0$.

Closely related static bounds were obtained by Wang and Hazzard
\cite{Wang_Hazzard_2023}, who proved a power-law
local-perturbations-perturb-locally principle and applied it term by
term to bound finite-size errors of local ground-state observables.
For two-body \(r^{-p}\) interactions with \(p>2d\), their finite-size
bound has the power \(L^{-(p-d)}\), up to logarithmic factors. Their
application concerns interactions crossing a finite-volume boundary,
whereas we globally truncate the interaction range and derive a
general local tail mass bound for many-body interactions, together
with sector-level and infinite-volume extensions.

\subsection{Extension to fermionic lattice systems}
\label{sec:fermionic_extension}

Having completed the spin-system applications, we now turn to fermions.
The finite-volume and infinite-volume conclusions of this paper extend
to lattice fermions, with the following precise conventions.  Put a
finite number of fermionic modes at each $x\in\Gamma$, let
$\mA_X^{\mathrm{CAR}}$ be the local CAR algebra on $X$, and let
$\Theta$ denote the parity automorphism.  We require every interaction
term to be even,
\begin{equation}
  \Theta\bigl(\Phi(Z)\bigr)=\Phi(Z),
  \label{eq:fermion_even_interaction}
\end{equation}
and use the same metric supports, diameter norms, and tail masses as in
Section~\ref{sec:setup}.  The physical observable algebra is the even
subalgebra $(\mA_X^{\mathrm{CAR}})^+$.  The statements about local
distinguishability and multipoint moments apply, with the same rates and parameter dependence, to observables in these parity-even local subalgebras.

We work directly with the CAR algebra; no spin representation is used, which is important because Jordan–Wigner transformations do not generally preserve geometric locality.
First, disjoint CAR algebras graded-commute, and therefore an even
observable supported in $X$ commutes with every observable supported in
$X^c$.  Second, an even Hamiltonian preserves parity under its
physical-time dynamics.  Consequently every term
\begin{equation}
  K_{s,Z}=-\mathcal I_s\bigl(\dot\Phi_s(Z)\bigr)
  \label{eq:fermion_even_generator_term}
\end{equation}
of the spectral-flow generator is even.  Third, the fermionic
conditional expectation constructed for the CAR algebra is unital,
contractive, parity preserving, and maps an even quasi-local operator
to an even operator in the target local algebra
\cite{teufel2025liebrobinsonboundsautomorphicequivalence}.  Its
localization estimate is obtained from commutator bounds against all
operators in the complementary algebra.  Here the improved fermionic
Lieb--Robinson bound applies because the evolved source
$\dot\Phi_s(Z)$ is even, while the test operator may be arbitrary.
Integrating that estimate against the inverse-Liouvillian filter gives
the fermionic counterpart of
Eq.~\eqref{eq:main_termwise_localization},
\begin{align}
  &\snorm{K_{s,Z}-\mathbb E^{\mathrm{CAR}}_{\Lambda\setminus X}
  (K_{s,Z})}
  \nonumber\\
  &\qquad\le
  C|Z|\snorm{\dot\Phi_s(Z)}
  F_\beta\bigl(\dist(Z,X)\bigr).
  \label{eq:fermion_termwise_localization}
\end{align}
The localized auxiliary generator is even and supported in $X^c$, so
its unitary commutes with every physical observable in
$(\mA_X^{\mathrm{CAR}})^+$.  The Duhamel argument proving
Proposition~\ref{prop:weighted_response}, and hence all subsequent
tail-mass summations, is therefore unchanged.

The remaining parts of the argument are algebra independent.  In
finite volume, the inverse-Liouvillian identity, spectral projections,
and unitary transport of an isolated sector are Hilbert-space
statements.  In infinite volume, the derivation and ground-state
condition are defined on the CAR quasi-local algebra in the same way;
an even reference product state gives the canonical graded-product
extension used in the shell compactness argument.  For the intrinsic
direct-path construction, the automorphic-equivalence result invoked in
Section~\ref{sec:infinite_volume} was proved directly for fermion
systems \cite{becker2025automorphicequivalencegappedphases}.  Thus the
fermionic version requires no extra decay exponent beyond the one in
the corresponding spin statement.

Finally, if $\omega$ and $\nu$ are parity-even fermionic states, their local density
matrices commute with the local parity operator.  Hence
$D_X=\rho_X^\omega-\rho_X^\nu$ is even, and the norm-dual optimizer
$\operatorname{sgn}(D_X)$ is even as well.  It follows that
\begin{equation}
  \sup_{\substack{A=A^*\in(\mA_X^{\mathrm{CAR}})^+\\
                   \snorm{A}\le1}}
  \abs{\omega(A)-\nu(A)}
  =\snorm{\rho_X^\omega-\rho_X^\nu}_1.
\label{eq:fermion_trace_distance}
\end{equation}
The local distinguishability conclusion is therefore identical.  Similar
arguments apply to the low-energy sector projection in
Theorem~\ref{thm:main_finite_response}.  Thus the preceding two-body rates
apply without change to even fermionic interactions.  We next illustrate
them in quadratic models for which the path gaps and local observables can
both be computed exactly.

\subsection{Fermionic Gaussian models and numerics}
\label{sec:massive_fermion_benchmark}

We conclude the applications with three free-fermion benchmarks built
from the two-orbital, number-conserving Bloch form
\begin{equation}
  h_R(k)=\mathbf d_R(k)\cdot\boldsymbol\sigma,
  \qquad
  P_R(k)=\frac12\left[
  \mathbf1-\frac{h_R(k)}{|\mathbf d_R(k)|}
  \right],
  \label{eq:gaussian_general_projector}
\end{equation}
where $P_R$ projects onto the occupied band.  The first model is
uniformly massive and converges substantially faster than the general
tail-mass bound.  The second and third models approach a critical point
in one and two dimensions, respectively, and probe the predicted
$R^{-(p-d)}$ behavior over finite near-critical crossover windows.  In
all three cases, the gap along every direct and shell path can be shown
analytically; so no numerical gap extrapolation is needed. For simplicity, the numerical comparisons below use the direct paths.

For later use, let $n_{a,x}=a_x^\dagger a_x$ and
$M_x=n_{a,x}-n_{b,x}$.  Translation invariance and Wick's theorem give
\begin{align}
  \langle n_{a,x}\rangle_R
  &=\int_{\mathrm{BZ}}\frac{\dd^d k}{(2\pi)^d}
  [P_R(k)]_{aa},
  \label{eq:gaussian_density_observable}
  \\
  \langle n_{a,x}n_{a,x+r}\rangle_R
  &=\langle n_{a,x}\rangle_R^2
  \nonumber\\
  &\quad-\left|
  \int_{\mathrm{BZ}}\frac{\dd^d k}{(2\pi)^d}
  e^{\ii k\cdot r}[P_R(k)]_{aa}
  \right|^2,
  \quad r\ne0.
  \label{eq:gaussian_density_pair_observable}
\end{align}
For any local observable $A$ considered below, define its full-tail
truncation error by
\begin{equation}
  \Delta A_R
  :=
  \langle A\rangle_R-\langle A\rangle_\infty,
  \qquad
  \langle A\rangle_\infty
  :=
  \lim_{R'\to\infty}\langle A\rangle_{R'}.
  \label{eq:gaussian_error_definition}
\end{equation}
The limit exists in the models below because the hopping tails are
absolutely summable and the occupied-band projectors remain uniformly
gapped.  In the numerics we use
$\Delta^{(\mathrm{ref})}A_R
=\langle A\rangle_R-\langle A\rangle_{R_{\mathrm{ref}}}$.  All plotted
errors use the stated finite reference cutoff $R_{\mathrm{ref}}$; the grid
and reference-cutoff checks below quantify the remaining numerical
sensitivity.  Reported exponents are ordinary least-squares fits of
$\log|\Delta^{(\mathrm{ref})}A_R|$ against $\log R$ over the stated
windows, treating each dyadic cutoff equally; values below $100$ times
machine precision are excluded.  The data are deterministic, so we assign
no sampling error bars.

\paragraph{Uniformly massive model.}
Put two fermionic modes $a_x,b_x$ in each unit cell and, on a finite
volume $\Lambda$, define
\begin{align}
  H_\Lambda(T)
  ={}&m\sum_{x\in\Lambda}
  \bigl(a_x^\dagger a_x-b_x^\dagger b_x\bigr)
  \nonumber\\
  &+\sum_{x,y\in\Lambda}
  \left(T_{xy}a_x^\dagger b_y+{T_{xy}^*}b_y^\dagger a_x\right),
  \qquad m\ne0.
  \label{eq:massive_two_orbital_model}
\end{align}
In the one-particle basis $(a,b)$ its matrix and square are
\begin{align}
  h(T)&=
  \begin{pmatrix}m\mathbf1&T\\T^\dagger&-m\mathbf1\end{pmatrix},
  \\
  h(T)^2&=
  \begin{pmatrix}
    m^2\mathbf1+TT^\dagger&0\\
    0&m^2\mathbf1+T^\dagger T
  \end{pmatrix}
  \ge m^2\mathbf1.
  \label{eq:massive_model_gap_identity}
\end{align}
Hence $\sigma(h(T))\cap(-|m|,|m|)=\varnothing$ for every hopping
matrix $T$, with open or periodic boundaries.  Filling all negative
one-particle modes at chemical potential zero gives a unique Gaussian
many-body ground state with Fock-space gap at least $|m|$.  Range
truncation and both interpolations in
Eqs.~\eqref{eq:direct_path}--\eqref{eq:shell_path} only replace $T$ by
another hopping matrix, so the same bound holds along every path.

For $T_{xy}=t_{y-x}$ the Bloch Hamiltonian is
\begin{align}
  h_R(k)&=m\sigma_z+\operatorname{Re}q_R(k)\sigma_x
  -\operatorname{Im}q_R(k)\sigma_y,
  \\
  q_R(k)&=\sum_{|r|\le R}t_r e^{\ii k\cdot r}.
  \label{eq:massive_model_bloch_symbol}
\end{align}
We choose in one dimension
$t_0=J=1$, $t_r=J(1+|r|)^{-p}$ for $r\ne0$,
$m=0.7$, and $p=2.25$.  The momentum grid has $2^{16}$ points, the
large-cutoff reference is $R_{\mathrm{ref}}=2^{14}$, and the plotted
cutoffs are dyadic.  Fits over $32\le R\le512$ give slopes
$-3.358$ for $\langle n_{a,x}\rangle$ and $-3.345$ for
$\langle n_{a,x}n_{a,x+1}\rangle$, substantially faster than the
general $-(p-1)=-1.25$ exponent. The plot is shown in Fig.~\ref{fig:gaussian_benchmarks}(a)(d).

The faster convergence has a simple intuitive interpretation. It reflects the rigidity of the occupied-band projector. At \(m=0\), positivity of \(q_R(k)\) makes the occupied orbital independent of momentum, so changing the long-range dispersion does not change the many-body ground-state projector. For \(m\ne0\), the mass mixes the filled and empty local orbitals, and the response of a neutral local observable to a distant hopping proceeds through a uniformly gapped virtual interband process. The corresponding real-space projector response supplies a second \(r^{-p}\) factor, leading to the upper-bound scale \(R^{-(2p-d)}\). For the chosen parameters, this scale is $R^{-3.5}$, compatible with the faster decay observed in the numerical fitting window. A more quantitative discussion of this behavior can be found in Appendix\ \ref{app:massive_gaussian_rate}.

This model remains noncritical as $m$ is reduced: at $m=0$,
Eq.~\eqref{eq:massive_symbol_lower_bound} implies
$P_R(k)=(\mathbf1-\sigma_x)/2$ for every $R$, so the truncation error
vanishes exactly,  
while for small \(m\) the response coefficients are \(O(m)\).
Within this model family, a slow crossover can instead arise when
$\min_k|q_\infty(k)|\to0$, so that $m$ sets the gap.  Near the
incipient band touching, the occupied-band projector then varies across
a momentum window of width $O(\xi^{-1})$, where $\xi$ is the
gap-controlled crossover length.  Correspondingly, the Fourier
coefficients of the projector's first-variation kernel extend to
distances of order $\xi$; for $r\ll\xi$ they follow their near-critical
power-law or logarithmic profile rather than the eventual
$O(r^{-p})$ decay.  They therefore do not supply a second $r^{-p}$
factor in the truncation sum.  The following two examples realize this
near-critical regime, with $\delta$ as the controllable gap-opening parameter.

\paragraph{One-dimensional near-critical model.}
In a nearly critical but still gapped model with correlation length $\xi\gg1$, the two-point correlator connecting the distant endpoint of a discarded bond to the local observable varies only slowly for separations $r\ll\xi$.  It therefore provides little additional suppression in this regime, allowing the discarded bonds to accumulate nearly according to their bare tail mass.  We use these models to probe the truncation error of fixed local observables in this crossover regime.

To probe this regime, consider a one-dimensional model, for $\delta>0$ and $p>1$,
\begin{align}
  h_R^{(1)}(k)
  ={}&\sin k\,\sigma_x
  +\left[\delta+1-\cos k+v_R^{(1)}(k)\right]\sigma_z,
  \label{eq:near_critical_1d_model}
  \\
  v_R^{(1)}(k)
  ={}&2\lambda\sum_{r=1}^{R}
  \frac{\cos(rk)}{(1+r)^p}.
  \label{eq:near_critical_1d_tail}
\end{align}

For the numerical comparison, it suffices to consider the direct
interpolation between the range-$R$ and full models:
\begin{equation}
  h_{R,s}^{(1)}(k)
  =
  h_R^{(1)}(k)
  +s\left[h_\infty^{(1)}(k)-h_R^{(1)}(k)\right],
  \qquad 0\le s\le1.
  \label{eq:near_critical_1d_direct_path}
\end{equation}
Thus $s=0$ gives the range-$R$ model and $s=1$ gives the full model.
Let $E_{R,s}^{(1)}(k)$ denote its positive band energy.  Along this
path, the total long-range contribution satisfies
\begin{equation}
  \abs{v_{R,s}^{(1)}(k)}
  \le
  2|\lambda|\sum_{r\ge1}(1+r)^{-p}.
\end{equation}
Since the model without the long-range contribution has band energy at
least $\delta$, the triangle inequality gives
\begin{equation}
  E_{R,s}^{(1)}(k)
  \ge
  \delta-2|\lambda|\sum_{r\ge1}(1+r)^{-p}.
  \label{eq:near_critical_1d_path_gap}
\end{equation}

For $p=2.25$, $\delta=10^{-4}$, and $\lambda=2\times10^{-5}$,
the right-hand side is $8.1592\times10^{-5}>0$.  The correlation
length is of order $\xi\sim\delta^{-1}=10^4$.  We use $2^{20}$
momenta and $R_{\mathrm{ref}}=2^{18}$.  Over $2\le R\le64$, well
inside $R\ll\xi$, the fits for $\langle M_x\rangle$ and
$\langle n_{a,x}n_{a,x+1}\rangle$ give slopes $-1.258$ and
$-1.257$, respectively, close to $-(p-1)=-1.25$. The plot is shown in Fig.~\ref{fig:gaussian_benchmarks}(b)(e).

Intuitively, in the window \(R\ll\xi\), the correlations entering the response vary slowly with distance and provide little additional suppression beyond the decay of the discarded couplings. The truncation error is expected to follow their summed strength, \(R^{-(p-1)}\). When \(R\) becomes comparable to \(\xi\), these correlations begin to decay, leading to faster convergence.

\begin{figure*}[t]
  \centering
  \includegraphics[width=\textwidth]{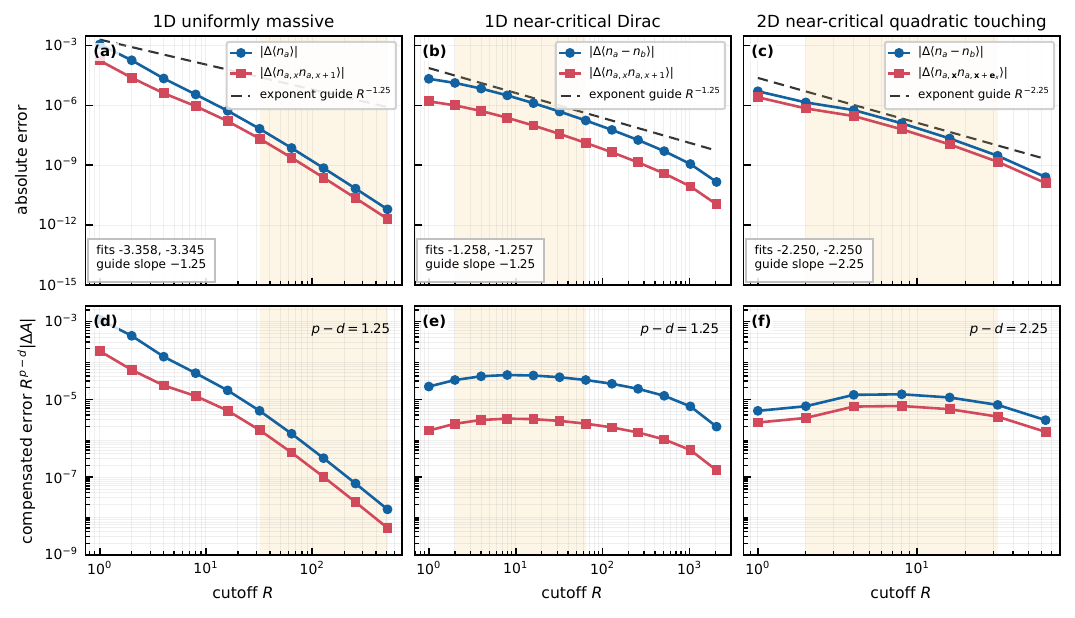}
  \caption{Gaussian truncation benchmarks.  The columns show the
  one-dimensional uniformly massive model, the one-dimensional
  near-critical Dirac model, and the two-dimensional near-critical
  quadratic touching.  The top row gives absolute errors relative to a
  large-cutoff reference; the bottom row multiplies the same errors by
  the power $R^{p-d}$ from Eq.~\eqref{eq:two_body_final}.  The shaded
  intervals are the fit windows, and the two fitted slopes in each top
  panel follow the legend order.  In panels (a) and (d), the convergence is much faster than the general
  $R^{-1.25}$ rate.  Panels (b) and (e) approach that rate below the
  one-dimensional correlation length.  Panels (c) and (f) are consistent
  with the two-dimensional $R^{-2.25}$ crossover and logarithmic correction
  in Eq.~\eqref{eq:qbt_log_corrected_scaling}.  All panels in a given row
  use the same vertical scale.  Each dashed curve is a display-only guide
  proportional to $R^{-(p-d)}$, with its prefactor selected separately in
  each column so that it lies above every displayed point.  The dashed
  curves are therefore guides to the exponent in the general bound, not
  evaluations of the theorem's nonuniversal quantitative prefactor; the
  compensated panels likewise test scaling rather than that prefactor.}
  \label{fig:gaussian_benchmarks}
\end{figure*}

\paragraph{Two-dimensional near-critical model.}
To test the dimensional shift explicitly, we consider a two-dimensional quadratic band
touching with Euclidean range cutoff,
\begin{align}
  h_R^{(2)}(\mathbf k)
  ={}&d_x(\mathbf k)\sigma_x+d_y(\mathbf k)\sigma_y
  +d_{z,R}(\mathbf k)\sigma_z,
  \label{eq:near_critical_2d_model}
  \\
  d_x(\mathbf k)&=\cos k_x-\cos k_y,
  \qquad
  d_y(\mathbf k)=\sin k_x\sin k_y,
  \\
  d_{z,R}(\mathbf k)
  ={}&\delta+\lambda
  \sum_{0<|\mathbf r|_2\le R}
  \frac{e^{\ii\mathbf k\cdot\mathbf r}}
  {(1+|\mathbf r|_2)^p}.
  \label{eq:near_critical_2d_tail}
\end{align}
The last expression is real because the sum contains both
$\mathbf r$ and $-\mathbf r$.  Moreover,
\begin{equation}
  d_x(\mathbf k)^2+d_y(\mathbf k)^2
  =\left(1-\cos k_x\cos k_y\right)^2.
  \label{eq:qbt_identity}
\end{equation}
The massless model has quadratic touchings at $(0,0)$ and
$(\pi,\pi)$; near either one the left-hand side of
Eq.~\eqref{eq:qbt_identity} is $|\mathbf q|^4/4+O(|\mathbf q|^6)$.
Thus $\delta>0$ opens a gap and gives
$\xi\sim\delta^{-1/2}$.

Writing
$S_p=\sum_{\mathbf r\ne0}(1+|\mathbf r|_2)^{-p}$, grouping sites into $\ell^{\infty}$ shells \footnote{The $\ell^\infty$ shell is \(Q_n=\{\mathbf r:\|\mathbf r\|_\infty:=\max\{|r_x|, |r_y|\}=n\}\), which contains \(8n\) sites, and then we use \(|\mathbf r|_2\ge n\) on \(Q_n\).} gives the rigorous estimate
\begin{equation}
  S_p\le8\sum_{n\ge1}\frac{n}{(1+n)^p}
  =8\bigl[\zeta(p-1)-\zeta(p)\bigr].
  \label{eq:qbt_lattice_sum_bound}
\end{equation}
Along the direct interpolation
\begin{equation}
  h_{R,s}^{(2)}(\mathbf k)
  =
  h_R^{(2)}(\mathbf k)
  +s\left[
    h_\infty^{(2)}(\mathbf k)-h_R^{(2)}(\mathbf k)
  \right],
  \qquad 0\le s\le1,
\end{equation}
the long-range mass contribution has magnitude at most
$|\lambda|S_p$.  Hence
\begin{equation}
  E_{R,s}^{(2)}(\mathbf k)
  \ge \delta-|\lambda|S_p.
  \label{eq:qbt_path_gap}
\end{equation}
For $p=4.25$, $\delta=3\times10^{-5}$, and
$\lambda=10^{-5}$, Eq.~\eqref{eq:qbt_path_gap} is at least
$2.2624\times10^{-5}$.  We use a $2048^2$ momentum grid,
$R_{\mathrm{ref}}=960$, and dyadic cutoffs through $R=64$.

For the orbital imbalance, the near-touching linear-response kernel is
\begin{equation}
  \frac{d_x^2+d_y^2}
  {(d_x^2+d_y^2+\delta^2)^{3/2}}
  \sim\frac{2}{|\mathbf q|^2}
  \qquad
  \bigl(\sqrt\delta\ll|\mathbf q|\ll1\bigr).
  \label{eq:qbt_response_kernel}
\end{equation}
Its two-dimensional Fourier transform is logarithmic.  Hence, for the
truncation error $\Delta A_R$ defined in
Eq.~\eqref{eq:gaussian_error_definition} and for $1\ll R\ll\xi$, one expects
\begin{equation}
  |\Delta A_R|
  \sim R^{-(p-2)}
  \left[c_1\log(\xi/R)+c_2\right].
  \label{eq:qbt_log_corrected_scaling}
\end{equation}
A fit over $2\le R\le32$ gives $-2.2503$ for both
$\langle M_x\rangle$ and
$\langle n_{a,\mathbf x}n_{a,\mathbf x+\mathbf e_x}\rangle$, compared
with $-(p-2)=-2.25$.  The compensated curve is not expected to be
perfectly flat because of the logarithm in
Eq.~\eqref{eq:qbt_log_corrected_scaling}.  The agreement of the fitted
slopes with $-2.25$,  supports the predicted finite-window crossover; see Fig.~\ref{fig:gaussian_benchmarks}(c)(f).

As a convergence check, increasing the one-dimensional near-critical
grid from $2^{20}$ to $2^{21}$ points and $R_{\mathrm{ref}}$ from
$2^{18}$ to $2^{19}$ changes every observable value used in the fit by
at most $7\times10^{-16}$.  In two dimensions, increasing the grid from
$1024^2$ to $2048^2$ at
$R_{\mathrm{ref}}=480$ changes either fitted slope by less than
$6\times10^{-4}$; on the $2048^2$ grid, increasing $R_{\mathrm{ref}}$
from $480$ to $960$ changes either slope by less than $4\times10^{-7}$.
These Gaussian examples complement
the interacting spin system construction in Appendix~\ref{app:sharpness}:
the latter establishes exponent sharpness over the full class allowed by
our hypotheses, without a translation-invariance restriction, whereas
Fig.~\ref{fig:gaussian_benchmarks} shows both enhanced Gaussian convergence
and a finite-window approach toward the general rate near criticality.

\section{Discussion}
\label{sec:discussions}

The results establish a volume-independent relation between the local
error caused by range truncation and the local mass of the discarded
interaction terms.  In finite volume this relation holds for a unique
ground state and more generally for an isolated
low-energy sector.  Successive finite-range shell flows give a
thermodynamic-limit branch for power-law interactions, while an
intrinsic infinite-volume direct flow gives superpolynomial estimates
under the stronger path assumptions of
Theorem~\ref{thm:main_infinite_direct}.  The argument also applies to
parity-even fermionic interactions.  For two-body couplings of order
$r^{-p}$, the resulting error is $O(R^{-(p-d)})$ when $p>2d$; the
non-translation-invariant example in Appendix~\ref{app:sharpness} attains
this rate and rules out a uniform improvement over the full class covered
by our hypotheses.  It does not settle the optimal rate of truncation errors in
translation-invariant or other structurally restricted subclasses.

These bounds provide a reference for range cutoffs used in
finite-size calculations.  For an observable supported in $X$, a
cutoff may be chosen by requiring the relevant tail bound to be smaller
than the desired accuracy divided by $|X|$.  This applies directly to
long-range lattice models and to effective Hamiltonians with nonlocal
couplings, provided the corresponding interpolation is gapped.  

The main unresolved assumption is the uniform gap along the direct or
shell path.  Proposition~\ref{prop:reference_stability_criterion}
verifies it in all sufficiently large volumes when a finite-range
truncation is a frustration-free
reference satisfying Local-TQO and Local-Gap and the remaining tail is
perturbatively small.  The Gaussian models of
Section~\ref{sec:massive_fermion_benchmark} verify it by exact
one-particle estimates.  These cases motivate a softer open question. Let $\Phi$ be a self-adjoint, polynomially decaying interaction on $\mathbb Z^d$ (i.e., $\snorm{\Phi}_{F_\eta}<\infty$ for some $\eta>d$),
and suppose that $\Phi$ has a locally unique GNS-gapped ground state.
Does there exist $R_*<\infty$ such that, for every $R>R_*$, the truncated
interaction $\Phi_{\le R}$ has a GNS-gapped ground state?  The gap may
depend on $R$.  A common lower bound, uniqueness of the ground
states after truncation, or a gap along the direct or shell interpolation would be a
separate, stronger conclusion.

Several further questions are worth exploring. First, can the response
estimate for two-body interactions be extended from \(p>2d\) to the
summable regime \(d<p\le2d\) under comparably general assumptions?
Second, which structural properties yield convergence faster than the
local-tail bound, as in the uniformly massive Gaussian example? It
would also be useful to construct an intrinsic infinite-volume direct
flow for power-law decaying interactions; the present shell
construction avoids assuming such a flow but selects a
cutoff-dependent branch. Open boundaries require separate treatment
in phases with gapless edge modes, because a bulk gap need not imply
the finite-volume gap assumed here. Extending these local-observable
estimates to gapless systems is also interesting, but may require methods beyond the
spectral flow method in gapped systems.

\begin{acknowledgments}
The author acknowledges GPT 5.6 Sol for assistance in refining the Applications section and in drafting the manuscript. The author has verified all AI-generated content and is responsible for all analytical and numerical results included in this manuscript.
\end{acknowledgments}

\section*{Data Availability}
The code and data supporting the numerical results in this article are
available at \url{https://github.com/kangle-morning/truncated-interaction-gaussian-fermion}.

\clearpage
\bibliography{reference}

\clearpage
\appendix

\section{Isolated finite-volume sectors and matched states}
\label{app:finite_volume_sectors}

This section proves the isolated low-energy sector part of
Theorem~\ref{thm:main_finite_response} and discusses its state
interpretation.  The comparison is made after identifying the two
isolated sectors by the chosen spectral flow.  It therefore requires no
assumption of local indistinguishability within either sector.

Let $s\mapsto H_{\Lambda}(s)$ be one of the direct or shell paths and
let $P_s$ be the continuously tracked projection in
Eq.~\eqref{eq:sector_projection}.  Let $U_s$ and $\alpha_s$ be the
spectral-flow unitary and observable flow defined in
Eqs.~\eqref{eq:finite_spectral_flow_unitary} and
\eqref{eq:finite_spectral_flow_automorphism}.  The transport identity
\eqref{eq:finite_spectral_flow_transport} uses only the external
separation \eqref{eq:isolated_sector_gap}; eigenvalues may split or
cross inside the isolated band.

\begin{proposition}[Compressed-observable and matched-state bounds]
\label{prop:compressed_sector_bound}
Under the hypotheses of Proposition~\ref{prop:weighted_response}, let
$P_0$ and $P_1$ be the spectral projections onto the isolated
low-energy sector at $s=0$ and $s=1$, respectively.
Then, for every $A_X\in\mA_X$,
\begin{equation}
  \snorm{
    U_1^*P_1A_XP_1U_1-P_0A_XP_0
  }
  \le
  \snorm{\alpha_1(A_X)-A_X}.
  \label{eq:compression_from_automorphism}
\end{equation}
For the truncation specializations, suppose that the chosen direct path satisfying the finite-volume 
Assumption~\ref{assumption:adiabaticity-direct-path}, or that the chosen
shell path belongs to a family satisfying
Assumption~\ref{assumption:adiabaticity-shell-path}, respectively.
Then the direct and shell paths satisfy
Eqs.~\eqref{eq:main_direct_sector} and
\eqref{eq:main_shell_sector}, respectively.
\end{proposition}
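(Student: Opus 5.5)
The first inequality is purely algebraic and rests on the transport identity \eqref{eq:finite_spectral_flow_transport}, $P_1=U_1P_0U_1^*$. Conjugating by $U_1$ gives $U_1^*P_1=P_0U_1^*$ and $P_1U_1=U_1P_0$. Hence
\begin{equation*}
  U_1^*P_1A_XP_1U_1
  =P_0U_1^*A_XU_1P_0
  =P_0\,\alpha_1(A_X)\,P_0 .
\end{equation*}
Subtracting $P_0A_XP_0$ yields
\begin{equation*}
  U_1^*P_1A_XP_1U_1-P_0A_XP_0=P_0\bigl(\alpha_1(A_X)-A_X\bigr)P_0 .
\end{equation*}
Since $\snorm{P_0}\le1$, this proves Eq.~\eqref{eq:compression_from_automorphism}. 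No structure inside the sector is used, so the bound is independent of the rank of $P_0$ and of the band width $\delta_\Lambda$. The only thing to check is that the transport identity indeed holds for the isolated-sector setting. It follows from $\dot P_s=-\ii[G_s,P_s]$, Eq.~\eqref{eq:finite_spectral_projection_equation}, together with the uniqueness of the solution of $\partial_s(U_s^*P_sU_s)=0$. The inverse-Liouvillian property relied on there requires only the external separation \eqref{eq:isolated_sector_gap} and the $C^1$ tracking of $I_\Lambda(s)$.

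For the truncation specializations, I would combine Eq.~\eqref{eq:compression_from_automorphism} with Corollary~\ref{cor:local_mass_bound}. Under either finite-volume assumption the gap is uniform in $s$ and over the family. By Eq.~\eqref{eq:finite_uniform_background}, the background norms along both paths are bounded by $M=\snorm{\Phi}_{F_\eta}$, so the hypotheses of Proposition~\ref{prop:weighted_response} hold with constants independent of $\Lambda$, $R$, and $R'$.

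On the direct path, $\dot\Phi_s=\Phi_{>R}$, and this gives $\mT_{\dot\Phi_s}=\tailmass(R)$. On the shell path, $\dot\Phi_s=\Phi_{(R,R']}$, and this gives $\mT_{\dot\Phi_s}\le\tailmass(R)$. Applying Eq.~\eqref{eq:local_mass_response} then yields
\begin{equation*}
  \snorm{\alpha_1(A_X)-A_X}\le C'|X|\snorm{A_X}\tailmass(R).
\end{equation*}
Substituting this into Eq.~\eqref{eq:compression_from_automorphism} gives Eqs.~\eqref{eq:main_direct_sector} and \eqref{eq:main_shell_sector}.

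Two identifications remain. In the direct case, the prescribed endpoint $P_1$ must coincide with $P_\Lambda$, which Assumption~\ref{assumption:adiabaticity-direct-path} requires. In the shell case, the endpoint is the prescribed $P_{\Lambda,R'}$, guaranteed by compatibility in Assumption~\ref{assumption:adiabaticity-shell-path}.

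I expect the main subtlety to be bookkeeping rather than analysis. The spectral-flow unitary $U_1$ in the statement must be the same one whose automorphism $\alpha_1$ is controlled by Proposition~\ref{prop:weighted_response}, namely the one generated by $G_s=-\mathcal I_s(\dot H_s)$ with the fixed filter $W_\gamma$. The constant must not pick up any dependence on the sector: the Lieb--Robinson input enters only through the Heisenberg dynamics $\tau^s_t$, not through $P_s$, so this holds.
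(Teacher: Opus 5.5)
Your proposal is correct and matches the paper's proof. The transport identity $P_1=U_1P_0U_1^*$ yields $U_1^*P_1A_XP_1U_1-P_0A_XP_0=P_0\bigl(\alpha_1(A_X)-A_X\bigr)P_0$, and contractivity of the compression gives Eq.~\eqref{eq:compression_from_automorphism}. Corollary~\ref{cor:local_mass_bound} with $\mT_{\Phi_{>R}}=\tailmass(R)$ and $\mT_{\Phi_{(R,R']}}\le\tailmass(R)$ then gives the direct and shell specializations, independently of the sector rank and band width.
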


\begin{proof}
Equation~\eqref{eq:finite_spectral_flow_transport} gives the exact identity
\begin{align}
  &U_1^*P_1A_XP_1U_1-P_0A_XP_0
  \nonumber\\
  &\qquad=
  P_0\bigl(\alpha_1(A_X)-A_X\bigr)P_0.
  \label{eq:compressed_response_identity}
\end{align}
Compression by an orthogonal projection is contractive, which proves
Eq.~\eqref{eq:compression_from_automorphism}.  For the direct path,
$\dot\Phi_s=\Phi_{>R}$ and
$\mT_{\dot\Phi_s}=\tailmass(R)$.  For a shell path,
$\dot\Phi_s=\Phi_{(R,R']}$ and
$\mT_{\dot\Phi_s}\le\tailmass(R)$.  Applying
Eq.~\eqref{eq:local_mass_response} in the two cases proves the stated
projector bounds.  Neither this argument nor its constant depends on the
rank of $P_s$ or on the internal width of its spectral band.
\end{proof}

In particular, the result controls all diagonal and off-diagonal matrix
elements of the local observable in the spectrally matched bases.  For
$\psi,\varphi\in\operatorname{Ran}P_0$,
\begin{align}
  &\abs{
    \langle U_1\psi,A_XU_1\varphi\rangle
    -
    \langle\psi,A_X\varphi\rangle
  }
  \nonumber\\
  &\qquad\le
  C|X|\snorm{A_X}\tailmass(R)
  \snorm{\psi}\snorm{\varphi},
  \label{eq:sector_matrix_elements}
\end{align}
For a shell step from $R$ to $R'$, the same formula holds with its shell
spectral flow and the same upper bound by $\tailmass(R)$.

For completeness, let
\begin{equation}
  \mathfrak S(P)
  :=
  \left\{
    \rho\ge0:
    \operatorname{Tr}\rho=1,\ 
    \rho=P\rho P
  \right\}
  \label{eq:sector_state_space}
\end{equation}
be the density matrices supported in the range of $P$.  If
$\rho_0\in\mathfrak S(P_0)$, define its \emph{spectral-flow-matched}
state at $s=1$ by
\begin{equation}
  \rho_1=U_1\rho_0U_1^*\in\mathfrak S(P_1).
  \label{eq:matched_sector_state}
\end{equation}
For self-adjoint $A_X$,
\begin{align}
  &\abs{
    \operatorname{Tr}(\rho_1A_X)
    -
    \operatorname{Tr}(\rho_0A_X)
  }
  \nonumber\\
  &\qquad=
  \abs{
    \operatorname{Tr}\left[
      \rho_0P_0
      \bigl(\alpha_1(A_X)-A_X\bigr)P_0
    \right]
  }
  \nonumber\\
  &\qquad\le
  \snorm{
    P_0\bigl(\alpha_1(A_X)-A_X\bigr)P_0
  }.
  \label{eq:matched_state_estimate}
\end{align}
Thus every state in one sector has a matched state in the other sector
obeying the same local distinguishability bound as in the rank-one
case.

This matching can equivalently be expressed as a local Hausdorff bound.
For two sets of finite-volume states define
\begin{align}
  d_{\mathrm H,X}(\mathcal S_0,\mathcal S_1)
  :=
  \max\biggl\{
  &\sup_{\rho_0\in\mathcal S_0}
    \inf_{\rho_1\in\mathcal S_1}d_X(\rho_0,\rho_1),
  \nonumber\\[-1mm]
  &\sup_{\rho_1\in\mathcal S_1}
    \inf_{\rho_0\in\mathcal S_0}d_X(\rho_0,\rho_1)
  \biggr\}.
  \label{eq:local_hausdorff_definition}
\end{align}
Because conjugation by $U_1$ is a bijection from
$\mathfrak S(P_0)$ to $\mathfrak S(P_1)$,
\begin{equation}
  d_{\mathrm H,X}\bigl(
    \mathfrak S(P_0),\mathfrak S(P_1)
  \bigr)
  \le
  C|X|\tailmass(R)
  \label{eq:sector_hausdorff}
\end{equation}
for either interpolation path, with the appropriate tail bound.  This
does not say that arbitrary, unmatched states in the two sectors are
close.  Such a statement would require an additional
local-indistinguishability hypothesis.

For several finite-volume shell steps, the endpoint compatibility
specified below Eq.~\eqref{eq:sector_projection} permits the recursion
\begin{equation}
  \rho_{k+1}=U_k\rho_kU_k^*,
  \qquad
  \rho_k\in\mathfrak S(P_{\Lambda,R_k}).
  \label{eq:finite_sector_shell_recursion}
\end{equation}
Each matched pair obeys the shell comparison bound, and these bounds
may then be summed by the triangle inequality.  When the $P_{\Lambda,R_k}$ are exact
ground-space projections, every $\rho_k$ is an exact ground state.  For
approximately degenerate bands, the same finite-volume statements hold,
but a thermodynamic ground-state conclusion requires their widths
$\delta_{\Lambda_n}$ to vanish along the chosen exhaustion.
Theorem~\ref{thm:main_infinite_shell} deliberately retains its unique exact
ground-state assumptions and therefore does not use this additional
extension.

\section{Lieb--Robinson bounds for individual terms}
\label{app:lr_bound}

This section separates the two dynamical locality inputs used in the
manuscript.  We first record the standard finite-volume $F$-function
bound and its diameter-norm specialization.  We then state the improved
long-range bound of Ref.~\cite{teufel2025liebrobinsonboundsautomorphicequivalence} (c.f. Refs.~\onlinecite{Matsuta_2016,Else_2020})
in the spin-system conventions used here.  The final subsection combines
that bound with the inverse-Liouvillian filter and the spin conditional
expectation to prove the locality estimate for each interaction term used in
Proposition~\ref{prop:weighted_response}.  The two Lieb--Robinson
theorems are imported with precise assumptions; the passage from the
improved bound to this estimate is proved below.

\subsection{Standard finite-volume \texorpdfstring{$F$}{F}-function bound}

For completeness, we record the formulation of Theorem~3.1 of
Ref.~\cite{Nachtergaele_2019}.  Let
$(\Gamma,\dist)$ be a countable metric space, and write
$d_{xy}:=\dist(x,y)$.  A nonincreasing function
$F\colon[0,\infty)\to(0,\infty)$ is an \emph{$F$-function} if
\begin{align}
  \snorm{F}_1
  &:=
  \sup_{x\in\Gamma}
  \sum_{y\in\Gamma}F(d_{xy})
  <\infty,
  \label{eq:lr_F_integrability}\\
  C_F
  &:=
  \sup_{x,y\in\Gamma}
  \sum_{z\in\Gamma}
  \frac{F(d_{xz})F(d_{zy})}{F(d_{xy})}
  <\infty.
  \label{eq:lr_F_convolution}
\end{align}
The first condition is uniform integrability and the second is the
convolution condition.

Let $I\Subset\mathbb R$ be an interval and let
$t\mapsto\Xi_t$ be a time-dependent interaction such that
$t\mapsto\Xi_t(Z)\in\mA_Z$ is strongly continuous for every
$Z\Subset\Gamma$.  In the present finite-dimensional local algebras this
is equivalent to norm continuity.  Define its pair-distance norm by
\begin{equation}
  J_F(\Xi_t)
  :=
  \sup_{x,y\in\Gamma}
  \frac{1}{F\bigl(\dist(x,y)\bigr)}
  \sum_{\substack{Z\Subset\Gamma\\x,y\in Z}}
  \snorm{\Xi_t(Z)},
  \label{eq:lr_pair_norm_general}
\end{equation}
and assume that $t\mapsto J_F(\Xi_t)$ is locally bounded on $I$.  For
$\Lambda\Subset\Gamma$, let
\begin{align}
  H_\Lambda^\Xi(t)
  &:=
  \sum_{Z\Subset\Lambda}\Xi_t(Z),
  \nonumber\\
  \ii\partial_tU_\Lambda^\Xi(t,s)
  &=
  H_\Lambda^\Xi(t)U_\Lambda^\Xi(t,s),
  \qquad
  U_\Lambda^\Xi(s,s)=\mathbf 1,
  \nonumber\\
  \tau_{t,s}^{\Lambda,\Xi}(A)
  &:=
  U_\Lambda^\Xi(t,s)^*A\,U_\Lambda^\Xi(t,s).
  \label{eq:lr_finite_dynamics}
\end{align}
For $X\Subset\Gamma$, define its interaction boundary over $I$ as
\begin{equation}
  \partial_\Xi X
  :=
  \{x\in X:x\text{ satisfies Eq.~\eqref{eq:lr_boundary_condition}}\},
  \label{eq:lr_interaction_boundary}
\end{equation}
where the condition is that there exist $Z\Subset\Gamma$ and $r\in I$
such that
\begin{equation}
  x\in Z,\qquad
  Z\cap(\Gamma\setminus X)\ne\varnothing,\qquad
  \Xi_r(Z)\ne0.
  \label{eq:lr_boundary_condition}
\end{equation}

\begin{theorem}[Lieb--Robinson bound]
\label{thm:appendix_lr}
Let $F$ and $t\mapsto\Xi_t$ satisfy all the assumptions above.
Let $X,Y\Subset\Gamma$ be disjoint, let
$X\cup Y\Subset\Lambda\Subset\Gamma$, and let
$A\in\mA_X$ and $B\in\mA_Y$.  Set
\begin{align}
  I_{t,s}(\Xi)
  &:=
  C_F
  \int_{\min\{t,s\}}^{\max\{t,s\}}
  J_F(\Xi_r)\,\dd r,
  \label{eq:lr_time_integral}\\
  D_{F,1}^\Xi(X,Y)
  &:=
  \sum_{x\in X}\sum_{y\in\partial_\Xi Y}
  F\bigl(\dist(x,y)\bigr),
  \nonumber\\
  D_{F,2}^\Xi(X,Y)
  &:=
  \sum_{x\in\partial_\Xi X}\sum_{y\in Y}
  F\bigl(\dist(x,y)\bigr),
  \nonumber\\
  D_F^\Xi(X,Y)
  &:=
  \min\{D_{F,1}^\Xi(X,Y),D_{F,2}^\Xi(X,Y)\}.
  \label{eq:lr_boundary_factor}
\end{align}
Then, for all $s,t\in I$,
\begin{equation}
  \snorm{[\tau_{t,s}^{\Lambda,\Xi}(A),B]}
  \le
  \frac{2\snorm{A}\snorm{B}}{C_F}
  \left(e^{2I_{t,s}(\Xi)}-1\right)
  D_F^\Xi(X,Y).
  \label{eq:lr_full_bound}
\end{equation}
All quantities on the right-hand side are defined using the ambient
interaction on $\Gamma$; in particular, the estimate is uniform in
$\Lambda$.
\end{theorem}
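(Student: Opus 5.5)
The plan is the standard Lieb--Robinson argument in Nachtergaele--Sims--Young form. I would carry it out first for time-independent interactions and then note that only local boundedness of $J_F(\Xi_r)$ enters. Fix $s$ and $B\in\mA_Y$, and define
\[
f(t):=[\tau_{t,s}^{\Lambda,\Xi}(A),B],
\]
where $\tau$ is the dynamics generated by $H_\Lambda^\Xi$. The idea is to split the generator into terms touching $X$ and terms not touching it. Concretely, I would compare $\tau_{t,s}^{\Lambda,\Xi}(A)$ with the dynamics generated by $H_\Lambda^\Xi(r)-\sum_{Z\cap X\neq\varnothing}\Xi_r(Z)$. Those remaining terms commute with $A$, so that auxiliary dynamics leaves $A$ invariant up to an interaction-picture change. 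Differentiating in $t$, the terms with $Z\cap X=\varnothing$ generate a norm-preserving (unitary) evolution of the commutator. What is left is an inhomogeneous term. A Gr\"onwall/Duhamel argument then gives
\[
\snorm{f(t)}\le\snorm{[A,B]}+2\snorm{A}\sum_{Z:\,Z\cap X\ne\varnothing,\ Z\cap X^c\neq\varnothing}\int_{s}^{t}\snorm{[\tau_{r,s}(\Xi_r(Z)),B]}\,\dd r .
\]
Restricting to sets $Z$ that straddle the boundary is exactly what produces the boundary sets $\partial_\Xi X$. Since $X$ and $Y$ are disjoint, $[A,B]=0$.

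Next I normalize by setting
\[
C_B(Z,t):=\sup_{B\in\mA_Y,\ \snorm{B}\le1}\snorm{[\tau_{t,s}(A_Z),B]}/\snorm{A_Z},
\]
and iterate the inequality. Iterating generates a Dyson-type series. The $n$-th term is a sum over chains $Z_1,\dots,Z_n$ of interaction sets, each meeting the next and the last meeting $Y$. It carries the time-ordered integral $\int\cdots\int\prod_i\snorm{\Xi_{r_i}(Z_i)}$, which equals $\frac{1}{n!}\bigl(\int J\bigr)^n$ after bounding. To bound the chain sums, I pick points $x_i\in Z_i\cap Z_{i+1}$. Each factor is controlled by $\sum_{Z\ni x,y}\snorm{\Xi(Z)}\le J_F(\Xi)F(d_{xy})$. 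The intermediate sums over $x_i$ are then absorbed by the convolution condition~\eqref{eq:lr_F_convolution}, each contributing one factor $C_F$. The initial point ranges over $\partial_\Xi X$ and the final over $Y$, which gives
\[
\sum_{x\in\partial_\Xi X}\sum_{y\in Y}F(d_{xy})\,\frac{(2C_F\int J)^n}{C_F\, n!}.
\]
Summing over $n\ge1$ yields $\frac{2\snorm{A}\snorm{B}}{C_F}(e^{2I_{t,s}}-1)D_{F,2}^\Xi(X,Y)$.

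The bound with $D_{F,1}$ follows by the symmetric argument with the roles of $A$ and $B$ exchanged. To do this I would use $\snorm{[\tau_{t,s}(A),B]}=\snorm{[A,\tau_{t,s}^{-1}(B)]}$, where the inverse flow is again a Lieb--Robinson dynamics of the same interaction, so the boundary of $Y$ appears instead. Taking the minimum gives~\eqref{eq:lr_full_bound}.

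Uniformity in $\Lambda$ holds because all chain sums are dominated by sums over $\Gamma$ with the ambient $J_F$. Restricting to $Z\subset\Lambda$ only drops nonnegative terms, and the boundary of $X$ relative to $\Lambda$ is contained in $\partial_\Xi X$.

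The main obstacle is the bookkeeping in the iteration, specifically keeping the factor exactly $2$ per step together with the $1/C_F$ normalization. I would follow Theorem~3.1 of Ref.~\cite{Nachtergaele_2019} verbatim at this stage. For the time-dependent case, the Duhamel step needs norm continuity of $r\mapsto\Xi_r(Z)$, which holds by finite dimensionality, and the Gr\"onwall step needs local integrability of $J_F$, both of which are assumed.
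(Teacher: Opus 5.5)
Your overall route is the one behind this statement. The paper does not reprove it; it imports Theorem~3.1 of Ref.~\cite{Nachtergaele_2019}, whose proof is a norm-preserving Gr\"onwall step followed by a Dyson-type iteration. Several parts of your sketch are fine: the chain bookkeeping (points $x_i\in Z_i\cap Z_{i+1}$, $n$ factors of $J_F$, $n-1$ factors of $C_F$), the use of $\tau_{t,s}^{-1}=\tau_{s,t}$ to get $D_{F,1}^\Xi$, and the uniformity remark.

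The problem is the first step. It is the only place where $\partial_\Xi X$, rather than all of $X$, can enter, and the mechanism you describe does not produce it. The terms with $Z\cap X=\varnothing$ already drop out of $[H_\Lambda^\Xi(t),A]$, so an interaction picture with respect to them changes nothing. Differentiating $f$ and using the Jacobi identity gives
\[
f'(t)=\ii\bigl[h_t,f(t)\bigr]-\ii\bigl[\tau_{t,s}(A),[h_t,B]\bigr],
\qquad
h_t:=\tau_{t,s}\Bigl(\sum_{Z\subseteq\Lambda,\ Z\cap X\neq\varnothing}\Xi_t(Z)\Bigr).
\]
So the unitary part is generated by the terms that \emph{meet} $X$, not those that avoid it. The inhomogeneous term contains every such $Z$, including those with $Z\subseteq X$. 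Iterating this gives only the weaker factor $\sum_{x\in X}\sum_{y\in Y}F(\dist(x,y))$. Your displayed inequality over straddling $Z$ is true, but it does not follow from what you wrote.

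The missing idea is an interaction picture with respect to the \emph{interior} Hamiltonian $H_X(t):=\sum_{Z\subseteq X}\Xi_t(Z)$. Its propagator $U_X(t,s)$ lies in $\mA_X$, so $A_X(t):=U_X(t,s)AU_X(t,s)^*\in\mA_X$ and $\snorm{A_X(t)}=\snorm{A}$. Write $U=U_\Lambda^\Xi(t,s)$ and set $\tilde A_t:=\tau_{t,s}(A_X(t))$. A direct computation gives $\partial_t\tilde A_t=\ii U^*[H_\Lambda^\Xi(t)-H_X(t),A_X(t)]U=\ii[\tau_{t,s}(H_\partial(t)),\tilde A_t]$. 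Here $H_\partial$ contains only the terms meeting both $X$ and $\Lambda\setminus X$: the interior terms cancel, and the exterior ones commute with $A_X(t)$. The Jacobi step applied to $[\tilde A_t,B]$ now yields your inequality. Since $A\mapsto A_X(t)$ is a bijection of the unit ball of $\mA_X$, the bound transfers to $\sup_A\snorm{[\tau_{t,s}(A),B]}$.

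This also fixes the recursion variable. It must be $C_B(Z,t):=\sup_{0\neq A\in\mA_Z}\snorm{[\tau_{t,s}(A),B]}/\snorm{A}$ at fixed $B$, a supremum over $A$ in the support algebra rather than over $B$. Only then does the inequality close in the support variable, with each new set required to straddle the previous one.
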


No spectral-gap assumption enters Theorem~\ref{thm:appendix_lr}.  The
gap is needed later, when the physical-time dynamics is integrated
against an inverse-Liouvillian filter to construct the spectral flow.
Theorem~3.3 of Ref.~\cite{Nachtergaele_2019} also permits arbitrary
time-independent self-adjoint on-site Hamiltonians, treated in the
interaction picture, without changing Eq.~\eqref{eq:lr_full_bound}.  In
our finite-dimensional setting all on-site terms are bounded and may
simply be included among the singleton interaction terms.

\subsection{Diameter-norm specialization and interpolation paths}

We next express the standard bound in terms of the diameter norm used
throughout the manuscript and verify uniformity along both truncation
paths.

\begin{corollary}[Diameter-norm specialization]
\label{cor:appendix_lr_diameter}
Let $\Gamma=\mathbb Z^d$ with the graph metric and let $\eta>d$.  Then
$F_\eta(r)=(1+r)^{-\eta}$ is an $F$-function, with
\begin{equation}
  \snorm{F_\eta}_1<\infty,
  \qquad
  C_\eta:=C_{F_\eta}
  \le
  2^\eta\snorm{F_\eta}_1<\infty.
  \label{eq:lr_polynomial_F_constants}
\end{equation}
If
\begin{equation}
  M_\eta(t)
  :=
  \sup_{x\in\Gamma}
  \sum_{Z\ni x}
  |Z|(1+\diam Z)^\eta\snorm{\Xi_t(Z)}
  \label{eq:lr_time_dependent_diameter_norm}
\end{equation}
is locally bounded on $I$, then, under the remaining assumptions of
Theorem~\ref{thm:appendix_lr}, set
\begin{equation}
  \widehat I_{t,s}
  :=
  C_\eta
  \int_{\min\{t,s\}}^{\max\{t,s\}}M_\eta(r)\,\dd r.
  \label{eq:lr_diameter_time_integral}
\end{equation}
Then
\begin{align}
  \snorm{[\tau_{t,s}^{\Lambda,\Xi}(A),B]}
  &\le
  \frac{2\snorm{A}\snorm{B}}{C_\eta}
  \left(e^{2\widehat I_{t,s}}-1\right)
  \nonumber\\
  &\quad\times
  \sum_{x\in X}\sum_{y\in Y}
  F_\eta\bigl(\dist(x,y)\bigr).
  \label{eq:lr_diameter_bound}
\end{align}
\end{corollary}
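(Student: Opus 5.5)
The plan is to verify the two $F$-function conditions for $F_\eta$ directly, and then to compare each quantity in Theorem~\ref{thm:appendix_lr} with its diameter-norm counterpart. The bound \eqref{eq:lr_diameter_bound} then follows by monotonicity. No dynamical argument is needed beyond invoking Theorem~\ref{thm:appendix_lr}.

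\emph{Step 1 (integrability).} Group the sites $y\in\mathbb Z^d$ into graph-distance shells around $x$. The shell at distance $r$ has at most $c_d(1+r)^{d-1}$ sites, so
\[
\snorm{F_\eta}_1\le c_d\sum_{r\ge0}(1+r)^{d-1-\eta}<\infty
\qquad\text{because }\eta>d,
\]
and the bound is uniform in $x$ by translation invariance of $\mathbb Z^d$.

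\emph{Step 2 (convolution constant).} By the triangle inequality,
\[
1+d_{xy}\le(1+d_{xz})+(1+d_{zy})\le2\max\{1+d_{xz},1+d_{zy}\},
\]
hence $(1+d_{xy})^\eta\le2^\eta\max\{(1+d_{xz})^\eta,(1+d_{zy})^\eta\}$. Write $a=(1+d_{xz})^\eta$ and $b=(1+d_{zy})^\eta$. Then
\[
\frac{F_\eta(d_{xz})F_\eta(d_{zy})}{F_\eta(d_{xy})}\le\frac{2^\eta\max\{a,b\}}{ab}=2^\eta\max\{F_\eta(d_{xz}),F_\eta(d_{zy})\}.
\]
To sum over $z$, split $\mathbb Z^d$ into the region where $d_{xz}\le d_{zy}$, on which the maximum is $F_\eta(d_{xz})$, and its complement, on which the maximum is $F_\eta(d_{zy})$. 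Each piece is bounded by $\snorm{F_\eta}_1$. This step needs the most care: the naive split produces $2^{\eta+1}\snorm{F_\eta}_1$ rather than $2^\eta\snorm{F_\eta}_1$. To reach the stated constant, I would try the sharper inequality $1+d_{xy}\le 2\max\{1+d_{xz},d_{zy}\}$, or an asymmetric split, so that the two regions share a single $\snorm{F_\eta}_1$. If this fails, I would simply record $C_\eta\le2^{\eta+1}\snorm{F_\eta}_1$. Only finiteness of $C_\eta$ is used afterwards, so this change would be harmless.

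\emph{Step 3 (comparison of norms).} If $x,y\in Z$, then $d_{xy}\le\diam Z$. Since $F_\eta$ is nonincreasing, $1/F_\eta(d_{xy})\le(1+\diam Z)^\eta$. Every $Z$ containing both $x$ and $y$ contains $x$, and $|Z|\ge1$, so
\[
J_{F_\eta}(\Xi_t)\le\sup_x\sum_{Z\ni x}(1+\diam Z)^\eta\snorm{\Xi_t(Z)}\le M_\eta(t).
\]
In particular $J_{F_\eta}(\Xi_t)$ is locally bounded, so Theorem~\ref{thm:appendix_lr} applies, and $I_{t,s}(\Xi)\le\widehat I_{t,s}$. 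Because $u\mapsto e^{2u}-1$ is increasing, the time factor in \eqref{eq:lr_full_bound} is at most $e^{2\widehat I_{t,s}}-1$.

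\emph{Step 4 (boundary factor).} Since $\partial_\Xi Y\subset Y$, we have
\[
D^\Xi_{F_\eta}(X,Y)\le D^\Xi_{F_\eta,1}(X,Y)\le\sum_{x\in X}\sum_{y\in Y}F_\eta(\dist(x,y)).
\]
Substituting Steps 3 and 4 into \eqref{eq:lr_full_bound}, with $C_F=C_\eta$, gives \eqref{eq:lr_diameter_bound}, uniformly in $\Lambda$.
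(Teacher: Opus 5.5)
Your Steps 1, 3 and 4 match the paper's argument: shell counting for $\snorm{F_\eta}_1$, the comparison $J_{F_\eta}(\Xi_t)\le M_\eta(t)$ via $d_{xy}\le\diam Z$ and $|Z|\ge1$, and the boundary factor via $\partial_\Xi Y\subset Y$. They correctly give Eq.~\eqref{eq:lr_diameter_bound}.

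The gap is in Step 2. As written it proves only $C_\eta\le2^{\eta+1}\snorm{F_\eta}_1$, which is weaker than the inequality stated in Eq.~\eqref{eq:lr_polynomial_F_constants}. Neither of your proposed repairs can close it. With a pointwise bound of the form $2^\eta\max\{F_\eta(d_{xz}),F_\eta(d_{zy})\}$, the region where you may use $F_\eta(d_{xz})$ is forced to be $\{d_{xz}\le d_{zy}\}$, so the split is not yours to choose. The sharpened triangle inequality $1+d_{xy}\le2\max\{1+d_{xz},d_{zy}\}$ leads to the same kind of bound. More concretely, translation invariance gives
\[
\sum_z\max\{F_\eta(d_{xz}),F_\eta(d_{zy})\}=2\snorm{F_\eta}_1-\sum_z F_\eta\bigl(\max\{d_{xz},d_{zy}\}\bigr).
\]
Since $\max\{d_{xz},d_{zy}\}\ge d_{xy}/2$, the last sum tends to zero as $d_{xy}\to\infty$. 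So the supremum of your bound really is $2^{\eta+1}\snorm{F_\eta}_1$.

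The missing idea, which the paper uses, is convexity of $u\mapsto u^\eta$, valid because $\eta>d\ge1$. Set $u=1+d_{xz}$ and $v=1+d_{zy}$. Then
\[
(1+d_{xy})^\eta\le(u+v)^\eta\le2^{\eta-1}\bigl(u^\eta+v^\eta\bigr),
\qquad\text{hence}\qquad
\frac{F_\eta(d_{xz})F_\eta(d_{zy})}{F_\eta(d_{xy})}\le2^{\eta-1}\bigl[F_\eta(d_{xz})+F_\eta(d_{zy})\bigr].
\]
Summing over $z$ gives $2^{\eta-1}\cdot2\snorm{F_\eta}_1=2^\eta\snorm{F_\eta}_1$. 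Compared with the max bound, this gains exactly the factor $2$ when $u\gg v$ or $v\gg u$, which is where most of the $z$-sum is concentrated.

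You are right that only finiteness of $C_\eta$ matters afterwards: Eq.~\eqref{eq:lr_diameter_bound} involves $C_\eta$ itself, not its upper bound. So your fallback would not affect any later result, but it would prove a weaker corollary than the one stated.
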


\begin{proof}
Uniform lattice-shell counting gives
$\snorm{F_\eta}_1<\infty$ for $\eta>d$.  If
$a=\dist(x,z)$, $b=\dist(z,y)$, and $c=\dist(x,y)$, then
$1+c\le(1+a)+(1+b)$.  Since $\eta>1$, convexity gives
\begin{equation}
  \frac{F_\eta(a)F_\eta(b)}{F_\eta(c)}
  \le
  2^{\eta-1}\bigl[F_\eta(a)+F_\eta(b)\bigr].
\end{equation}
Summing over $z$ proves
$C_\eta\le2^\eta\snorm{F_\eta}_1$.  Moreover, if $x,y\in Z$, then
$\dist(x,y)\le\diam Z$, and therefore
\begin{align}
  J_{F_\eta}(\Xi_t)
  &=
  \sup_{x,y}
  (1+\dist(x,y))^\eta
  \sum_{Z\ni x,y}\snorm{\Xi_t(Z)}
  \nonumber\\
  &\le
  \sup_x
  \sum_{Z\ni x}
  |Z|(1+\diam Z)^\eta\snorm{\Xi_t(Z)}
  =
  M_\eta(t).
  \label{eq:lr_diameter_controls_pair}
\end{align}
Finally, $\partial_\Xi X\Subset X$ and
$\partial_\Xi Y\Subset Y$, so
\begin{equation}
  D_{F_\eta}^\Xi(X,Y)
  \le
  \sum_{x\in X}\sum_{y\in Y}
  F_\eta\bigl(\dist(x,y)\bigr).
\end{equation}
The claim follows from Theorem~\ref{thm:appendix_lr}.
\end{proof}

We now specialize the corollary to the manuscript's interpolation
paths.  Fix the path parameter $u\in[0,1]$ and let
$\Phi_u^\sharp$ denote either
$\Phi_u^{(R,\infty)}$ or $\Phi_u^{(R,R')}$.  Bounding each interaction
term by its counterpart in $\Phi$ gives
\begin{equation}
  J_{F_\eta}(\Phi_u^\sharp)
  \le
  \snorm{\Phi_u^\sharp}_{F_\eta}
  \le
  \snorm{\Phi}_{F_\eta}
  =:M.
  \label{eq:lr_uniform_path_pair_norm}
\end{equation}
Thus the physical-time dynamics generated by $\Phi_u^\sharp$ satisfies,
uniformly in $\Lambda,u,R$, and $R'$, the following bound.  Define
\begin{equation}
  L_{\eta,M}(t)
  :=
  \frac{2}{C_\eta}\left(e^{2C_\eta M|t|}-1\right).
  \label{eq:lr_path_prefactor}
\end{equation}
Then
\begin{align}
  \snorm{[\tau_t^{\Lambda,u,\sharp}(A),B]}
  &\le
  L_{\eta,M}(t)\snorm{A}\snorm{B}
  \nonumber\\[-1mm]
  &\quad\times
  \sum_{x\in X}\sum_{y\in Y}
  F_\eta\bigl(\dist(x,y)\bigr)
  \nonumber\\
  &\le
  L_{\eta,M}(t)\snorm{A}\snorm{B}|X|\,|Y|
  \nonumber\\[-1mm]
  &\quad\times F_\eta\bigl(\dist(X,Y)\bigr).
  \label{eq:lr_path_specialization}
\end{align}
Here
$\tau_t^{\Lambda,u,\sharp}:=\tau_{t,0}^{\Lambda,\Phi_u^\sharp}$
denotes physical-time evolution; $u$ is held fixed and is not the
physical-time variable.

\medskip
\noindent\emph{Exponential decay and $F$-functions.}
The pure exponential $r\mapsto e^{-\mu r}$ is not an $F$-function on
$\mathbb Z^d$.  Indeed, with $x=0$, $y=ne_1$, and
$z=ke_1$ for $0\le k\le n$,
\begin{equation}
  \frac{
  \sum_{z\in\mathbb Z^d}
  e^{-\mu\dist(x,z)}e^{-\mu\dist(z,y)}
  }{
  e^{-\mu\dist(x,y)}
  }
  \ge n+1,
\end{equation}
so the convolution constant is infinite.  On the other hand, for
$\eta>d$ and $\kappa>0$,
\begin{equation}
  F_{\eta,\kappa}(r)
  :=
  e^{-\kappa r}(1+r)^{-\eta}
\end{equation}
is an $F$-function.  Indeed,
$F_{\eta,\kappa}\le F_\eta$ gives uniform integrability, and its
convolution ratio is bounded by that of $F_\eta$, because the triangle
inequality makes the additional exponential factor at most one.  If
$\snorm{\Phi}_{f_\mu}<\infty$, then for every $0<\kappa<\mu$,
\begin{align}
  J_{F_{\eta,\kappa}}(\Phi)
  &\le
  K_{\eta,\mu-\kappa}\snorm{\Phi}_{f_\mu},
  \label{eq:exp_norm_to_weighted_F}\\
  K_{\eta,\delta}
  &:=
  \sup_{r\ge0}(1+r)^\eta e^{-\delta r}<\infty.
  \label{eq:exp_polynomial_constant}
\end{align}
To see the first inequality, if $x,y\in Z$, then
\begin{equation}
  e^{\kappa\dist(x,y)}
  (1+\dist(x,y))^\eta
  \le
  K_{\eta,\mu-\kappa}e^{\mu\diam Z};
\end{equation}
summing over $Z\ni x,y$ and using $|Z|\ge1$ gives
Eq.~\eqref{eq:exp_norm_to_weighted_F}.
Thus the standard Lieb--Robinson kernel is available at every rate
$\kappa<\mu$, but not necessarily at the endpoint $\mu$.  This
kernel-rate loss is distinct from the finite-volume and shell truncation
exponents; see Section~\ref{sec:two_body_exponential}.

\subsection{Improved long-range Lieb--Robinson input}
\label{subsec:tw_improved_lr}

The standard estimate \eqref{eq:lr_path_specialization} has an
exponential physical-time prefactor.  An exact spectral-flow filter has
all polynomial moments but need not have an exponential moment, so that
estimate alone cannot be integrated to obtain the spatial power used in
Proposition~\ref{prop:weighted_response}.  The required input is the
following spin-system specialization of Theorem~6 of
Ref.~\cite{teufel2025liebrobinsonboundsautomorphicequivalence}.  The
reference proves the result for fermions and explains in its Sec.~VI
that it applies to spin systems without a parity restriction.

\begin{theorem}[Improved long-range Lieb--Robinson bound]
\label{thm:tw_improved_lr}
Let $\eta>d$, let
\begin{equation}
  \frac{d+1}{\eta+1}<\sigma<1,
  \label{eq:tw_sigma_range}
\end{equation}
and let $\Xi$ be a time-independent interaction on
$\Lambda\Subset\mathbb Z^d$ satisfying
$\snorm{\Xi}_{F_\eta}\le M$.  Let
$\tau_t^{\Lambda,\Xi}$ be its physical-time dynamics, let
$A\in\mA_Y$ and $B\in\mA_X$, and set $r=\dist(X,Y)$.  Define
\begin{equation}
  c_{\eta,d}:=\max\{2e\snorm{F_\eta}_1,1\},
  \qquad
  v_*:=c_{\eta,d}M.
  \label{eq:tw_velocity_bound}
\end{equation}
There is a finite constant
$C_\sigma=C_\sigma(d,\eta,\sigma)$, independent of
$\Lambda,\Xi,X,Y,A$, and $B$, such that
\begin{align}
  \snorm{[\tau_t^{\Lambda,\Xi}(A),B]}
  &\le
  2\snorm{A}\snorm{B}\min\{|X|,|Y|\}
  \mathcal L_{\eta,\sigma}(r,t).
  \label{eq:tw_improved_lr}
\end{align}
Here
\begin{align}
  \mathcal L_{\eta,\sigma}(r,t)
  &:=
  e^{v_*|t|-r^{1-\sigma}}
  \nonumber\\
  &\quad+
  C_\sigma(1+r)^{-\sigma\eta}v_*|t|
  \nonumber\\[-1mm]
  &\qquad\times
  \bigl(1+(v_*|t|)^{d/(1-\sigma)}\bigr).
  \label{eq:tw_lr_profile}
\end{align}
The same estimate holds in the presence of arbitrary on-site terms.
In the fermionic formulation, $\Xi$ is even and the same estimate holds
provided at least one of $A$ and $B$ is even.
More explicitly, the dependence on $\sigma$ in the cited theorem can
be chosen as
\begin{equation}
  C_\sigma
  =
  C_{d,\eta}
  \left(\sigma-\frac{d+1}{\eta+1}\right)^{-2}
  \frac{1}{1-\sigma}
  \Gamma\left(\frac{d}{1-\sigma}\right).
  \label{eq:tw_sigma_constant}
\end{equation}
\end{theorem}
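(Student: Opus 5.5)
The statement above is a specialization of Theorem~6 of Ref.~\cite{teufel2025liebrobinsonboundsautomorphicequivalence}, so the plan has two parts. First, translate our hypotheses into theirs and check the constants. Second, reprove the mechanism in enough detail to confirm the profile \eqref{eq:tw_lr_profile} and the form \eqref{eq:tw_sigma_constant} of $C_\sigma$.

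\emph{Step 1: norm translation.} By \eqref{eq:lr_diameter_controls_pair}, $J_{F_\eta}(\Xi)\le\snorm{\Xi}_{F_\eta}\le M$, so the pair-distance norm used in the cited theorem is controlled by our diameter norm. Likewise $\sup_x\sum_{Z\ni x}\snorm{\Xi(Z)}\le M$ bounds the one-site strength. The velocity $v_*=c_{\eta,d}M$ is then read off from the standard bound, Theorem~\ref{thm:appendix_lr}, with $C_{F_\eta}\le 2^\eta\snorm{F_\eta}_1$ from Corollary~\ref{cor:appendix_lr_diameter}. On-site terms are bounded and can be absorbed by the interaction picture, as remarked after Theorem~\ref{thm:appendix_lr}. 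For spins there is no grading, so the parity clause is vacuous; for fermions we need $\Xi$ even and one of $A,B$ even so that the commutator estimates go through.

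\emph{Step 2: mechanism.} Fix $r=\dist(X,Y)$ and set the cutoff $\ell=r^{\sigma}$. Split $\Xi=\Xi_{\le\ell}+\Xi_{>\ell}$ by diameter.
\begin{enumerate}
\item For the finite-range part, use the $F$-function $F(x)=e^{-x/\ell}(1+x)^{-\eta}$. Its pair norm is bounded by $e\,M$ because $\diam Z\le\ell$. Theorem~\ref{thm:appendix_lr} then gives a bound of order $e^{v_*|t|-r/\ell}=e^{v_*|t|-r^{1-\sigma}}$, times $\min\{|X|,|Y|\}$ after summing the kernel over one of the two sets. This is the first term of $\mathcal L_{\eta,\sigma}$.
\item For the tail, the Duhamel formula gives
\[
\tau_t(A)-\tau_t^{\le\ell}(A)=\ii\int_0^t\tau_u\bigl([\Xi_{>\ell},\tau_{t-u}^{\le\ell}(A)]\bigr)\,\dd u .
\]
Its commutator with $B$ is controlled by the terms $Z$ with $\diam Z>\ell$ that come close to both the effective light cone of $A$ and the support $X$. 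The finite-range bound from item 1 cuts off terms outside that cone.
\item The surviving terms are weighted by $(1+\diam Z)^{-\eta}$. Because such a $Z$ must bridge a distance comparable to $r$ once the cone radius is smaller than $r$, this weight yields $(1+r)^{-\sigma\eta}$. The leftover factor $v_*|t|$ comes from the time integral, and the number of cone sites contributes a volume factor.
\end{enumerate}

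\emph{Step 3: main obstacle.} The hard part is the volume factor. A single split gives the cone radius at time $t$ as roughly $\ell\,v_*|t|$, which is too crude. The power $(v_*|t|)^{d/(1-\sigma)}$ signals that the cutoff must be adapted to time rather than fixed. The cone radius $\rho(t)$ is where the exponential term turns over, i.e.\ where $v_*|t|\approx\rho^{1-\sigma}$; this gives a cone of volume $\rho^d=(v_*|t|)^{d/(1-\sigma)}$.

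I would first try a dyadic multiscale decomposition of $\Xi_{>\ell}$ into shells $2^k\ell<\diam Z\le 2^{k+1}\ell$. Each shell gets its own Duhamel step with the sub-exponential profile, and the shells are summed using $\eta>d$. Summing $e^{-x^{1-\sigma}}x^{d-1}$ over lattice shells produces $\Gamma(d/(1-\sigma))/(1-\sigma)$. The condition $\sigma>(d+1)/(\eta+1)$ is exactly what makes the shell sum converge, namely $\sigma\eta-d>1-\sigma$. Its margin $\sigma-(d+1)/(\eta+1)$ then enters quadratically, once from the spatial sum and once from the time integral, which matches \eqref{eq:tw_sigma_constant}. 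If this bookkeeping does not close directly, I would fall back on quoting Theorem~6 of the reference and only verify the constant dependence.
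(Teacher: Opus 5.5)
Your fallback (quote Theorem~6 of Ref.~\cite{teufel2025liebrobinsonboundsautomorphicequivalence} and check constants) is all the paper does. However, the check you describe is not the one needed.

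The reference is phrased in the diameter norms $\snorm{\Xi}_{\eta,n}$, not in the pair-distance norm $J_F$, and the paper matches them directly. One has $\snorm{\Xi}_{\eta,1}=\snorm{\Xi}_{F_\eta}\le M$ and $\snorm{\Xi}_{\eta,0}\le\snorm{\Xi}_{F_\eta}$. Hence the reference's velocity $\max\{2e\snorm{F_\eta}_\Lambda\snorm{\Xi}_{\eta,0},\snorm{\Xi}_{\eta,1}\}$ is at most $v_*=\max\{2e\snorm{F_\eta}_1,1\}M$. This uses $\snorm{F_\eta}_\Lambda\le\snorm{F_\eta}_1$, and the entry $1$ in $c_{\eta,d}$ covers $\snorm{\Xi}_{\eta,1}\le M$. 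Since $\mathcal L_{\eta,\sigma}$ is increasing in the velocity, the stated bound follows; that is where $c_{\eta,d}$ comes from.

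You instead read $v_*$ off the standard bound, Theorem~\ref{thm:appendix_lr}. Its rate is $2C_FJ_F$, which is at most $2eC_FM$ for your tilted kernel $F$. This rate is governed by the convolution constant $C_F$, not by $\snorm{F_\eta}_1$, so it does not produce $c_{\eta,d}$, and a larger velocity gives a weaker estimate than the one stated. Likewise, the quantity entering the velocity is $\snorm{\Xi}_{\eta,0}$, weighted by $(1+\diam Z)^\eta$, not the unweighted one-site sum. The on-site and fermionic clauses are also imported from the reference. Absorbing on-site terms into singletons would put them into $M$, and the interaction-picture remark you cite concerns the standard bound.

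The self-contained route of Steps~2--3 has a genuine gap. In your Duhamel formula the outer evolution $\tau_u$ is the \emph{full} dynamics, so item~1 does not control the commutator of the remainder with $B$. The only a priori locality available for the full dynamics is the standard bound. Its factor $e^{2C_FJ_F|t|}$ would then multiply the power-law term and destroy the polynomial time growth that is the point of the statement; that growth is what permits integration against a filter with only polynomial moments in Proposition~\ref{prop:appendix_termwise}.

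If you instead bound that commutator by $2\snorm{B}$ times a norm, all information about $X$ is lost. A fixed cutoff then yields only $\lesssim\snorm{A}\snorm{B}\abs{Y}\,v_*|t|\,(1+\ell v_*|t|)^d(1+\ell)^{-\eta}$. At $\ell=r^\sigma$ this is decay $(1+r)^{-\sigma(\eta-d)}$ with growth $|t|^{1+d}$, not the stated profile.

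In the stated profile, $(1+r)^{-\sigma\eta}$ is the per-site tail mass $M(1+\ell)^{-\eta}$ at $\ell=r^\sigma$, times a cone volume $(v_*|t|)^{d/(1-\sigma)}$ independent of $r$. A term genuinely bridging a distance $\sim r$ would carry $(1+r)^{-\eta}$, so the explanation in item~3 is inconsistent.

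Splitting $\Xi_{>\ell}$ into dyadic shells does not remove the volume loss. That loss comes from the light cone of the retained dynamics $\tau^{\le\ell}$, of radius $\sim\ell v_*|t|$. Your own $\Gamma(d/(1-\sigma))/(1-\sigma)$ computation sums $e^{v_*|t|-\rho^{1-\sigma}}$ over sites at distance $\rho$. That presupposes the retained dynamics already obeys the sub-exponential profile at every distance, i.e.\ that the estimate is fed back into itself across scales. This self-consistent iteration is the content of the cited proof; the paper imports it rather than reproving it, and your sketch does not supply it.

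The remaining constant claims are also unsupported. A plain lattice-shell sum of $\rho^{d-1}(1+\rho)^{-\sigma\eta}$ converges as soon as $\sigma\eta>d$, so it does not explain the stronger condition $\sigma\eta-d>1-\sigma$. The quadratic dependence on $\sigma-(d+1)/(\eta+1)$ is only guessed.

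So only the citation proves the statement. With the corrected constant check, your argument reduces to the paper's.
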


To match conventions, Ref.~\cite{teufel2025liebrobinsonboundsautomorphicequivalence}
uses
\begin{equation}
  \snorm{\Xi}_{\eta,n}
  :=
  \sup_{x\in\Gamma}
  \sum_{Z\ni x}|Z|^n(1+\diam Z)^\eta\snorm{\Xi(Z)}.
  \label{eq:tw_interaction_norm}
\end{equation}
Thus $\snorm{\Xi}_{\eta,1}=\snorm{\Xi}_{F_\eta}$ and
$\snorm{\Xi}_{\eta,0}\le\snorm{\Xi}_{F_\eta}$.  With the notation
\[
  \snorm{F_\eta}_\Lambda
  :=
  \sup_{x\in\Lambda}\sum_{y\in\Lambda}
  F_\eta\bigl(\dist(x,y)\bigr)
\]
the velocity in the cited theorem is
\begin{equation}
  \max\left\{
    2e\snorm{F_\eta}_{\Lambda}\snorm{\Xi}_{\eta,0},
    \snorm{\Xi}_{\eta,1}
  \right\},
\end{equation}
which is at most $v_*$ because
$\snorm{F_\eta}_{\Lambda}\le\snorm{F_\eta}_1$.  This proves that the
stated constants are uniform along any family satisfying
Eq.~\eqref{eq:background_bound}, including both interpolation paths.
The proof of Theorem~\ref{thm:tw_improved_lr} itself is not repeated
here; it is the iterative long-range Lieb--Robinson argument of the
cited reference.

\subsection{Inverse-Liouvillian bound for individual terms}
\label{subsec:termwise_localization}

We now prove the estimate for individual interaction terms used in the
main text.  Let $s\mapsto\Phi_s$ satisfy
Eq.~\eqref{eq:background_bound}, and suppose that $H_s$ carries an
isolated low-energy sector with separating gap at least $\gamma>0$.
We use the dynamics, filter, and inverse-Liouvillian map defined in
Eqs.~\eqref{eq:finite_path_physical_dynamics}--
\eqref{eq:finite_inverse_liouvillian}.
The same filter applies for all $s$ and $\Lambda$ and is the $\delta=0$
filter of
Ref.~\cite{teufel2025liebrobinsonboundsautomorphicequivalence}; thus no
assumption on the internal width of the isolated sector is needed.

For the finite spin system, the conditional expectation onto the
algebra outside $X$ has the Haar representation
\begin{equation}
  \mathbb E_{\Lambda\setminus X}(B)
  =
  \int_{\mathcal U(\mA_X)}U^*BU\,\dd\mu_X(U),
  \label{eq:conditional_expectation_haar}
\end{equation}
where $\mu_X$ is normalized Haar measure.  Hence
\begin{equation}
  \snorm{B-\mathbb E_{\Lambda\setminus X}(B)}
  \le
  \sup_{U\in\mathcal U(\mA_X)}\snorm{[B,U]}.
  \label{eq:conditional_expectation_commutator}
\end{equation}
Indeed,
$B-U^*BU=U^*[U,B]$, and Eq.~\eqref{eq:conditional_expectation_commutator}
follows by integrating and using the triangle inequality.

For a finite CAR algebra there is instead a unital, completely
positive, contractive conditional expectation
$\mathbb E^{\mathrm{CAR}}_{\Lambda\setminus X}$ which preserves parity
and maps even operators into
$(\mA^{\mathrm{CAR}}_{\Lambda\setminus X})^+$
\cite{teufel2025liebrobinsonboundsautomorphicequivalence}.  Its
commutator characterization says that an even $B$ is close to this
conditional expectation whenever $[B,U]$ is small for every
$U\in\mA_X^{\mathrm{CAR}}$.  If the interaction and its path
derivative are even, then $O_Z=\dot\Phi_s(Z)$ and
$\tau_t^s(O_Z)$ are even.  The fermionic version of
Theorem~\ref{thm:tw_improved_lr} therefore supplies the required
commutator estimate with an arbitrary $U$, because one of the two
operators is even.  We use $\mathbb E$ below for either conditional
expectation.

\begin{proposition}[Locality bound for individual terms]
\label{prop:appendix_termwise}
Fix $\eta>d$ and $0<\varepsilon<\eta-d$, and set
$\beta=\eta-\varepsilon>d$.  Under the uniform background and gap
assumptions above, there is a
constant $C=C(d,\eta,\varepsilon,\gamma,M)$ such that, for every
$s\in[0,1]$, every nonempty $Z\Subset\Lambda$, and every local region
$X$, the terms $K_{s,Z}$ defined in
Eq.~\eqref{eq:finite_spectral_flow_generator} satisfy
\begin{equation}
  \snorm{K_{s,Z}-\mathbb E_{\Lambda\setminus X}(K_{s,Z})}
  \le
  C|Z|\snorm{\dot\Phi_s(Z)}
  F_\beta\bigl(\dist(Z,X)\bigr).
  \label{eq:appendix_termwise}
\end{equation}
The constant is independent of $\Lambda,s,X,Z$, and $\diam Z$.
For a fermionic system the same statement holds when the interaction
and its path derivative are even, with $\mathbb E$ interpreted as the
CAR conditional expectation.
\end{proposition}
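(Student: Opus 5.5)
The plan is to combine the conditional-expectation commutator bound \eqref{eq:conditional_expectation_commutator} with the improved Lieb--Robinson bound of Theorem~\ref{thm:tw_improved_lr}, integrated against the filter $W_\gamma$.

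\emph{Reduction to commutators.} Write $O_Z=\dot\Phi_s(Z)$ and $r=\dist(Z,X)$. Since $K_{s,Z}=-\int W_\gamma(t)\tau^s_t(O_Z)\,\dd t$, Eq.~\eqref{eq:conditional_expectation_commutator} gives
\[
\snorm{K_{s,Z}-\mathbb E_{\Lambda\setminus X}(K_{s,Z})}\le\sup_{U\in\mathcal U(\mA_X)}\int_{\mathbb R}|W_\gamma(t)|\,\snorm{[\tau^s_t(O_Z),U]}\,\dd t .
\]
I will bound the integrand in two ways and use whichever is better for each $t$: the trivial bound $2\snorm{O_Z}$, and Theorem~\ref{thm:tw_improved_lr} with $A=O_Z\in\mA_Z$, $B=U\in\mA_X$, $\Xi=\Phi_s$ (uniformly admissible by Eq.~\eqref{eq:background_bound}). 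The latter gives $2\snorm{O_Z}\min\{|X|,|Z|\}\mathcal L_{\eta,\sigma}(r,t)\le2|Z|\snorm{O_Z}\mathcal L_{\eta,\sigma}(r,t)$. Using $\min\{|X|,|Z|\}\le|Z|$ is exactly what puts $|Z|$, and not $|X|$, into \eqref{eq:appendix_termwise}. If $Z\cap X\neq\varnothing$, then $r=0$ and the trivial bound already gives the claim with $F_\beta(0)=1$, so I assume $r\ge1$.

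\emph{Choice of $\sigma$ and the polynomial part.} I fix $\sigma\in((d+1)/(\eta+1),1)$ with $\sigma\eta\ge\beta=\eta-\varepsilon$. This is possible because $\sigma\eta\to\eta$ as $\sigma\to1$, so any $\sigma$ close enough to $1$ works; $\sigma$ then depends only on $d,\eta,\varepsilon$. The second term of $\mathcal L_{\eta,\sigma}$ is $(1+r)^{-\sigma\eta}$ times a polynomial in $|t|$ of degree $1+d/(1-\sigma)$. Its integral against $|W_\gamma|$ is bounded by $C\,m_N(W_\gamma)(1+r)^{-\sigma\eta}\le C(1+r)^{-\beta}$, with $N=\lceil1+d/(1-\sigma)\rceil$ finite by Eq.~\eqref{eq:finite_filter_moments}.

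\emph{The exponential part (main obstacle).} The first term, $e^{v_*|t|-r^{1-\sigma}}$, grows exponentially in $t$, while $W_\gamma$ has only polynomial (in fact subexponential) decay, so it cannot be integrated over all $t$. I will split the time integral at $T(r)=r^{1-\sigma}/(2v_*)$.

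For $|t|\le T$, the first term is at most $e^{-r^{1-\sigma}/2}$, which decays faster than every power of $r$; multiplied by $\snorm{W_\gamma}_{L^1}$, this is $\le C(1+r)^{-\beta}$.

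For $|t|>T$, I use the trivial bound $2\snorm{O_Z}\le2|Z|\snorm{O_Z}$ together with the tail estimate $\int_{|t|>T}|W_\gamma|\le m_N(W_\gamma)(1+T)^{-N}$. I choose $N$ with $N(1-\sigma)\ge\beta$, which makes this tail $\le C(1+r)^{-\beta}$.

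In this step, the constants depend on $\gamma$ only through the filter moments and on $M$ only through $v_*$. I expect the bookkeeping of this splitting, in particular the uniformity in $M$, to be the delicate part.

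Summing the three contributions gives \eqref{eq:appendix_termwise} with $C=C(d,\eta,\varepsilon,\gamma,M)$, independent of $\Lambda,s,X,Z$. In the fermionic case I replace $\mathbb E$ by the CAR conditional expectation and use its commutator characterization. Theorem~\ref{thm:tw_improved_lr} still applies, because $\tau^s_t(O_Z)$ is even, and the rest of the argument is unchanged.
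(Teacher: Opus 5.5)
Your proposal is correct and follows the paper's proof. The shared steps are: the commutator characterization of $\mathbb E_{\Lambda\setminus X}$; the improved Lieb--Robinson bound with $\min\{|X|,|Z|\}\le|Z|$; a choice of $\sigma$ with $\sigma\eta\ge\beta$; a time split at $T_r\propto r^{1-\sigma}$; and the filter tail controlled by $m_N(W_\gamma)$ with $N(1-\sigma)\ge\beta$. The only cosmetic difference is that the paper splits using $\overline v=\max\{1,v_*\}$ rather than $v_*$, which avoids dividing by a possibly small or vanishing $v_*$ and gives $1+T_r\ge(1+r^{1-\sigma})/(2\overline v)$ directly.
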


\begin{proof}
Choose
\begin{equation}
  \max\left\{
    \frac{d+1}{\eta+1},
    \frac{\beta}{\eta}
  \right\}
  <\sigma<1.
  \label{eq:termwise_sigma_choice}
\end{equation}
Such a choice is possible because $\eta>d$ and $\beta<\eta$.  Fix
$s,Z,X$, abbreviate $O_Z:=\dot\Phi_s(Z)$, and set
$r:=\dist(Z,X)$.  If $r=0$, contractivity of the conditional
expectation and Eqs.~\eqref{eq:finite_filter_moments} and
\eqref{eq:finite_inverse_liouvillian} give
\begin{align}
  \snorm{K_{s,Z}-\mathbb E_{\Lambda\setminus X}(K_{s,Z})}
  &\le
  2m_0(W_\gamma)\snorm{O_Z}
  \nonumber\\
  &\le
  2m_0(W_\gamma)|Z|\snorm{O_Z}F_\beta(0).
  \label{eq:termwise_zero_distance}
\end{align}

Suppose $r\ge1$.  Put
\begin{align}
  a:=1-\sigma,\qquad
  q:=\frac{d}{1-\sigma},\qquad
  \overline v:=\max\{1,v_*\},
  \nonumber\\
  T_r:=\frac{r^a}{2\overline v}.
  \label{eq:termwise_time_split}
\end{align}
For $|t|\le T_r$, apply
Theorem~\ref{thm:tw_improved_lr} to $O_Z$ and a unitary
$U\in\mathcal U(\mA_X)$, use
$\min\{|Z|,|X|\}\le|Z|$, and then use
Eq.~\eqref{eq:conditional_expectation_commutator}.  Since
$v_*|t|-r^a\le-r^a/2$, this gives
\begin{align}
  &\snorm{
    (\id-\mathbb E_{\Lambda\setminus X})\tau_t^s(O_Z)
  }
  \nonumber\\
  &\quad\le
  2|Z|\snorm{O_Z}
  \Bigl[
    e^{-r^a/2}
    +C_\sigma(1+r)^{-\sigma\eta}\overline v|t|
    \nonumber\\[-1mm]
  &\hspace{42mm}\times
    \bigl(1+(\overline v|t|)^q\bigr)
  \Bigr].
  \label{eq:termwise_short_time}
\end{align}
For $|t|>T_r$, we instead use contractivity:
\begin{equation}
  \snorm{
    (\id-\mathbb E_{\Lambda\setminus X})\tau_t^s(O_Z)
  }
  \le
  2\snorm{O_Z}
  \le
  2|Z|\snorm{O_Z}.
  \label{eq:termwise_long_time}
\end{equation}
Integrating Eqs.~\eqref{eq:termwise_short_time} and
\eqref{eq:termwise_long_time} against $|W_\gamma(t)|$ yields
\begin{align}
  &\snorm{K_{s,Z}-\mathbb E_{\Lambda\setminus X}(K_{s,Z})}
  \nonumber\\
  &\quad\le
  2|Z|\snorm{O_Z}
  \Bigl[
    m_0(W_\gamma)e^{-r^a/2}
    +C_\sigma J_{\sigma,\overline v}(1+r)^{-\sigma\eta}
    \nonumber\\[-1mm]
  &\hspace{43mm}
    +\int_{|t|>T_r}|W_\gamma(t)|\,\dd t
  \Bigr],
  \label{eq:termwise_integrated}
\end{align}
where
\begin{equation}
  J_{\sigma,\overline v}
  :=
  \int_{\mathbb R}|W_\gamma(t)|\overline v|t|
  \bigl(1+(\overline v|t|)^q\bigr)\,\dd t
  <\infty
  \label{eq:termwise_filter_integral}
\end{equation}
by Eq.~\eqref{eq:finite_filter_moments}.

Choose an integer $N$ such that $Na\ge\beta$.  The same moment bound
gives
\begin{align}
  \int_{|t|>T_r}|W_\gamma(t)|\,\dd t
  &\le
  m_N(W_\gamma)(1+T_r)^{-N}
  \nonumber\\
  &\le
  (2\overline v)^Nm_N(W_\gamma)(1+r)^{-Na}
  \nonumber\\
  &\le
  (2\overline v)^Nm_N(W_\gamma)(1+r)^{-\beta}.
  \label{eq:termwise_filter_tail}
\end{align}
Here we used
$1+T_r\ge(1+r^a)/(2\overline v)$ and
$1+r^a\ge(1+r)^a$.  Finally,
\begin{equation}
  e^{-r^a/2}
  \le C_{\beta,\sigma}(1+r)^{-\beta},
  \qquad
  (1+r)^{-\sigma\eta}
  \le(1+r)^{-\beta},
  \label{eq:termwise_final_decay}
\end{equation}
where the second inequality follows from
$\sigma\eta>\beta$ in Eq.~\eqref{eq:termwise_sigma_choice}.
Combining Eqs.~\eqref{eq:termwise_zero_distance},
\eqref{eq:termwise_integrated}, \eqref{eq:termwise_filter_tail}, and
\eqref{eq:termwise_final_decay} proves
Eq.~\eqref{eq:appendix_termwise}.  All filter moments depend only on
the fixed gap $\gamma$, while $\overline v$ is controlled by
$d,\eta$, and $M$, which proves the asserted uniformity.
\end{proof}

Proposition~\ref{prop:appendix_termwise} proves directly, for the spin
conditional expectation, the single-term estimate used before spatial
summation; its relation to the common-region LPPL summation in Theorem~10
of Ref.~\cite{teufel2025liebrobinsonboundsautomorphicequivalence} is
explained after Proposition~\ref{prop:weighted_response}.

\section{An alternative finite-range shell estimate}
\label{app:finite_range_shell}

The sharp shell estimate in the main text uses the improved long-range
Lieb--Robinson input of Appendix~\ref{app:lr_bound}.  For comparison, we
now give a proof independent of the improved long-range
Theorem~\ref{thm:tw_improved_lr} and
Proposition~\ref{prop:appendix_termwise}.  Its locality input is only
the standard $F$-function bound, Theorem~\ref{thm:appendix_lr}; it also
uses the usual exact quasi-adiabatic filter and the gap hypothesis.
No result of
Ref.~\cite{teufel2025liebrobinsonboundsautomorphicequivalence} enters
the derivation below.
The proof applies because every Hamiltonian along a shell path has
finite range $R'$.  Its price is a response kernel of width $R'$ and
hence an additional volume factor $(1+R')^d$.  This appendix is not
used in the proofs of the main theorems.

The distinction from the sharp shell estimate occurs in the spatial
summation.  Equation~\eqref{eq:anchor_sum} uses the fixed kernel
$F_\beta(r)=(1+r)^{-\beta}$, whose $\ell^1$ norm is finite and independent
of $R$ and $R'$.  The required background norm is uniform in both
scales by Eq.~\eqref{eq:lr_uniform_path_pair_norm}.  Consequently, the
reduction in Corollary~\ref{cor:local_mass_bound} produces only the
factor $|X|$; applying it to the shell derivative gives the shell
estimate in Theorem~\ref{thm:main_finite_response} with no factor growing
with the upper shell radius.  By contrast, the finite-range argument
below produces a rescaled kernel of width $L=1+R'$.  Its lattice sum is
of order $L^d$, as shown explicitly in
Eq.~\eqref{eq:finite_range_kernel_l1}; this is the source of the extra
volume factor in Eq.~\eqref{eq:finite_range_shell_response}.

\subsection{A one-shell spectral-flow bound}

We first derive the finite-range propagation estimate with all
constants uniform in the volume and the truncation scales.  The
argument retains the minimum-support boundary factor in
Theorem~\ref{thm:appendix_lr}; this avoids the unnecessary second factor
of $|X|$ that would result from estimating both supports separately.

\begin{proposition}[Alternative finite-range shell bound]
\label{prop:finite_range_shell}
Let $\eta>d$, let $\snorm{\Phi}_{F_\eta}\le M$, and fix
$1<R<R'$.  Suppose that the family
$\{(\Lambda,R,R'):\Lambda\text{ is a finite volume under consideration}\}$
satisfies Assumption~\ref{assumption:adiabaticity-shell-path}, with common
gap $\gamma>0$.  Let
$\alpha_s^{(R,R')}$ be its spectral flow.  Then
\begin{equation}
  \snorm{\alpha_1^{(R,R')}(A_X)-A_X}
  \le
  C|X|\snorm{A_X}(1+R')^d\tailmass(R),
  \label{eq:finite_range_shell_response}
\end{equation}
where $C=C(d,\eta,M,\gamma)$ is independent of
$\Lambda,R,R'$, and $X$.
\end{proposition}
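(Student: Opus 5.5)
The plan is to rerun the Duhamel argument of Proposition~\ref{prop:weighted_response}, this time with only the standard bound of Corollary~\ref{cor:appendix_lr_diameter} as locality input. Write the shell generator as $G_s=\sum_Z K_{s,Z}$ with $K_{s,Z}=-\mathcal I_s(\Phi_{(R,R']}(Z))$. Compare it with the auxiliary generator $G_{s,X^c}=\sum_Z\mathbb E_{\Lambda\setminus X}(K_{s,Z})$, which lies in $\mA_{\Lambda\setminus X}$. Exactly as in the proof of Proposition~\ref{prop:weighted_response}, this gives
\[
  \snorm{\alpha_1(A_X)-A_X}\le 2\snorm{A_X}\int_0^1\sum_Z\snorm{K_{s,Z}-\mathbb E_{\Lambda\setminus X}(K_{s,Z})}\,\dd s .
\]
So everything reduces to a termwise localization bound for $K_{s,Z}$ that uses only finite-range dynamics.

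Along the shell path every Hamiltonian has range at most $R'$. In the rescaled metric $\dist/L$ with $L:=1+R'$, the interaction is therefore uniformly short range. I would use this through the $F$-function $F_{\eta,\kappa}$ built from $e^{-\kappa\dist/L}$ times a polynomial, or more simply apply Theorem~\ref{thm:appendix_lr} with $F(r)=F_\eta(r/L)$. A term $\Phi_s^{(R,R')}(Z)$ with $\diam Z\le R'$ satisfies $(1+\diam Z/L)^{\eta'}\le 2^{\eta'}$ for any exponent. Hence $J_F$ is bounded by a multiple of $M$ (in fact by the $F_0$ norm), uniformly in $R'$, and one also gets an exponential weight $e^{\kappa\diam Z/L}\le e^\kappa$. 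This yields a Lieb--Robinson bound with velocity $v$ independent of $R'$ and decay in $\dist/L$. I would use the exponential version, $\snorm{[\tau_t^s(O_Z),U]}\le C\snorm{O_Z}\snorm{U}|Z|\,e^{v|t|-\kappa r/L}$ for $U\in\mA_X$ and $r=\dist(Z,X)$. It keeps the minimum-support boundary factor, so only $|Z|$ appears and no $|X|$. Next I combine this with Eq.~\eqref{eq:conditional_expectation_commutator} and split the filter integral at $T=\kappa r/(2vL)$. For $|t|\le T$ the bound above is at most $e^{-\kappa r/(2L)}$. The tail $|t|>T$ is controlled by the polynomial moments $m_N(W_\gamma)$. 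Together these give
\[
  \snorm{K_{s,Z}-\mathbb E_{\Lambda\setminus X}(K_{s,Z})}\le C|Z|\snorm{\Phi_{(R,R']}(Z)}\,(1+r/L)^{-\beta}
\]
for some fixed $\beta>d$, with $C$ depending only on $d,\eta,M,\gamma$.

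The spatial summation then follows the anchor argument of Corollary~\ref{cor:local_mass_bound} with the rescaled kernel. Grouping terms by anchors bounds the sum over $Z$ by $\mT_{\Phi_{(R,R']}}\,|X|\sup_x\sum_z(1+\dist(x,z)/L)^{-\beta}$. By shell counting this kernel sum is $O(L^d)$, which gives the $(1+R')^d$ factor. Finally $\mT_{\Phi_{(R,R']}}\le\tailmass(R)$ as in Eq.~\eqref{eq:shell_mass_bound_again}.

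The main obstacle is checking that the rescaled function is a genuine $F$-function with a convolution constant uniform in $L$, and that the pair norm of the path interaction relative to it stays bounded uniformly in $R'$. If the exponential weight is used, the pure exponential fails the convolution condition (see the remark in Appendix~\ref{app:lr_bound}). I would therefore use $e^{-\kappa r/L}(1+r)^{-\eta}$, whose convolution constant is bounded by that of $F_\eta$ independently of $L$. The ratio $1/F$ on $\diam Z\le R'$ is then at most $e^{\kappa}(1+\diam Z)^\eta$, so $J_F\le e^\kappa M$. This choice should resolve the obstacle.
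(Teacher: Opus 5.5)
Your proposal is correct and is essentially the paper's proof: the same tilted $F$-function $e^{-r/L}(1+r)^{-\eta}$ with $L=1+R'$ (convolution constant at most $C_\eta$, pair norm at most $eM$), the same minimum-support Lieb--Robinson bound leaving only $|Z|$, the same filter split at $T_r\propto r/L$ with an $O(L^d)$ kernel sum, and the same anchor grouping with $\mT_{\Phi_{(R,R']}}\le\tailmass(R)$. There are two small differences. First, the paper bounds $\snorm{[G_s,A_X]}$ directly and integrates $\frac{\dd}{\dd s}\alpha_s(A_X)=\ii U_s^*[G_s,A_X]U_s$, rather than using your conditional-expectation/Duhamel wrapper, which works equally well. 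Second, the untilted option $F(r)=F_\eta(r/L)$ that you float in passing does not work, since its $\snorm{F}_1$ and convolution constant grow like $L^d$ and destroy the $R'$-independent velocity; the tilted choice you settle on is the right one.
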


\begin{proof}
Set $L:=1+R'$ and $a:=L^{-1}$, and exponentially tilt the polynomial
$F$-function:
\begin{equation}
  F_{\eta,a}(r):=e^{-ar}F_\eta(r).
  \label{eq:finite_range_tilted_F}
\end{equation}
Write $d_{uv}:=\dist(u,v)$.  The triangle inequality gives
\begin{align}
  \frac{F_{\eta,a}(d_{xz})F_{\eta,a}(d_{zy})}
       {F_{\eta,a}(d_{xy})}
  &=
  e^{-a(d_{xz}+d_{zy}-d_{xy})}
  \nonumber\\
  &\quad\times
  \frac{F_\eta(d_{xz})F_\eta(d_{zy})}
       {F_\eta(d_{xy})}.
  \label{eq:finite_range_tilt_convolution}
\end{align}
Consequently,
\begin{equation}
  \snorm{F_{\eta,a}}_1\le\snorm{F_\eta}_1,
  \qquad
  C_{\eta,a}:=C_{F_{\eta,a}}\le C_\eta.
  \label{eq:finite_range_tilt_constants}
\end{equation}
The term $z=x$ in the convolution sum also gives
$C_{\eta,a}\ge F_{\eta,a}(0)=1$.

Every nonzero term of $\Phi_s^{(R,R')}$ has diameter at most $R'$.
Hence, using Eqs.~\eqref{eq:lr_pair_norm_general} and
\eqref{eq:lr_diameter_controls_pair},
\begin{equation}
  J_{F_{\eta,a}}\bigl(\Phi_s^{(R,R')}\bigr)
  \le
  e^{aR'}J_{F_\eta}\bigl(\Phi_s^{(R,R')}\bigr)
  \le eM.
  \label{eq:finite_range_tilted_pair_norm}
\end{equation}
Define
\begin{equation}
  \kappa:=2eC_\eta M.
  \label{eq:finite_range_kappa}
\end{equation}
For disjoint nonempty $X,Y$, the two boundary sums in
Eq.~\eqref{eq:lr_boundary_factor} satisfy
\begin{equation}
  D_{F_{\eta,a}}^{\Phi_s}(X,Y)
  \le
  \snorm{F_\eta}_1
  \min\{|X|,|Y|\}
  e^{-\dist(X,Y)/L}.
  \label{eq:finite_range_boundary_factor}
\end{equation}
Theorem~\ref{thm:appendix_lr}, together with
Eqs.~\eqref{eq:finite_range_tilt_constants}--\eqref{eq:finite_range_kappa},
therefore gives
\begin{align}
  \snorm{[\tau_t^s(A_X),B_Y]}
  &\le
  C_0\snorm{A_X}\snorm{B_Y}
  \min\{|X|,|Y|\}
  \nonumber\\
  &\quad\times
  e^{-\dist(X,Y)/L+\kappa|t|},
  \label{eq:finite_range_light_cone}
\end{align}
where $C_0$ depends only on $d$ and $\eta$.  Combining this with the
trivial commutator bound, and increasing $C_0$ if necessary, yields for
all $X,Y$, including overlapping supports,
\begin{align}
  \snorm{[\tau_t^s(A_X),B_Y]}
  &\le
  C_0\snorm{A_X}\snorm{B_Y}
  \min\{|X|,|Y|\}
  \nonumber\\
  &\quad\times
  \min\left\{
    e^{-\dist(X,Y)/L+\kappa|t|},1
  \right\}.
  \label{eq:finite_range_light_cone_min}
\end{align}

Introduce the filtered finite-range kernel
\begin{equation}
  \mathcal K_L(r)
  :=
  \int_{\mathbb R}|W_\gamma(t)|
  \min\left\{e^{-r/L+\kappa|t|},1\right\}\,\dd t.
  \label{eq:finite_range_filtered_kernel}
\end{equation}
Let $\overline\kappa:=\max\{1,\kappa\}$, fix an integer $N>d$, and
split the integral at
\begin{equation}
  T_r:=\frac{r}{2\overline\kappa L}.
\end{equation}
For $|t|\le T_r$, the exponential in
Eq.~\eqref{eq:finite_range_filtered_kernel} is at most
$e^{-r/(2L)}$.  For $|t|>T_r$, use the second entry in the minimum and
the moment bound \eqref{eq:finite_filter_moments}.  This gives
\begin{align}
  \mathcal K_L(r)
  &\le
  m_0(W_\gamma)e^{-r/(2L)}
  \nonumber\\
  &\quad+
  m_N(W_\gamma)
  \left(1+\frac{r}{2\overline\kappa L}\right)^{-N}.
  \label{eq:finite_range_kernel_decay}
\end{align}
Uniform lattice-shell counting on $\mathbb Z^d$, followed by the
rescaling $r=Lu$, now gives
\begin{equation}
  \sup_{x\in\Gamma}
  \sum_{y\in\Gamma}
  \mathcal K_L\bigl(\dist(x,y)\bigr)
  \le
  C_1L^d,
  \label{eq:finite_range_kernel_l1}
\end{equation}
where $C_1=C_1(d,\eta,M,\gamma)$; convergence of the polynomial part
uses $N>d$.

For the shell path,
\begin{equation}
  \dot\Phi_s^{(R,R')}=\Phi_{(R,R')},
  \qquad
  G_s=-\mathcal I_s\bigl(\dot H_s\bigr).
\end{equation}
Apply Eq.~\eqref{eq:finite_range_light_cone_min} with
$B_Y=\dot\Phi_s^{(R,R')}(Z)$ and use
$\min\{|X|,|Z|\}\le|Z|$.  Equations
\eqref{eq:finite_inverse_liouvillian} and
\eqref{eq:finite_range_filtered_kernel} imply
\begin{align}
  \snorm{[G_s,A_X]}
  &\le
  C_0\snorm{A_X}
  \sum_Z|Z|\snorm{\dot\Phi_s^{(R,R')}(Z)}
  \nonumber\\
  &\qquad\times
  \mathcal K_L\bigl(\dist(Z,X)\bigr).
  \label{eq:finite_range_generator_before_anchor}
\end{align}
For every nonempty $Z$, choose one nearest anchor $z_Z\in Z$ such that
$\dist(z_Z,X)=\dist(Z,X)$.  Since
$\mathcal K_L$ is nonincreasing, grouping each term at this single
anchor and then using Eq.~\eqref{eq:finite_range_kernel_l1} gives
\begin{align}
  &\sum_Z|Z|\snorm{\dot\Phi_s^{(R,R')}(Z)}
  \mathcal K_L\bigl(\dist(Z,X)\bigr)
  \nonumber\\
  &\quad=
  \sum_{z\in\Gamma}
  \mathcal K_L\bigl(\dist(z,X)\bigr)
  \sum_{Z:\,z_Z=z}|Z|
  \snorm{\dot\Phi_s^{(R,R')}(Z)}
  \nonumber\\
  &\quad\le
  \mT_{\dot\Phi_s^{(R,R')}}
  \sum_{x\in X}\sum_{z\in\Gamma}
  \mathcal K_L\bigl(\dist(z,x)\bigr)
  \nonumber\\
  &\quad\le
  C_1|X|L^d\tailmass(R).
  \label{eq:finite_range_anchor_sum}
\end{align}
Here we used
$\mT_{\dot\Phi_s^{(R,R')}}\le\tailmass(R)$.
Finally,
\begin{equation}
  \frac{\dd}{\dd s}\alpha_s^{(R,R')}(A_X)
  =
  \ii U_s^*[G_s,A_X]U_s.
\end{equation}
Integrating its norm over $s\in[0,1]$ and using
$L=1+R'$ proves Eq.~\eqref{eq:finite_range_shell_response}.
\end{proof}

The same automorphism estimate immediately controls the
compressed-sector expression in Eq.~\eqref{eq:main_shell_sector}, by
the argument of Appendix~\ref{app:finite_volume_sectors}.  In the
rank-one case it controls local distinguishability through
Eq.~\eqref{eq:state_from_automorphism}.

\subsection{Shell sequence and comparison with the sharp rate}

The preceding estimate reproduces the elementary Cauchy construction
for adjacent truncation scales, but with the assumptions and constants
made uniform.

\begin{corollary}[Dyadic finite-range comparison]
\label{cor:finite_range_shell_dyadic}
Let $R_k=2^kR$ with $R>1$, and for every finite volume $\Lambda$ under
consideration let $N_\Lambda$ be the first index for which
$R_{N_\Lambda}\ge\diam\Lambda$.  Assume that
$\mathcal S_{R,2}$ in Eq.~\eqref{eq:finite_shell_family}
satisfies Assumption~\ref{assumption:adiabaticity-shell-path}, with every
prescribed sector being the rank-one projection onto the unique ground
state.  Then
\begin{equation}
  d_X(\omega_\Lambda,\omega_{\Lambda,R})
  \le
  C|X|
  \sum_{k=0}^{N_\Lambda-1}
  (1+R_{k+1})^d\tailmass(R_k),
  \label{eq:finite_range_shell_dyadic_general}
\end{equation}
If
$\snorm{\Phi}_{F_\eta}<\infty$ with $\eta>d$, then
\begin{equation}
  d_X(\omega_\Lambda,\omega_{\Lambda,R})
  \le
  C|X|\snorm{\Phi}_{F_\eta}
  (1+R)^{-(\eta-d)}.
  \label{eq:finite_range_shell_dyadic_power}
\end{equation}
Here $C$ depends on $d,\eta,\gamma$, and an upper bound for
$\snorm{\Phi}_{F_\eta}$, but not on $\Lambda$ or $R$.
\end{corollary}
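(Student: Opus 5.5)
The plan is to concatenate the one-shell estimate of Proposition~\ref{prop:finite_range_shell} along the dyadic sequence, exactly as in Corollary~\ref{cor:finite_shell}, and then sum the resulting series using the power-law tail bound \eqref{eq:tail_mass_power}.

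\emph{Concatenation.} For each adjacent pair $(R_k,R_{k+1})$ with $0\le k<N_\Lambda$, the shell family $\mathcal S_{R,2}$ satisfies Assumption~\ref{assumption:adiabaticity-shell-path} with rank-one sectors. Hence exact spectral transport gives $\omega_{\Lambda,R_{k+1}}=\omega_{\Lambda,R_k}\circ\alpha_1^{(R_k,R_{k+1})}$. As in Eq.~\eqref{eq:state_from_automorphism}, Proposition~\ref{prop:finite_range_shell} then yields
\[
\abs{\omega_{\Lambda,R_{k+1}}(A_X)-\omega_{\Lambda,R_k}(A_X)}\le C|X|\snorm{A_X}(1+R_{k+1})^d\tailmass(R_k).
\]
Compatibility at common cutoffs ensures that the output ground state of one shell is the input of the next. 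Once $R_{N_\Lambda}\ge\diam\Lambda$, no set $Z\subset\Lambda$ has diameter exceeding $R_{N_\Lambda}$, so $H_{\Lambda,R_{N_\Lambda}}=H_\Lambda$ and $\omega_{\Lambda,R_{N_\Lambda}}=\omega_\Lambda$. The triangle inequality, followed by taking the supremum over self-adjoint $A_X$ with $\snorm{A_X}\le1$, gives Eq.~\eqref{eq:finite_range_shell_dyadic_general}. The constant is the one from Proposition~\ref{prop:finite_range_shell}, which is uniform in $\Lambda$, $R$, and $R'$.

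\emph{Summation.} Insert $\tailmass(R_k)\le(1+R_k)^{-\eta}\snorm{\Phi}_{F_\eta}$ and $1+R_{k+1}\le2(1+R_k)$. Each term is then bounded by $2^d\snorm{\Phi}_{F_\eta}(1+2^kR)^{-(\eta-d)}$. Since $R>1$, we have $1+2^kR\ge\tfrac12 2^k(1+R)$, so the series is dominated by
\[
2^{\eta}\,(1+R)^{-(\eta-d)}\sum_{k\ge0}2^{-k(\eta-d)}=\frac{2^\eta}{1-2^{-(\eta-d)}}\,(1+R)^{-(\eta-d)}.
\]
This geometric series converges because $\eta>d$. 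Extending the finite sum to infinity only increases it, and this proves Eq.~\eqref{eq:finite_range_shell_dyadic_power} with a constant depending on $d$, $\eta$, $\gamma$, and $M$.

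\emph{Main obstacle.} The step needing most care is the bookkeeping of the volume factor. The kernel width in Proposition~\ref{prop:finite_range_shell} is set by the upper cutoff $R_{k+1}$, while the smallness comes from $\tailmass(R_k)$ at the lower cutoff. The dyadic ratio is what keeps these comparable, and it costs only the factor $2^d$. The resulting loss of $d$ in the exponent is intrinsic to this alternative route and is not an artifact of the summation.
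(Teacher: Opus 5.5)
Your proposal is correct and follows the paper's proof. You apply Proposition~\ref{prop:finite_range_shell} to each dyadic shell and sum by the triangle inequality, then use Eq.~\eqref{eq:tail_mass_power} with $1+R_{k+1}\le 2(1+R_k)$ and sum the geometric series; your explicit constant $2^\eta/(1-2^{-(\eta-d)})$ simply spells out the series bound the paper leaves implicit.
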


\begin{proof}
Apply Proposition~\ref{prop:finite_range_shell} to each pair
$(R_k,R_{k+1})$ and sum by the triangle inequality.  This proves
Eq.~\eqref{eq:finite_range_shell_dyadic_general}.  The generic tail
bound \eqref{eq:tail_mass_power} and
$1+R_{k+1}\le2(1+R_k)$ give
\begin{equation}
  (1+R_{k+1})^d\tailmass(R_k)
  \le
  2^d\snorm{\Phi}_{F_\eta}
  (1+R_k)^{-(\eta-d)}.
\end{equation}
The resulting geometric series converges because $\eta>d$, proving
Eq.~\eqref{eq:finite_range_shell_dyadic_power}.
\end{proof}

Equation~\eqref{eq:finite_range_shell_dyadic_power} loses $d$ powers
relative to the sharp bound \eqref{eq:finite_shell_power}.  For the
two-body interaction \eqref{eq:two_body_decay}, inserting the direct
shell count \eqref{eq:two_body_tail_mass} gives the alternative rate
$R^{-(p-2d)}$ for $p>2d$, rather than the sharp
$R^{-(p-d)}$ in Eq.~\eqref{eq:two_body_final}.

\subsection{Concrete consequences for rapidly decaying tails}

The conclusion of the alternative proof has one simple structure.  The
background $F_\eta$-norm controls the response constant, whereas the
dependence on the truncation scale is the product
\begin{equation}
  \begin{aligned}
    \text{local response}
    &\ \lesssim\ |X|\snorm{A_X}\\[-1mm]
    &\qquad{}\times
    \underbrace{(1+R')^d}_{\text{finite-range geometry}}
    \qquad
    \underbrace{\tailmass(R)}_{\text{discarded tail}}.
  \end{aligned}
  \label{eq:finite_range_geometry_times_tail}
\end{equation}
Thus, for adjacent scales $R'\asymp R$, the elementary finite-range
method multiplies the actual tail profile by $(1+R)^d$.  The background
$F_\eta$ used for propagation need not have the same functional form as
the tail profile.  The power-law consequence was stated in
Corollary~\ref{cor:finite_range_shell_dyadic}; the following result
records concrete nonalgebraic tails.

\begin{corollary}[Rapid tails under the finite-range comparison]
\label{cor:finite_range_rapid_tails}
Let $\eta>d$, let $\snorm{\Phi}_{F_\eta}\le M$, let $R_k=2^kR$ with
$R>1$, and for every finite volume $\Lambda$ under consideration let
$N_\Lambda$ be the first index for which
$R_{N_\Lambda}\ge\diam\Lambda$.  Assume that
$\mathcal S_{R,2}$ in Eq.~\eqref{eq:finite_shell_family}
satisfies Assumption~\ref{assumption:adiabaticity-shell-path}, with every
prescribed sector being the rank-one projection onto the unique ground
state.
\begin{enumerate}
\item Suppose that, for some $A_0,\mu>0$ and $0<\vartheta<1$,
\begin{equation}
  \tailmass(r)\le A_0e^{-\mu r^\vartheta},
  \qquad r\ge1.
  \label{eq:finite_range_stretched_tail_assumption}
\end{equation}
Then
\begin{align}
  d_X(\omega_\Lambda,\omega_{\Lambda,R})
  &\le
  C_{\mu,\vartheta}A_0|X|
  (1+R)^d e^{-\mu R^\vartheta}
  \nonumber\\
  &\le
  C_{\mu',\mu,\vartheta}A_0|X|
  e^{-\mu'R^\vartheta}
  \label{eq:finite_range_stretched_tail}
\end{align}
for every $0<\mu'<\mu$.

\item Suppose that, for some $A_0,\mu>0$,
\begin{equation}
  \tailmass(r)\le A_0e^{-\mu r},
  \qquad r\ge1.
  \label{eq:finite_range_exponential_tail_assumption}
\end{equation}
Then
\begin{align}
  d_X(\omega_\Lambda,\omega_{\Lambda,R})
  &\le
  C_\mu A_0|X|(1+R)^d e^{-\mu R}
  \nonumber\\
  &\le
  C_{\mu',\mu}A_0|X|e^{-\mu'R}
  \label{eq:finite_range_exponential_tail}
\end{align}
for every $0<\mu'<\mu$.

\item Suppose the local tail mass is superpolynomial: for every $q>0$
there is $A_q<\infty$ such that
\begin{equation}
  \tailmass(r)\le A_q(1+r)^{-q},
  \qquad r\ge1.
  \label{eq:finite_range_superpoly_tail_assumption}
\end{equation}
Then, for every $m>0$,
\begin{equation}
  d_X(\omega_\Lambda,\omega_{\Lambda,R})
  \le
  C_m|X|(1+R)^{-m}.
  \label{eq:finite_range_superpoly_tail}
\end{equation}
\end{enumerate}
All constants are independent of $\Lambda$ and $R$; besides the
displayed profile parameters, they depend only on
$d,\eta,M,\gamma$.
\end{corollary}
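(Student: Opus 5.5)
The plan is to start from the general dyadic bound \eqref{eq:finite_range_shell_dyadic_general} of Corollary~\ref{cor:finite_range_shell_dyadic}. It already holds under the present hypotheses, since the shell family $\mathcal S_{R,2}$ satisfies Assumption~\ref{assumption:adiabaticity-shell-path} with rank-one sectors. Each case then reduces to bounding the series
\[
  S(R):=\sum_{k\ge0}(1+R_{k+1})^d\,\tailmass(R_k),\qquad R_k=2^kR,
\]
by the stated profile. The partial sum up to $N_\Lambda-1$ is dominated by the full series, which gives uniformity in $\Lambda$. Using $1+R_{k+1}\le 2(1+R_k)$, it suffices to bound $\sum_k(1+R_k)^d\,\tailmass(R_k)$. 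The constant $C$ from Proposition~\ref{prop:finite_range_shell} depends only on $d,\eta,M,\gamma$, which matches the claimed dependence.

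For case 1, insert $\tailmass(R_k)\le A_0e^{-\mu R_k^\vartheta}$ and factor out the $k=0$ term $(1+R)^de^{-\mu R^\vartheta}$. The ratio of the $k$-th term to it is at most
\[
  2^{kd}\exp\bigl(-\mu R^\vartheta(2^{k\vartheta}-1)\bigr)\le 2^{kd}\exp\bigl(-\mu(2^{k\vartheta}-1)\bigr),
\]
using $R>1$. This is summable in $k$ because $2^{k\vartheta}-1$ grows exponentially in $k$ and eventually beats $kd\log2$. The sum therefore defines $C_{\mu,\vartheta}$, independent of $R$, and gives the first inequality of \eqref{eq:finite_range_stretched_tail}. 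For the second inequality, note that $(1+R)^de^{-(\mu-\mu')R^\vartheta}$ is bounded uniformly in $R\ge1$. Case 2 is the same computation with $\vartheta=1$, where the ratio is $2^{kd}e^{-\mu(2^k-1)}$.

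For case 3, fix $m>0$ and use the hypothesis with $q=m+d$. Then
\[
  (1+R_k)^d\,\tailmass(R_k)\le A_{m+d}(1+R_k)^{-m}.
\]
With $1+2^kR\ge 2^{k-1}(1+R)$ (valid since $R>1$), the resulting geometric series gives
\[
  \frac{2^m}{1-2^{-m}}\,(1+R)^{-m},
\]
so $C_m$ absorbs $A_{m+d}$ and this geometric factor. The main obstacle I expect is only bookkeeping. In case 1, the summability of $2^{kd}e^{-\mu(2^{k\vartheta}-1)}$ must be made uniform in $R>1$, which the reduction $R^\vartheta\ge1$ handles. I would also check that every constant is indeed independent of $\Lambda$ and $R$.
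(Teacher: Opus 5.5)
Your proposal is correct and follows essentially the same route as the paper. Like the paper, you insert the tail profile into the dyadic bound \eqref{eq:finite_range_shell_dyadic_general}, factor out the $k=0$ term to reduce to the $R$-independent series $\sum_k 2^{kd}e^{-\mu(2^{k\vartheta}-1)}$ (using $R^\vartheta\ge1$), and absorb the polynomial prefactor for the second inequalities; in the superpolynomial case you choose $q=m+d$ (the paper takes any $q>m+d$) to reduce to a geometric series.
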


\begin{proof}
Insert the assumed tail profile into
Eq.~\eqref{eq:finite_range_shell_dyadic_general}.  For
$0<\vartheta\le1$, lattice-scale summation gives
\begin{align}
  &\sum_{k\ge0}
  (1+2^{k+1}R)^d
  e^{-\mu(2^kR)^\vartheta}
  \nonumber\\
  &\qquad\le
  C_{d,\mu,\vartheta}
  (1+R)^d e^{-\mu R^\vartheta}.
  \label{eq:finite_range_rapid_dyadic_sum}
\end{align}
Indeed, after extracting the $k=0$ profile, the remaining series is
bounded by
\begin{equation}
  2^d\sum_{k\ge0}
  2^{kd}e^{-\mu(2^{k\vartheta}-1)}<\infty.
\end{equation}
This proves the first inequalities in
Eqs.~\eqref{eq:finite_range_stretched_tail} and
\eqref{eq:finite_range_exponential_tail}.  Their second inequalities
follow from
\begin{equation}
  \sup_{R>1}
  (1+R)^d e^{-(\mu-\mu')R^\vartheta}<\infty.
\end{equation}

For the superpolynomial case, fix $m>0$ and use
Eq.~\eqref{eq:finite_range_superpoly_tail_assumption} with any
$q>m+d$.  The summand in
Eq.~\eqref{eq:finite_range_shell_dyadic_general} is then bounded by a
constant times
$(1+2^kR)^{-(q-d)}$.  The corresponding geometric series is
$O((1+R)^{-(q-d)})$, which proves
Eq.~\eqref{eq:finite_range_superpoly_tail}.
\end{proof}

\medskip
\noindent\emph{Limitation of the finite-range route.}
The factor $(1+R')^d$ is the volume of the response kernel generated by
the exponentially tilted finite-range Lieb--Robinson estimate.  It
preserves stretched-exponential, exponential, and superpolynomial
decay classes, but adds a polynomial prefactor and loses $d$ powers for
algebraic tails.  Within this tilted finite-range strategy, optimizing
the tilt changes constants but does not remove the volume factor.  The
sharp proof based on Theorem~\ref{thm:tw_improved_lr} avoids this loss
by retaining the long-range spatial decay through the filter
integration.

\section{Intrinsic infinite-volume paths and the shell construction}
\label{app:intrinsic_infinite_volume}

This appendix clarifies the different roles of the direct and shell
paths in infinite volume.  In the quasi-local algebra there is generally
no global Hamiltonian or spectral projection corresponding to a
finite-volume pair $(H_\Lambda(s),P_\Lambda(s))$.  An intrinsic
infinite-volume formulation instead starts with an interaction path
$s\mapsto\Phi_s$ and a path of states $s\mapsto\omega_s$.  The gap is a
property of the GNS Hamiltonian associated with $\omega_s$, and spectral
flow is a cocycle of automorphisms satisfying
\begin{equation}
  \omega_t=\omega_s\circ\alpha_{s,t}.
  \label{eq:intrinsic_state_transport}
\end{equation}
Because the GNS representations at different values of $s$ need not be
unitarily equivalent, Eq.~\eqref{eq:intrinsic_state_transport} does not
follow by transporting spectral projections on one common Hilbert
space.

For every fixed $1<R<R'<\infty$, the shell path
$\Phi_s^{(R,R')}$ and its derivative have range at most $R'$.  Thus a
single shell is not itself a genuinely long-range path.  Existing
infinite-volume spectral-flow results can be applied directly if one
assumes a path of infinite-volume states
$\omega_s^{(R,R')}$ that is locally unique and gapped, is suitably
differentiable in $s$, and has the compatibility needed to concatenate
successive shells
\cite{Bachmann_2011,becker2025automorphicequivalencegappedphases}.  Such
an intrinsic formulation is therefore possible, but these state-path
assumptions are not part of the shell hypotheses used in this work.

Assumption~\ref{assumption:adiabaticity-shell-path} is instead a
finite-volume hypothesis: each selected shell family has a prescribed
isolated sector and a gap uniform in the volume, the path parameter, and
the shell index.  For fixed $1<R<R'$, the
finite-range thermodynamic-limit spectral-flow theorem (Theorem 5.2 in
Ref.\ \onlinecite{Bachmann_2011}) then gives the
automorphism $\alpha_s^{(R,R')}$ used in
Lemma~\ref{lem:limiting_shell_flow}.  In the rank-one setting of
Theorem~\ref{thm:main_infinite_shell}, weak-$*$ limits of the finite-volume
ground states provide its endpoint states.  This route
constructs the relevant infinite-volume branch rather than presupposing
a differentiable branch $s\mapsto\omega_s^{(R,R')}$.

Uniformity in the shell scales is a second reason for using this route.
Although a fixed finite-range shell belongs to every polynomial
interaction class, its higher weighted norms need not be uniform in
$R'$.  For $q>\eta$, one has only the general estimate
\begin{equation}
  \sup_{s\in[0,1]}
  \snorm{\Phi_s^{(R,R')}}_{F_q}
  \le
  (1+R')^{q-\eta}\snorm{\Phi}_{F_\eta}.
  \label{eq:shell_high_norm_growth}
\end{equation}
Consequently, applying an intrinsic theorem separately to each shell
does not by itself produce constants that can be summed uniformly as
$R'\to\infty$.  The finite-volume response estimate used here depends
only on a fixed background norm and the local tail mass, with no factor
growing with $R'$.  Passing that estimate to the thermodynamic limit
preserves the bound needed for the shell Cauchy sequence.  The weaker
finite-range alternative in Appendix~\ref{app:finite_range_shell}
illustrates the possible loss through its additional $(R')^d$ factor.

The direct path presents a different issue.  If the full interaction is
only polynomially local, then $\Phi_s^{(R,\infty)}$ remains genuinely
long range.  Ref.~\cite{teufel2025liebrobinsonboundsautomorphicequivalence}
proves polynomial Lieb--Robinson bounds and the locality of the
inverse Liouvillian, and uses them to construct a polynomially local
spectral flow for finite-lattice Hamiltonians with uniform estimates.
This is the locality input used in the finite-volume part of the present
work.  It does not by itself provide the intrinsic state transport
\eqref{eq:intrinsic_state_transport} for a path specified only through
infinite-volume GNS gaps.

The intrinsic infinite-volume result of
Ref.~\cite{becker2025automorphicequivalencegappedphases} establishes
that transport for differentiable paths of locally unique gapped states
with superpolynomially decaying interactions.  Its proof must control
the infinite-volume dynamics and inverse Liouvillian on a quasi-local
interaction space, show that the resulting interaction generates an
automorphism cocycle, and prove the parallel-transport identity between
states in potentially inequivalent GNS representations.

The restriction to superpolynomial decay enters through the locality
losses in these steps.  For example, the cited proof controls an
observable seminorm of order $\nu$ for the automorphism cocycle using an
interaction seminorm of order $4\nu+9d+4$; the inverse-Liouvillian and
commutator estimates likewise require stronger seminorms than the one
being estimated.  The space of superpolynomial interactions contains
all polynomially weighted seminorms and is therefore stable under these
losses.  A fixed power-law class provides only a finite decay exponent,
and the available estimates do not show that sufficient decay remains
to construct the generator and cocycle in compatible interaction
spaces and then prove the state-transport identity.  Thus the missing
input is not a polynomial Lieb--Robinson bound, but closure of the full
infinite-volume argument within a fixed polynomial decay class.  Such
an extension may be possible with sharper decay bookkeeping and
additional state-path assumptions, but it is not supplied by the
results invoked here.

Thus the asymmetry in the main text is one of hypotheses and proof
strategy.  A fixed shell can be treated intrinsically after adding an
infinite-volume gapped state path, whereas our thermodynamic-limit
construction derives the required states from finite-volume data and
retains estimates uniform in the shell scale.  The intrinsic direct
construction is used only in the superpolynomial setting where a
intrinsic infinite-volume automorphic-equivalence theorem is available.

\section{Uniform inverse-Liouvillian estimate in the superpolynomial setting}
\label{app:inverse_liouvillian}

The infinite-volume direct theorem needs a quantitative form of
Lemma~B.6 of Ref.~\cite{becker2025automorphicequivalencegappedphases}.  That lemma
is formulated for a Fr\'echet space of superpolynomial interactions.
Here we keep track of the finitely many seminorms that enter its proof
and of the additional factor $|Z|$ in our diameter norm.
The proof first controls each site-centered inverse-Liouvillian
contribution in localized-observable seminorms, uniformly in the
truncation scale and path parameter.  A ball-shell reconstruction then
converts those estimates to the diameter interaction norms used in the
main text.

We first fix notation.  The reference uses the maximum metric on
$\mathbb Z^d$; write $\diam_\infty$ for the corresponding diameter and
set
\begin{equation}
  \snorm{\Theta}^{\mathrm{BTW}}_\ell
  :=
  \sup_{x\in\Gamma}
  \sum_{Z\ni x}
  (1+\diam_\infty Z)^\ell\snorm{\Theta(Z)}.
  \label{eq:btw_interaction_seminorm}
\end{equation}
Because
\begin{equation}
  \diam_\infty Z\le\diam Z\le d\,\diam_\infty Z
  \quad\text{and}\quad |Z|\ge1,
\end{equation}
the seminorms defined with the two metrics are equivalent up to constants
depending only on $d$ and $\ell$, and in particular
\begin{equation}
  \snorm{\Theta}^{\mathrm{BTW}}_\ell
  \le
  \snorm{\Theta}_{F_\ell}.
  \label{eq:btw_norm_dominated}
\end{equation}
Let
\begin{equation}
  B_k^\infty(x)
  =
  \{y\in\Gamma:\snorm{y-x}_\infty\le k\},
\end{equation}
and let $\mathbb E_{k,x}$ be the normalized-trace conditional
expectation onto $\mA_{B_k^\infty(x)}$.  For a quasi-local observable
$B$, define
\begin{equation}
  \snorm{B}^{\mathrm{loc}}_{\rho,x}
  :=
  \snorm{B}
  +
  \sup_{k\in\mathbb N_0}
  (1+k)^\rho
  \snorm{B-\mathbb E_{k,x}(B)}.
  \label{eq:localized_observable_seminorm}
\end{equation}
These are the localized-observable seminorms of
Ref.~\cite{becker2025automorphicequivalencegappedphases}.

\begin{lemma}[Uniform seminorm bound]
\label{lem:uniform_infinite_inverse}
For $\nu\in\mathbb N_0$, define
\begin{align}
  \rho_\nu&=\nu+2d+2,
  &
  q_\nu&=2\nu+5d+5,
  \nonumber\\
  a_\nu&=4\nu+17d+12.
  \label{eq:inverse_indices}
\end{align}
There is a constant
$C_\nu=C_\nu(d,\gamma,\snorm{\Phi}_{F_{a_\nu}})$ such that, along the
family of direct truncation paths,
\begin{equation}
  \sup_{R,s}
  \snorm{\mathcal I_s^{(R)}(\Theta)}_{F_\nu}
  \le
  C_\nu\snorm{\Theta}_{F_{q_\nu}}
  \label{eq:appendix_inverse_bound}
\end{equation}
for every superpolynomial interaction $\Theta$.  In particular, the
constant is independent of $R$, $s$, and of the continuity constants
of the state path.
\end{lemma}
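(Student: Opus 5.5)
The plan is to follow the architecture of Lemma~B.6 of Ref.~\cite{becker2025automorphicequivalencegappedphases}, tracking which finitely many seminorms of $\Phi$ and $\Theta$ actually enter. The argument has three stages.

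\emph{Stage one: site-centered decomposition.} Assign each interaction set $Z$ of $\Theta$ to an anchor $x_Z\in Z$ and collect the terms anchored at $x$ into
\[
  \Theta_x:=\sum_{Z:\,x_Z=x}\Theta(Z).
\]
By linearity, $\mathcal I_s^{(R)}(\Theta)=\sum_x\mathcal I_s^{(R)}(\Theta_x)$. Using the ball-shell estimate
\[
  \snorm{\Theta_x-\mathbb E_{k,x}(\Theta_x)}
  \le 2\sum_{Z\ni x,\ \diam_\infty Z>k}\snorm{\Theta(Z)},
\]
each $\Theta_x$ satisfies
\[
  \snorm{\Theta_x}^{\mathrm{loc}}_{\rho,x}
  \lesssim \snorm{\Theta}^{\mathrm{BTW}}_{\rho}
  \le \snorm{\Theta}_{F_\rho},
\]
the last step by Eq.~\eqref{eq:btw_norm_dominated}.

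\emph{Stage two: uniform localized-observable bound.} Show that
\[
  \snorm{\mathcal I_s^{(R)}(B)}^{\mathrm{loc}}_{\rho_\nu,x}
  \le C\snorm{B}^{\mathrm{loc}}_{\rho',x}
\]
for some $\rho'$ of order $2\nu+5d+5$, with $C$ depending only on $d$, $\gamma$ and $\snorm{\Phi}_{F_{a_\nu}}$. To bound $\snorm{\mathcal I(B)-\mathbb E_{k,x}\mathcal I(B)}$, split $B$ into $\mathbb E_{k/2,x}(B)$ plus a remainder. The remainder is controlled by the localization seminorm of $B$ together with $m_0(W_\gamma)$. For the localized part, apply a polynomial Lieb--Robinson bound for the dynamics of $\Phi_s^{(R,\infty)}$, with constants depending only on $\snorm{\Phi}_{F_a}$ by Eq.~\eqref{eq:uniform_background_superpoly}, at times $|t|\le T_k\sim k^{1-\sigma}$. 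For $|t|>T_k$, use the filter tail via the moments $m_N(W_\gamma)$, as in Proposition~\ref{prop:appendix_termwise}. Choosing $\sigma$ and $N$ appropriately yields decay $(1+k)^{-\rho_\nu}$. Uniformity in $R$ and $s$ holds because only the background seminorms enter, and each is dominated by $\snorm{\Phi}_{F_{a}}$. The continuity constants of the state path never appear, since the inverse Liouvillian here is defined directly from the filter and the dynamics.

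\emph{Stage three: reconstruction into an interaction.} Telescope each localized observable $C_x:=\mathcal I_s^{(R)}(\Theta_x)$ as
\[
  C_x=\mathbb E_{0,x}(C_x)+\sum_{k\ge1}\bigl(\mathbb E_{k,x}-\mathbb E_{k-1,x}\bigr)(C_x),
\]
and assign the $k$-th piece to the set $B_k^\infty(x)$. That set has $|B_k^\infty(x)|\le(2k+1)^d$ and $\diam\le dk$, and the piece has norm at most $2(1+k-1)^{-\rho_\nu}\snorm{C_x}^{\mathrm{loc}}_{\rho_\nu,x}$. Summing over the $(2k+1)^d$ centers whose $k$-ball contains a given site, the $F_\nu$ diameter norm (including the factor $|Z|$) is bounded by
\[
  \sum_k (1+k)^{\nu+2d-\rho_\nu}\,\sup_x\snorm{C_x}^{\mathrm{loc}}_{\rho_\nu,x}.
\]
With $\rho_\nu=\nu+2d+2$ this series converges. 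Combining this with stages one and two gives Eq.~\eqref{eq:appendix_inverse_bound} with $q_\nu=2\nu+5d+5$.

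The main obstacle is stage two: obtaining the localized-observable bound with explicit index bookkeeping. The doubling in $\rho\mapsto q$ and the large $a_\nu$ come from the sublinear light cone $k^{1-\sigma}$. I would first use the Teufel--Wessel profile from Theorem~\ref{thm:tw_improved_lr}, with $\sigma$ close to $1/2$, and check that the resulting exponents fit within the stated indices.
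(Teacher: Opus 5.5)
Your argument is sound but takes a different route from the paper in stages one and two. Stage three is the paper's reconstruction verbatim, except that $\diam B_k^\infty(x)\le 2dk$ in the graph metric rather than $dk$, which only changes constants.

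\emph{The paper's route.} It does not centre $\Theta$. It centres the \emph{background} interaction into $H_{s,x}^{(R)}$ and uses the construction of Lemma~B.6 of Ref.~\cite{becker2025automorphicequivalencegappedphases}:
$Q_{s,x}^{(R)}(\Theta)=-\ii\int W(t)\int_0^t e^{\ii u\mL_{\Phi_s^{(R,\infty)}}}\mL_\Theta(H_{s,x}^{(R)})\,\dd u\,\dd t$.
It then bounds this object with two imported estimates. The commutator bound (Lemma~2.7 there) costs $\rho\mapsto d+1+2\rho$ on $\Theta$. The localized-dynamics bound (Lemma~C.4) needs background exponent $4\rho+9d+4$ for uniformity. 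Hence $q_\nu=d+1+2\rho_\nu$ and $a_\nu=4\rho_\nu+9d+4$ are exactly those losses. They are not consequences of a sublinear light cone, as you guess.

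\emph{Your route.} Your stage two applies the filtered dynamics to $\Theta_x$ directly, in the spirit of Proposition~\ref{prop:appendix_termwise}. Carrying it out uses the factor $\min\{|X|,|Y|\}\lesssim(1+k)^d$ from $B^\infty_{k/2}(x)$ and requires $\sigma\eta\ge\rho_\nu+d$. This gives $\snorm{\mathcal I(B)}^{\mathrm{loc}}_{\rho_\nu,x}\le C\snorm{B}^{\mathrm{loc}}_{\rho_\nu,x}$. So $\rho'=\rho_\nu$ already works, and the background exponent need only exceed $\rho_\nu+d$. By monotonicity of the weighted norms this gives the lemma with better indices, so you need not try to fit $q_\nu$ or $a_\nu$. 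As written, ``$\rho'$ of order $2\nu+5d+5$'' is acceptable only if $\rho'\le q_\nu$.

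\emph{What your route costs.} You bound a different interaction from the one the paper, following the reference, calls $\mathcal I_s^{(R)}(\Theta)$. Your ``$\mathcal I_s^{(R)}(\Theta)=\sum_x\mathcal I_s^{(R)}(\Theta_x)$'' has meaning only after reconstruction or at the level of derivations. To use your interaction as the generator $\Psi_s^{(R)}$ in Eq.~\eqref{eq:infinite_cocycle_generator}, you must check that it generates the same derivation:
$\sum_x[\int W(t)\tau_t(\Theta_x)\,\dd t,A]=\int W(t)\tau_t\bigl(\mL_\Theta(\tau_{-t}A)\bigr)\,\dd t$ for local $A$.
This is the covariance step at the end of Lemma~\ref{lem:inverse_reconstructed_derivation}. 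It is simpler for your construction, since no Jacobi rearrangement or $\int W=0$ is needed. You also need the infinite-volume versions of Theorem~\ref{thm:tw_improved_lr} and of the commutator criterion for $\mathbb E_{k,x}$; both are standard limits.

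In exchange, the paper's route is pure bookkeeping of cited lemmas, and it bounds literally the reference's generator interaction.
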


\begin{proof}
Set
\begin{equation}
  g_*=\tfrac12\min\{\gamma,1\}
\end{equation}
and fix the inverse-Liouvillian filter $W=W_{g_*}$ of
Ref.~\cite{becker2025automorphicequivalencegappedphases}.  The same filter can be
used for every $R>1$ and $s$, and it satisfies
\begin{equation}
  \int_{\mathbb R}|W(t)|(1+|t|)^N\,\dd t<\infty
  \qquad\text{for every }N\in\mathbb N_0.
  \label{eq:filter_moments}
\end{equation}

Choose deterministically a center $c(Z)\in Z$ for every nonempty
finite $Z$ and write
\begin{equation}
  H_{s,x}^{(R)}
  =
  \sum_{Z:\,c(Z)=x}\Phi_s^{(R,\infty)}(Z).
  \label{eq:centered_background_observable}
\end{equation}
This is the site-centered observable in Definition~2.5 of the
reference.  Terms with $\diam_\infty Z\le k$ and center $x$ are
supported in $B_k^\infty(x)$.  Splitting the remaining sum into its
tail and using contractivity of the conditional expectation gives,
for every $r\in\mathbb N_0$,
\begin{equation}
  \sup_x
  \snorm{H_{s,x}^{(R)}}_{r,x}^{\mathrm{loc}}
  \le
  3\snorm{\Phi_s^{(R,\infty)}}^{\mathrm{BTW}}_r
  \le
  3\snorm{\Phi}_{F_r}.
  \label{eq:centered_background_bound}
\end{equation}
The last inequality is uniform in $R$ and $s$.

For an interaction $\Theta$, let
\begin{equation}
  Q_{s,x}^{(R)}(\Theta)
  =
  -\ii\int_{\mathbb R}W(t)\int_0^t
  e^{\ii u\mL_{\Phi_s^{(R,\infty)}}}
  \mL_\Theta\bigl(H_{s,x}^{(R)}\bigr)
  \,\dd u\,\dd t.
  \label{eq:inverse_local_observable}
\end{equation}
The integral is absolutely convergent in every localized seminorm, as
we now verify.  Lemma~2.7 of
Ref.~\cite{becker2025automorphicequivalencegappedphases} gives, for every
$\rho\in\mathbb N_0$,
\begin{equation}
  \snorm{
    \mL_\Theta\bigl(H_{s,x}^{(R)}\bigr)
  }_{\rho,x}^{\mathrm{loc}}
  \le
  c_\rho
  \snorm{\Theta}^{\mathrm{BTW}}_{d+1+2\rho}
  \snorm{H_{s,x}^{(R)}}_{d+3+2\rho,x}^{\mathrm{loc}}.
  \label{eq:inverse_commutator_bound}
\end{equation}
Lemma~C.4 of that reference gives an increasing, polynomially bounded
function $b_\rho$ such that
\begin{equation}
  \snorm{
    e^{\ii u\mL_{\Phi_s^{(R,\infty)}}}B
  }_{\rho,x}^{\mathrm{loc}}
  \le
  b_\rho(|u|)
  \snorm{B}_{\rho,x}^{\mathrm{loc}}.
  \label{eq:uniform_local_dynamics}
\end{equation}
The same $b_\rho$ works for every $R>1$ and $s$: according to the
uniformity statement in that lemma, it is enough to bound
$\snorm{\Phi_s^{(R,\infty)}}^{\mathrm{BTW}}_{4\rho+9d+4}$, and
Eqs.~\eqref{eq:uniform_background_superpoly} and
\eqref{eq:btw_norm_dominated} give
\begin{equation}
  \sup_{R,s}
  \snorm{\Phi_s^{(R,\infty)}}^{\mathrm{BTW}}_{4\rho+9d+4}
  \le
  \snorm{\Phi}_{F_{4\rho+9d+4}}.
  \label{eq:uniform_btw_background}
\end{equation}
Define
\begin{equation}
  J_\rho
  =
  \int_{\mathbb R}|W(t)|
  \int_0^{|t|}b_\rho(u)\,\dd u\,\dd t.
  \label{eq:filter_dynamics_integral}
\end{equation}
This number is finite by Eq.~\eqref{eq:filter_moments} and the
polynomial growth of $b_\rho$.  Equations
\eqref{eq:centered_background_bound}--\eqref{eq:filter_dynamics_integral}
therefore yield the explicit estimate
\begin{align}
  &\sup_{R,s,x}
  \snorm{Q_{s,x}^{(R)}(\Theta)}_{\rho,x}^{\mathrm{loc}}
  \nonumber\\
  &\quad\le
  3c_\rho J_\rho
  \snorm{\Phi}_{F_{d+3+2\rho}}
  \snorm{\Theta}_{F_{d+1+2\rho}}.
  \label{eq:inverse_local_bound}
\end{align}

It remains to reconstruct an interaction and to account for the factor
$|Z|$ in our norm.  Suppress $R,s,\Theta$ temporarily and set
\begin{align}
  Q_{x,0}&=\mathbb E_{0,x}(Q_x),
  \\
  Q_{x,k}&=
  \bigl(\mathbb E_{k,x}-\mathbb E_{k-1,x}\bigr)(Q_x),
  \qquad k\ge1.
  \label{eq:inverse_shell_decomposition}
\end{align}
The series $\sum_{k\ge0}Q_{x,k}$ converges in norm to $Q_x$.  Define
$\mathcal I_s^{(R)}(\Theta)$ by assigning $Q_{x,k}$ to
$B_k^\infty(x)$, exactly as in Lemma~B.6 of the reference.  From
Eq.~\eqref{eq:localized_observable_seminorm}, for $k\ge1$,
\begin{align}
  \snorm{Q_{x,k}}
  &\le
  \snorm{Q_x-\mathbb E_{k,x}(Q_x)}
  +
  \snorm{Q_x-\mathbb E_{k-1,x}(Q_x)}
  \nonumber\\
  &\le
  C_\rho(1+k)^{-\rho}
  \snorm{Q_x}_{\rho,x}^{\mathrm{loc}}.
  \label{eq:inverse_shell_decay}
\end{align}
Moreover,
\begin{equation}
  |B_k^\infty(x)|=(2k+1)^d,
  \qquad
  \diam B_k^\infty(x)\le2dk,
  \label{eq:max_ball_geometry}
\end{equation}
and, for fixed $y$, there are exactly $(2k+1)^d$ centers $x$ for which
$y\in B_k^\infty(x)$.  Consequently,
\begin{align}
  \snorm{\mathcal I_s^{(R)}(\Theta)}_{F_\nu}
  &\le
  C_{d,\nu,\rho}
  \left[
    1+
    \sum_{k\ge1}(1+k)^{2d+\nu-\rho}
  \right]
  \nonumber\\
  &\qquad\times
  \sup_x
  \snorm{Q_{s,x}^{(R)}(\Theta)}_{\rho,x}^{\mathrm{loc}}.
  \label{eq:inverse_interaction_from_local}
\end{align}
This is the step at which the additional factor $|Z|$ costs one extra
volume factor $(1+k)^d$ compared with the interaction norm used in the
reference.

Take $\rho=\rho_\nu=\nu+2d+2$.  The series in
Eq.~\eqref{eq:inverse_interaction_from_local} is then bounded by
$\sum_{k\ge1}(1+k)^{-2}$.  Furthermore,
\begin{align}
  d+1+2\rho_\nu&=q_\nu,
  \\
  4\rho_\nu+9d+4&=a_\nu,
\end{align}
and $d+3+2\rho_\nu\le a_\nu$.  Combining
Eqs.~\eqref{eq:inverse_local_bound} and
\eqref{eq:inverse_interaction_from_local}, and using monotonicity of
the weighted norms in their exponent, proves
Eq.~\eqref{eq:appendix_inverse_bound}.  The constants used in the
state-differentiability assumption never entered the estimate.
\end{proof}

\begin{lemma}[Identification of the reconstructed derivation]
\label{lem:inverse_reconstructed_derivation}
Let $\mathcal I_s^{(R)}(\Theta)$ be the interaction reconstructed in
the proof of Lemma~\ref{lem:uniform_infinite_inverse}.  For every local
observable $A$,
\begin{align}
  &\mL_{\mathcal I_s^{(R)}(\Theta)}(A)
  \nonumber\\
  &\quad=
  \int_{\mathbb R}W(t)
  e^{\ii t\mL_{\Phi_s^{(R,\infty)}}}
  \mL_\Theta\!\left(
    e^{-\ii t\mL_{\Phi_s^{(R,\infty)}}}A
  \right)\,\dd t.
  \label{eq:inverse_reconstructed_derivation}
\end{align}
Thus the reconstructed interaction generates the inverse-Liouvillian
derivation used in the automorphic-equivalence theorem of
Ref.~\cite{becker2025automorphicequivalencegappedphases}.
\end{lemma}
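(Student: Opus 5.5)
The plan is to verify the identity term by term: first for the reconstructed interaction at a single center $x$, then summed over $x$. Write $\Psi:=\mathcal I_s^{(R)}(\Theta)$. By construction $\Psi(B_k^\infty(x))$ collects the pieces $Q_{x,k}$, summed over all centers $x$ with the same ball. For a local $A\in\mA_X$, the first step is to show
\[
  \mL_\Psi(A)=\sum_{x}\sum_{k\ge0}[Q_{x,k},A]=\sum_x[Q_x,A].
\]
The inner telescoping sum $\sum_k Q_{x,k}=Q_x$ converges in norm, and the restriction to $B_k^\infty(x)\cap X\ne\varnothing$ can be dropped, since pieces supported away from $X$ commute with $A$. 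The outer sum over $x$ converges absolutely, because $\snorm{[Q_x,A]}\le 2\snorm{A}\,\snorm{Q_x-\mathbb E_{k,x}Q_x}$ with $k\approx\dist(x,X)$. Equation~\eqref{eq:inverse_local_bound} then gives decay $(1+\dist(x,X))^{-\rho}$, and this is summable for $\rho>d$.

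The second step is to identify $\sum_x[Q_x,A]$ with the right-hand side of Eq.~\eqref{eq:inverse_reconstructed_derivation}. Write $\tau_t:=e^{\ii t\mL_{\Phi_s^{(R,\infty)}}}$. I would use the Duhamel-type identity
\[
  \tau_t\mL_\Theta\tau_{-t}(A)=\mL_\Theta(A)+\int_0^t\frac{\dd}{\dd u}\bigl[\tau_u\mL_\Theta\tau_{-u}(A)\bigr]\dd u,
\]
whose integrand is $\ii\,\tau_u\bigl[\mL_{\Phi_s^{(R,\infty)}},\mL_\Theta\bigr]\tau_{-u}(A)$. The zeroth-order term drops out because $W$ is odd, so $\int W=0$.

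The remaining term is rewritten using the decomposition $\mL_{\Phi_s^{(R,\infty)}}=\sum_x[H_{s,x}^{(R)},\cdot\,]$, together with the derivation identity
\[
  [\mL_\Phi,\mL_\Theta](B)=\bigl[\textstyle\sum_x \mL_\Theta(H_x),B\bigr]
\]
up to a sign, valid on local $B$. This identity follows from the Jacobi identity: $[H,[\Theta,B]]-[\Theta,[H,B]]=[[H,\Theta],B]$ summed over terms, with $\sum_x H_x$ the formal Hamiltonian. Then $\tau_u$ conjugates this to $\bigl[\tau_u(\mL_\Theta(H_x)),\,\cdot\,\bigr]$ after moving $\tau_u$ through, since $\tau_u$ is an automorphism. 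Integrating against $W(t)$ over $0\le u\le t$ and comparing with Eq.~\eqref{eq:inverse_local_observable} gives exactly $\sum_x[Q_x,A]$, with all signs and factors of $\ii$ matching the definition of $Q_{s,x}^{(R)}$. This is the computation behind Lemma~B.6 of Ref.~\cite{becker2025automorphicequivalencegappedphases}, and I would follow it.

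The main obstacle is justifying the interchanges of the sum over $x$ with the $u$ and $t$ integrals, and with the infinite-volume dynamics, since $\sum_x H_x$ does not converge in $\mA$. I would handle this in three parts.
\begin{itemize}
  \item First, truncate to $|x|\le n$.
  \item Second, apply the uniform dynamics bound \eqref{eq:uniform_local_dynamics} and the commutator bound \eqref{eq:inverse_commutator_bound}. Together these give, uniformly in $u$ up to the polynomial factor $b_\rho(|u|)$, decay of the $x$-th contribution in $\dist(x,X)$.
  \item Third, dominate using the filter moments \eqref{eq:filter_moments}.
\end{itemize}
Dominated convergence then lets $n\to\infty$ and exchange all limits, which proves the identity on every local $A$.
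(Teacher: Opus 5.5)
Your proposal is correct and is essentially the paper's argument. The reconstruction step $\mL_{\mathcal I_s^{(R)}(\Theta)}(A)=\sum_x[Q_{s,x}^{(R)}(\Theta),A]$ is identical, and your Duhamel/Jacobi computation is the paper's chain run in the opposite direction: antisymmetry to replace $\sum_x\mL_\Theta(H_{s,x}^{(R)})$ by $-\sum_y\mL_{\Phi_s^{(R,\infty)}}(\Theta_y)$ inside the commutator with $A$, then the fundamental theorem of calculus in $u$, then $\int_{\mathbb R}W=0$, then covariance of commutators under the dynamics.

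One point needs tightening. Your identity $[\mL_{\Phi_s^{(R,\infty)}},\mL_\Theta](B)=-\bigl[\sum_x\mL_\Theta(H_{s,x}^{(R)}),B\bigr]$ is stated for local $B$, but you apply it to the quasi-local $e^{-\ii u\mL_{\Phi_s^{(R,\infty)}}}A$, so it needs a density argument in the localized seminorms. The paper avoids this by differentiating only $e^{\ii u\mL_{\Phi_s^{(R,\infty)}}}(\Theta_y)$ for the local $\Theta_y$ and invoking covariance at the end.
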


\begin{proof}
Use the same deterministic centers as in
Eq.~\eqref{eq:centered_background_observable} and set
\begin{equation}
  \Theta_y=\sum_{Z:\,c(Z)=y}\Theta(Z).
  \label{eq:centered_perturbation_observable}
\end{equation}
For a local observable, the corresponding derivations can be written
as convergent sums of commutators with the centered observables.  The
shell reconstruction in Eq.~\eqref{eq:inverse_shell_decomposition}
also gives
\begin{equation}
  \mL_{\mathcal I_s^{(R)}(\Theta)}(A)
  =
  \sum_x[Q_{s,x}^{(R)}(\Theta),A].
  \label{eq:inverse_centered_derivation}
\end{equation}
Absolute convergence follows from
Eq.~\eqref{eq:appendix_inverse_bound} with $\nu=0$ and from the
localized-observable estimate \eqref{eq:inverse_local_bound}.

Suppress $R$ and $s$ temporarily and write
$\mL_H=\mL_{\Phi_s^{(R,\infty)}}$.  Antisymmetry of the commutator and
the centered decompositions give, within the absolutely convergent
commutator sum for a local $A$,
\begin{align}
  &\sum_x
  [e^{\ii u\mL_H}\mL_\Theta(H_{s,x}^{(R)}),A]
  \nonumber\\
  &\quad=
  \sum_{x,y}
  [e^{\ii u\mL_H}[\Theta_y,H_{s,x}^{(R)}],A]
  \nonumber\\
  &\quad=-\sum_y
  [e^{\ii u\mL_H}\mL_H(\Theta_y),A].
  \label{eq:inverse_centered_jacobi}
\end{align}
Substituting this relation into
Eq.~\eqref{eq:inverse_local_observable} and using
\begin{equation}
  \frac{\dd}{\dd u}
  e^{\ii u\mL_H}(\Theta_y)
  =
  \ii e^{\ii u\mL_H}\mL_H(\Theta_y)
\end{equation}
gives
\begin{align}
  \mL_{\mathcal I_s^{(R)}(\Theta)}(A)
  ={}&
  \sum_y
  \left[
    \int_{\mathbb R}W(t)
    \left(e^{\ii t\mL_H}(\Theta_y)-\Theta_y\right)\,\dd t,
    A
  \right]
  \nonumber\\
  ={}&
  \sum_y
  \left[
    \int_{\mathbb R}W(t)
    e^{\ii t\mL_H}(\Theta_y)\,\dd t,
    A
  \right],
  \label{eq:inverse_centered_integral_identity}
\end{align}
where the last equality uses that $W$ is odd and integrable, so
$\int_{\mathbb R}W(t)\,\dd t=0$.  Finally, covariance of the
commutator under the dynamics yields
\begin{align}
  \sum_y
  [e^{\ii t\mL_H}(\Theta_y),A]
  &={}
  e^{\ii t\mL_H}
  \sum_y
  [\Theta_y,e^{-\ii t\mL_H}(A)]
  \nonumber\\
  &={}
  e^{\ii t\mL_H}
  \mL_\Theta(e^{-\ii t\mL_H}(A)).
  \label{eq:inverse_commutator_covariance}
\end{align}
Equations~\eqref{eq:inverse_centered_derivation},
\eqref{eq:inverse_centered_integral_identity}, and
\eqref{eq:inverse_commutator_covariance} prove
Eq.~\eqref{eq:inverse_reconstructed_derivation}.  The exchanges of
sums and integrals are justified by
Eqs.~\eqref{eq:inverse_commutator_bound}--\eqref{eq:filter_dynamics_integral};
these are the same absolute-convergence estimates used in the
``inverse Liouvillian on derivations'' lemma of the cited reference.
For even fermionic interactions, all centered observables are even and
the same computation holds in the CAR algebra.
\end{proof}

For the direct truncation path,
$\Theta=\dot\Phi_s^{(R,\infty)}=\Phi_{>R}$.  Hence
Lemma~\ref{lem:inverse_reconstructed_derivation} identifies
$\Psi_s^{(R)}=-\mathcal I_s^{(R)}(\Phi_{>R})$ with the generator
interaction of the automorphic transport in
Eq.~\eqref{eq:infinite_cocycle_generator}; the norm estimate in
Lemma~\ref{lem:uniform_infinite_inverse} therefore applies to that
generator, rather than merely to an interaction with the same local
decomposition.

Taking $\nu=0$ gives $q_0=5d+5$ and $a_0=17d+12$.  Thus
Eq.~\eqref{eq:appendix_inverse_bound}, with
$\Theta=\Phi_{>R}$, gives Eq.~\eqref{eq:uniform_inverse_liouvillian}
with
\begin{equation}
  \eta_0=5d+5.
\end{equation}
For every $m>0$,
\begin{equation}
  \snorm{\Phi_{>R}}_{F_{q_0}}
  \le
  (1+R)^{-m}\snorm{\Phi}_{F_{q_0+m}},
\end{equation}
which is the quantitative input used in the proof of
Theorem~\ref{thm:main_infinite_direct}.

\section{An exponent-sharp example without translation invariance}
\label{app:sharpness}

We give a simple non-translation-invariant model showing that no exponent
better than $p-d$ can hold uniformly over the full class covered by
Eq.~\eqref{eq:two_body_final}.  The example does not address optimality
under additional structural assumptions, such as translation invariance.
Place a distinguished spin at the origin and one spin at every $x\ne0$.
Set $\rho(x):=\dist(0,x)$ and define an interaction $\Phi^\sharp$ by
\begin{align}
  \Phi^\sharp(\{0\})&=-hX_0,
  &\Phi^\sharp(\{x\})&=-BZ_x \quad (x\ne0),
  \nonumber\\
  \Phi^\sharp(\{0,x\})&=-\lambda J_xZ_0Z_x,
  &&x\ne0,
  \nonumber\\
  J_x&=(1+\rho(x))^{-p},
  &&x\ne0,
  \label{eq:sharp_model}
\end{align}
with all other terms zero, where $h,\lambda>0$ and
$B>\lambda\sup_{x\ne0}J_x$.  Its range-$R$ truncation is
$\Phi^\sharp_{\le R}$ in the notation of
Section~\ref{subsec:setup}.

For $p>d$ and $R\in(1,\infty]$, set
$S_R:=\sum_{0<\rho(x)\le R}J_x$ and let $\omega_R$ be the product state
with every outer spin in the $Z_x=+1$ eigenstate and the central spin in
the ground state of $-hX_0-\lambda S_RZ_0$.  Set
$\omega:=\omega_\infty$.  For finite $R$, $\omega_R$ is a ground state
of $\Phi^\sharp_{\le R}$, while $\omega$ is a ground state of
$\Phi^\sharp$.

In every finite-volume restriction containing the origin, all outer
spins are polarized in their $Z_x=+1$ states.  Indeed, changing an outer
eigenvalue from $-1$ to $+1$ lowers the energy by at least
$2(B-\lambda J_x)$.  The ground state is therefore unique, with gap at
least $\min\{2h,2(B-\lambda\sup_{x\ne0}J_x)\}>0$.  This bound is
uniform in the volume and in $R$ and also holds along the direct and
shell paths, because every interpolated bond strength lies between $0$
and $\lambda J_x$.  Moreover,
\begin{equation}
  \omega_R(Z_0)
  =
  \frac{\lambda S_R}{\sqrt{h^2+\lambda^2S_R^2}}.
\end{equation}
The derivative of the displayed function with respect to $S_R$ is
strictly positive at $S_\infty$.  Lattice-shell counting therefore gives
\begin{equation}
  \abs{\omega(Z_0)-\omega_R(Z_0)}
  \asymp
  S_\infty-S_R
  \asymp
  R^{-(p-d)}.
\end{equation}
Furthermore, $\snorm{\Phi^\sharp}_{F_\eta}<\infty$ whenever
$\eta<p-d$.  Thus, if $p>2d$, one may choose $d<\eta<p-d$, so the
example lies in the class allowed by our hypotheses and attains the
exponent $p-d$.  Sharpness for translation-invariant or otherwise
restricted subclasses is not claimed.

\section{Explanation of fast decay of local observable error in massive fermion model}\label{app:massive_gaussian_rate}
The rapid convergence is not a consequence of Gaussianity alone; it
comes from the Fourier-space response of the occupied-band projector.
For the real, even hopping used here,
$t_{-r}=t_r\in\mathbb R$, so $q_R(k)$ is real.  Set
\[
  \begin{split}
  F_m(q) &:=
  \frac12\left(1-\frac{m}{\sqrt{m^2+q^2}}\right),
  \\
  \tau_R(k)
  &:=
  q_\infty(k)-q_R(k)
  =
  \sum_{|r|>R}t_r e^{\ii k\cdot r},
  \end{split}
\]
and
\[
  g_m(k)
  :=
  F_m'\bigl(q_\infty(k)\bigr)
  =
  \frac{m q_\infty(k)}
       {2[m^2+q_\infty(k)^2]^{3/2}}.
\]
Because $m\ne0$, $F_m''$ is bounded on the real line.  Taylor's
theorem and $q_R=q_\infty-\tau_R$ therefore give
\begin{equation}
\begin{split}
  F_m(q_R(k))-F_m(q_\infty(k))
  &=
  -g_m(k)\tau_R(k)+\mathcal E_R(k),
  \\
  |\mathcal E_R(k)|
  &\le C_m|\tau_R(k)|^2,
  \label{eq:massive_projector_taylor}
  \end{split}
\end{equation}
where $C_m$ is independent of $R$ and $k$.

Let \(Q\) be the Laurent operator
\[
(Q\psi)_x=\sum_{r}t_r\psi_{x-r},
\]
whose Bloch symbol is \(q_\infty(k)\), and define
\[
f(z)=\frac{mz}{2(m^2+z^2)^{3/2}}.
\]
Then \(g_m(k)=f(q_\infty(k))\) is the Bloch symbol of \(f(Q)\).
Since \(Q\) is bounded and self-adjoint and \(m\neq0\), one may choose
the branch of \((m^2+z^2)^{-3/2}\) that is holomorphic on a neighborhood
of \(\sigma(Q)\).
Lemma~\ref{lem:jaffard_functional_calculus} therefore gives
\[
|[f(Q)]_{xy}|
\le C(1+\operatorname{dist}(x,y))^{-p}.
\]
Because \([f(Q)]_{xy}=\widehat g_m(x-y)\), we conclude that
\begin{equation}
|\widehat g_m(r)|\le C(1+|r|)^{-p}.
\label{eq:massive_response_coefficient_decay}
\end{equation}
Integrating Eq.~\eqref{eq:massive_projector_taylor} over the
Brillouin zone gives
\begin{equation}
\begin{split}
  \Delta\langle n_{a,x}\rangle_R
  &=
  -\sum_{|r|>R}t_r\,\widehat g_m(-r)
  +\mathcal R_R,
  \\
  |\mathcal R_R|
  &\le
  C_m
  \int_{\mathrm{BZ}}
  \frac{\dd^d k}{(2\pi)^d}
  |\tau_R(k)|^2
  =
  C_m\sum_{|r|>R}|t_r|^2,
\end{split}
\label{eq:massive_density_response}
\end{equation}
where the last identity is Parseval's identity.  Consequently,
\begin{align}
  \left|
    \sum_{|r|>R}t_r\,\widehat g_m(-r)
  \right|
  &\le
  C\sum_{|r|>R}(1+|r|)^{-2p},
  \\
  |\mathcal R_R|
  &\le
  C\sum_{|r|>R}(1+|r|)^{-2p},
\end{align}
and hence
\begin{equation}
  \left|
    \Delta\langle n_{a,x}\rangle_R
  \right|
  \le
  C R^{-(2p-d)}.
  \label{eq:massive_density_upper_bound}
\end{equation}

Multiplying Eq.~\eqref{eq:massive_projector_taylor} by
$e^{\ii k\cdot\ell}$ and integrating over the Brillouin zone gives the
corresponding estimate for every fixed Fourier coefficient of
$[P_R(k)]_{aa}$.  Since $\abs{e^{\ii k\cdot\ell}}=1$, its remainder is
again bounded by
\[
  C_m\int_{\mathrm{BZ}}
  \frac{\dd^d k}{(2\pi)^d}|\tau_R(k)|^2
  =
  C_m\sum_{|r|>R}|t_r|^2.
\]
On the other hand,
\[
\begin{aligned}\left|\sum_{|r|>R}t_r\,\widehat g_m(-r-\ell)\right|&\le C_\ell\sum_{|r|>R}(1+|r|)^{-p}(1+|r+\ell|)^{-p}\\&\le C_\ell'\sum_{|r|>R}(1+|r|)^{-2p}\\&=O\!\left(R^{-(2p-d)}\right).\end{aligned}
\]
Thus, for every fixed
$\ell\ne0$, we have
\[
\begin{split}
  \left|
    c_R(\ell)-c_\infty(\ell)
  \right|
  &=
  O\!\left(R^{-(2p-d)}\right),
  \\
  c_R(\ell)
  &:=
  \int_{\mathrm{BZ}}
  \frac{\dd^d k}{(2\pi)^d}
  e^{\ii k\cdot\ell}[P_R(k)]_{aa}.
  \end{split}
\]
Combining this estimate with
Eq.~\eqref{eq:gaussian_density_pair_observable} yields
\begin{equation}
  \left|
    \Delta\langle
      n_{a,x}n_{a,x+\ell}
    \rangle_R
  \right|
  =
  O\!\left(R^{-(2p-d)}\right)
  \label{eq:massive_density_pair_upper_bound}
\end{equation}
for every fixed $\ell\ne0$, including nearest neighbors.

Both the linear term and the integrated Taylor remainder are bounded
at the scale $R^{-(2p-d)}$.  These estimates therefore establish an
upper bound, not an asymptotic equivalence, and do not identify the
linear term as the leading contribution.  For $d=1$ and $p=2.25$, the
upper-bound exponent is $2p-d=3.5$, compatible with the faster decay
observed over the numerical fitting window.

For the parameters used here,
\begin{equation}
  q_R(k)\ge
  q_*:=1-2\sum_{r\ge1}(1+r)^{-p}>0
  \label{eq:massive_symbol_lower_bound}
\end{equation}
uniformly in $R$ and along the direct and shell interpolations.  At
$m=0$, this implies
$P_R(k)=(\mathbf1-\sigma_x)/2$ for every $R$, so all truncation errors
vanish exactly.  With $q_*$ fixed, the response coefficients above are
$O(m)$ as $m\to0$.  Within this family, a slower crossover can arise
when the off-diagonal symbol is also tuned so that
$\min_k|q_\infty(k)|$ becomes small.  The constants in the
Fourier-decay estimates can then become large, and the projector may
develop sharp momentum dependence near the incipient band touching.
This is the regime targeted by the two near-critical benchmarks.

\begin{lemma}[Polynomial decay under holomorphic functional calculus]
\label{lem:jaffard_functional_calculus}
Let $p>d$, and let $Q$ be a bounded self-adjoint Laurent operator on
$\ell^2(\Gamma)$, where $\Gamma=\mathbb Z^d$.  Write
\[
Q_{xy}:=\langle\delta_x,Q\delta_y\rangle,
\]
where $\delta_x(z)=\mathbf1_{\{z=x\}}$ denotes the canonical basis
vector of $\ell^2(\Gamma)$ localized at $x$, and
suppose that
\begin{equation}
  \abs{Q_{xy}}
  \le
  C_Q\bigl(1+\dist(x,y)\bigr)^{-p},
  \qquad x,y\in\Gamma.
  \label{eq:jaffard_decay}
\end{equation}
If $f$ is holomorphic on an open neighborhood of the spectrum $\sigma(Q)$, then
$f(Q)$ is again a Laurent operator and there exists
$C_{Q,f}<\infty$ such that
\begin{equation}
  \abs{[f(Q)]_{xy}}
  \le
  C_{Q,f}\bigl(1+\dist(x,y)\bigr)^{-p},
  \qquad x,y\in\Gamma.
  \label{eq:jaffard_functional_decay}
\end{equation}
\end{lemma}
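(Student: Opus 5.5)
The plan is to use Jaffard's theorem on the inverse-closedness of the polynomially off-diagonally decaying matrix algebra, then pass from inverses to holomorphic functions via the Riesz--Dunford integral.

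\emph{Step 1: the algebra.} For $p>d$, let $\mathcal J_p$ be the set of matrices $A=(A_{xy})_{x,y\in\Gamma}$ with
\[
  \snorm{A}_{\mathcal J_p}:=\sup_{x,y}(1+\dist(x,y))^{p}\abs{A_{xy}}<\infty .
\]
Using $1+\dist(x,y)\le(1+\dist(x,z))+(1+\dist(z,y))$, we get the pointwise bound $(1+\dist(x,y))^p\le 2^{p-1}[(1+\dist(x,z))^p+(1+\dist(z,y))^p]$. Combined with $\sup_x\sum_z(1+\dist(x,z))^{-p}<\infty$ (which is where $p>d$ is used), this gives $\snorm{AB}_{\mathcal J_p}\le C\snorm{A}_{\mathcal J_p}\snorm{B}_{\mathcal J_p}$. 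The same row/column summability and Schur's test show that every element of $\mathcal J_p$ is bounded on $\ell^2(\Gamma)$. After renorming, $\mathcal J_p$ is therefore a Banach algebra continuously embedded in $\mathcal B(\ell^2)$.

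\emph{Step 2: inverse-closedness.} Jaffard's theorem states that if $A\in\mathcal J_p$ is invertible in $\mathcal B(\ell^2)$, then $A^{-1}\in\mathcal J_p$. I would invoke this from the literature rather than reprove it. If a proof were needed, the standard route is the Brandenburg/Gröchenig--Leinert trick. One first reduces to $A^*A$, so that $B=\mathbf 1-A^*A/\snorm{A}^2$ has $\ell^2$ spectral radius $<1$. The key estimate is $\snorm{B^2}_{\mathcal J_p}\le C\snorm{B}_{\mathcal J_p}^{1+\theta}\snorm{B}_{\mathcal B(\ell^2)}^{1-\theta}$ for some $\theta<1$, obtained by splitting the convolution sum into near and far parts. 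This estimate shows that the spectral radii in $\mathcal J_p$ and in $\mathcal B(\ell^2)$ coincide, so the Neumann series converges in $\mathcal J_p$. This step is the main technical obstacle, and it is why the lemma is essentially an import.

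\emph{Step 3: spectra and the resolvent.} Inverse-closedness implies $\sigma_{\mathcal J_p}(Q)=\sigma(Q)$. The resolvent $z\mapsto(z-Q)^{-1}$ is then $\mathcal J_p$-valued on $\mathbb C\setminus\sigma(Q)$. It is continuous, indeed holomorphic, in the $\mathcal J_p$ norm, because inversion is continuous in any Banach algebra.

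\emph{Step 4: functional calculus.} Choose a contour $\mathcal C$ in the domain of holomorphy of $f$ that surrounds the compact set $\sigma(Q)\subset\mathbb R$. The integral
\[
  f(Q)=\frac{1}{2\pi\ii}\oint_{\mathcal C}f(z)(z-Q)^{-1}\,\dd z
\]
converges in $\mathcal J_p$, since $\sup_{z\in\mathcal C}\snorm{(z-Q)^{-1}}_{\mathcal J_p}<\infty$ by compactness and continuity. Because the embedding into $\mathcal B(\ell^2)$ is continuous, this agrees with the usual holomorphic functional calculus, giving $\abs{[f(Q)]_{xy}}\le C_{Q,f}(1+\dist(x,y))^{-p}$.

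\emph{Step 5: Laurent structure.} Translation invariance is preserved because $f(Q)$ commutes with every lattice translation $T_a$: each $T_a$ commutes with $Q$, hence with every resolvent, and therefore with the contour integral.
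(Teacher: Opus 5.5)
Your proposal is correct and follows the paper's proof essentially step for step: $\mathcal J_p$ as a (renormed) Banach algebra for $p>d$, Jaffard's inverse-closedness theorem imported from the literature, uniform $\mathcal J_p$-bounds on the resolvent along a compact contour, the Riesz--Dunford integral, and commutation with lattice translations. Your extra details fill in points the paper only asserts. These are the explicit product estimate, the Schur-test embedding into $\mathcal B(\ell^2)$, the identification of the two functional calculi through that continuous embedding, and the Brandenburg-type sketch of inverse-closedness.
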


\begin{proof}
Let $\mathcal J_p$ be the Jaffard algebra with norm
\begin{equation}
  \snorm{A}_{\mathcal J_p}
  :=
  \sup_{x,y\in\Gamma}
  \bigl(1+\dist(x,y)\bigr)^p\abs{A_{xy}}.
\end{equation}
The hypothesis says that $Q\in\mathcal J_p$. 
For \(p>d\), this space is complete and has continuous multiplication; after an equivalent renorming, it is a Banach algebra. By Jaffard's inverse-closedness theorem~\cite{Jaffard_1990}, it is spectrally invariant in \(\mathcal B(\ell^2(\Gamma))\), and is therefore closed under holomorphic functional calculus.

Therefore,
\[
  (z\mathbf1-Q)^{-1}\in\mathcal J_p
  \qquad\text{for every }z\notin\sigma(Q).
\]
Choose a finite union $\mathcal C$ of admissible contours contained in
the holomorphy domain of $f$ and enclosing $\sigma(Q)$.  Continuity of
inversion in $\mathcal J_p$ and compactness of $\mathcal C$ imply that
the resolvents are uniformly bounded in $\mathcal J_p$ along
$\mathcal C$.  Hence the holomorphic functional-calculus formula
\begin{equation}
  f(Q)=
  \frac{1}{2\pi\ii}
  \int_{\mathcal C}
  f(z)(z\mathbf1-Q)^{-1}\,\dd z
\end{equation}
defines an element of $\mathcal J_p$, which proves
Eq.~\eqref{eq:jaffard_functional_decay}.  Since $Q$ commutes with
lattice translations, so does $f(Q)$; thus $f(Q)$ is again Laurent.
\end{proof}

\end{document}